\newcommand{\realrange}[2]{\left[#1, #2\right]}
\newcommand{\unitrange}[2]{\realrange{0}{1}}
\newcommand{\llabel}[1]{\label{\labelprefix:#1}}
\newcommand{\labelprefix}{} 
\newcommand{\discussionsize}{\small}
\newcommand{\frage}[1]{}
\newenvironment{code}{\noindent
\begin{tabbing}%
\hspace{2em}\=\hspace{2em}\=\hspace{2em}\=\hspace{2em}\=\hspace{2em}\=%
\hspace{2em}\=\hspace{2em}\=\hspace{2em}\=\hspace{2em}\=\hspace{2em}\=%
\kill}{\end{tabbing}}
\newcommand{\labelcommand}{}
\newcommand{\captiontext}{}
\newsavebox{\codeparam}
\newcounter{lineNumber}
\newenvironment{disscodepos}[3]{%
\renewcommand{\labelcommand}{#2}%
\renewcommand{\captiontext}{#3}%
\sbox{\codeparam}{\parbox{\textwidth}{#3}}%
\begin{figure}[#1]\begin{center}\begin{code}\setcounter{lineNumber}{1}}{%
\end{code}\end{center}\caption{\llabel{\labelcommand}\captiontext}\end{figure}}
\newdimen\endofsize\endofsize=0.5em
\def\endofbeweis{~\quad\hglue\hsize minus\hsize
                 \hbox{\vrule height \endofsize width
\endofsize}\par}
\newcommand\textproc{\textsc}
\newcommand{\etal}{et~al.}
\definecolor {infocolor} {rgb} {0.6,0.6,0.6}
\newcommand{\mc}{\multicolumn}
\newif\ifFull
\newif\ifDoubleBlind
\newif\ifAppendix
\let\doendproof\endproof
\renewcommand\endproof{~\hfill$\boxtimes$\doendproof}
\newcommand{\niceremark}[3]{\textcolor{red}{\textsc{#1 #2: }}\textcolor{blue}{\textsf{#3}}}
\newcommand{\conFast}{$C_{\text{fast}}$}
\newcommand{\conStrong}{$C_{\text{strong}}$}
\newcommand{\basicSparse}{\textproc{Basic-Sparse}}
\newcommand{\basicDense}{\textproc{Basic-Dense}}
\newcommand{\nonIncreasing}{\textproc{NonIncreasing}}
\newcommand{\cyclicStrong}{\textproc{Cyclic-Strong}}
\newcommand{\cyclicFast}{\textproc{Cyclic-Fast}}
\newcommand{\hils}{\textproc{HILS}}
\newcommand{\dynWVCOne}{\textproc{DynWVC1}}
\newcommand{\dynWVCTwo}{\textproc{DynWVC2}}
\newcommand{\dynWVC}{\textproc{DynWVC}}
\definecolor{lightergray}{rgb}{0.86, 0.86, 0.86}
\definecolor{darkred}{rgb}{0.5, 0.0, 0.0}
\definecolor{darkgreen}{rgb}{0.0, 0.5, 0.0}
\newif\ifRemoveComments
\renewcommand{\niceremark}[3]{ }
\def\BState{\STATE\hskip-\ALG@thistlm}
\newcommand{\manuallabel}[1]{\def\@currentlabel{#1}}
\newtheorem{reduction}{Reduction}
\patchcmd{\thebibliography}{\list}{\fontsize{0.98em}{0.9\baselineskip}\selectfont\list}{}{}
\newcommand{\mytitle}{Boosting Data Reduction for the Maximum Weight Independent Set Problem Using Increasing Transformations}
\begin{document}
\title{\mytitle\ifDoubleBlind\else\thanks{Partially supported by DFG grants SA 933/10-2
    and by National Research, Development and Innovation Office
    NKFIH Fund No. SNN-135643.
}\fi{}}

\ifDoubleBlind
\author{(Anonymous)}
\else
\author{Alexander Gellner\thanks{Institute for Theoretical Informatics, Karlsruhe Institute of Technology, Karlsruhe, Germany.}, Sebastian Lamm\footnotemark[2], Christian Schulz\thanks{Faculty of Computer Science, University of Vienna, Vienna, Austria.}, Darren Strash\thanks{Department of Computer Science, Hamilton College, Clinton, NY, USA.}, Bogd\'an Zav\'alnij\thanks{Department of Combinatorics and Discrete Mathematics, Alfr\'ed R\'enyi Institute of Mathematics, Budapest, Hungary.}}
\fi{} 
\date{}
\maketitle

\begin{abstract}
Given a vertex-weighted graph, the maximum weight independent set problem asks for a pair-wise non-adjacent set of vertices such that the sum of their weights is maximum. The branch-and-reduce paradigm is the de facto standard approach to solve the problem to optimality in practice. In this paradigm, data reduction rules are applied to \emph{decrease} the problem size. These data reduction rules ensure that given an optimum solution on the new (smaller) input, one can quickly construct an optimum solution on the original input.

We introduce new generalized data reduction and transformation rules for the problem. A key feature of our work is that some transformation rules can \emph{increase} the size of the input. Surprisingly, these so-called
\emph{increasing transformations} can simplify
the problem and also open up the reduction space to yield even smaller
irreducible graphs later throughout the algorithm. In experiments,
our algorithm computes significantly smaller irreducible
graphs on all except one instance, solves more instances to optimality than previously
possible, is up to two orders of magnitude faster than the best state-of-the-art solver,
and finds higher-quality solutions than heuristic solvers DynWVC and HILS on many instances. While the \emph{increasing} transformations are only efficient enough for preprocessing at this time, we see this as a critical initial step towards a new \emph{branch-and-transform}~paradigm.
\end{abstract}
\newpage
\pagestyle{plain}

\section{Introduction}
\label{sec:intro}
Given a graph $G=(V,E)$ and a weight function $w~:~V~\rightarrow~\mathbb{R}^+$ that assigns positive weights to vertices, the goal of the \emph{maximum weight independent set} (MWIS) problem is to compute a set of pairwise non-adjacent vertices $I \subseteq V$, whose total weight is maximum.
The problem is $\mathcal{NP}$-hard~\cite{garey1974}, and has a wide-range of practical applications in areas such as map labeling~\cite{gemsa2014dynamiclabel,barth-2016}, coding theory~\cite{nurmela1997new, brouwer1990new}, combinatorial auctions~\cite{wu2015solving}, alignment of biological networks~\cite{ay2011submap}, workload scheduling for energy-efficient scheduling of disks~\cite{chou2011energy}, computer vision~\cite{ma2012maximum}, wireless communication~\cite{xu2010maximum}, and protein structure prediction~\cite{mascia2010predicting}.

In practice, graphs with hundreds of vertices can be solved with
traditional branch-and-bound
methods~\cite{balas1986finding,babel1994fast,warren2006combinatorial,butenko-trukhanov}. However,
until recently, even for medium-sized synthetic instances, the maximum
weight independent set problem remained largely infeasible. In stark
contrast, the unweighted variants can be quickly solved on 
\emph{large} real-world instances---even with millions of
vertices---in practice, by using
\emph{kernelization}~\cite{strash-power-2016,chang2017,hespe2018scalable}
or the \emph{branch-and-reduce} paradigm~\cite{akiba-tcs-2016}.
Kernelization iteratively applies \emph{reduction rules}, thereby
reducing the size of the input graph until an irreducible graph is
obtained.  This irreducible graph is usually called a \emph{kernel} if it has
size bounded by a function of a specified input
parameter. 
A solution is then calculated on this irreducible graph and extended
to a solution of the original instance by undoing the reduction rules.
Branch-and-reduce takes this process to the extreme: reduction rules
are exhaustively applied before branching in a branch-and-bound
algorithm. Branching then changes the remaining instance, opening up
the reduction space, allowing further reduction rules to be applied
before the next branching step.  For those instances that can't be
solved exactly, high-quality (and often exact) solutions can be found
by combining kernelization with either local
search~\cite{dahlum2016,chang2017} or evolutionary
algorithms~\cite{redumis-2017}.

Recently, Lamm~\etal~\cite{lamm-2019} discovered new reduction rules for the weighted problem that, in practice, are often able to calculate very small irreducible graphs on a wide number of instances.
Surprisingly, the branch-and-reduce algorithm is able to solve a large number of instances up to two orders of magnitude faster than existing (inexact) local search algorithms.
However, the algorithm still fails to compute small reduced graphs on some instances~\cite{lamm-2019}.
This is mainly due to their very specialized nature that searches the graph for specific subgraphs that can be removed.

\paragraph{Our Results.}
Unlike narrowly-defined data reduction rules, we engineer new generalized data
reduction and transformation rules. The transformation rules, in
contrast to data reduction rules, can also \emph{increase} the size of
the input. Surprisingly, this can simplify the problem and also open
up the reduction space to yield even smaller irreducible graphs later
throughout the algorithm~\cite{alexe2003struction}. 

More precisely, we engineer practically efficient variants of the stability number data reduction rule (called struction).
To the best of our knowledge, these are the first practical
implementations of the weighted struction rule that are able to handle a large variety of real-world~instances. 
Our algorithm exploits the full potential of the struction rule by also allowing the application of structions that may increase the number of vertices.
These new rules are integrated in the state-of-the-art branch-and-reduce algorithm by Lamm~\etal~\cite{lamm-2019} -- with and without the property that a data transformation rule can increase the size of the input.
Extensive experiments indicate that our algorithm calculates significantly smaller irreducible graphs than current state-of-the-art approach and, preprocessing with our transformations enables branch-and-reduce to solve many instances that were previously infeasible to solve to optimality. 

\section{Related Work}
\label{sec:related_work}
In the following we will first give a short overview of existing work on both exact and heuristic procedures, especially outlining how kernelization and preprocessing methods are used in state-of-the-art algorithms.
Furthermore, we present a detailed overview of the struction, as this transformation is the focus of this work.

\subsection{Exact Methods.}
Exact algorithms usually compute optimal solutions by systematically exploring the solution space.
A frequently used paradigm in exact algorithms for combinatorial optimization problems is called \emph{branch-and-bound}~\cite{ostergaard2002fast,warren2006combinatorial}.
In case of the MWIS problem, these types of algorithms compute optimal solutions by case distinctions in which vertices are either included into the current solution or excluded from it, branching into two or more subproblems and resulting in a search tree.
Over the years, branch-and-bound methods have been improved by new branching schemes or better pruning methods using upper and lower bounds to exclude specific subtrees~\cite{balas1986finding, babel1994fast,li2017minimization}.
In particular, Warren and Hicks~\cite{warren2006combinatorial} proposed three branch-and-bound algorithms that combine the use of weighted clique covers and a branching scheme first introduced by Balas and Yu~\cite{balas1986finding}.
Their first approach extends the algorithm by Babel~\cite{babel1994fast} by using a more intricate data structures to improve its performance.
The second one is an adaptation of the algorithm of Balas and Yu, which uses a weighted clique heuristic that yields structurally similar results to the heuristic of Balas and Yu.
The last algorithm is a hybrid version that combines both algorithms and is able to compute optimal solutions on graphs with hundreds of vertices.

In recent years, reduction rules have frequently been added to branch-and-bound methods yielding so-called \emph{branch-and-reduce} algorithms~\cite{akiba-tcs-2016}. 
These algorithms are able to improve the worst-case runtime of branch-and-bound algorithms by applications of reduction rules to the current graph before each branching step.
For the unweighted case, a large number of branch-and-reduce
algorithms have been developed in the past. The currently best exact
solver \cite{hespe2020wegotyoucovered}, which won the PACE challenge
2019~\cite{hespe2020wegotyoucovered, bogdan-pace, peaty-pace}, uses a portfolio of branch-and-reduce/bound solvers for the
complementary  problems. 
However, for a long time, virtually no weighted reduction rules were known, which is why hardly any branch-and-reduce algorithms exist for the MWIS problem.

To the best of our knowledge, the first and only branch-and-reduce algorithm for the weighted case was recently presented by Lamm~\etal~\cite{lamm-2019}.
The authors first introduce two meta-reductions called neighborhood removal and neighborhood folding, from which they derive a new set of weighted reduction rules.
On this foundation a branch-and-reduce algorithm is developed using pruning with weighted clique covers similar to the approach by Warren and Hicks~\cite{warren2006combinatorial} for upper bounds and an adapted version of the ARW local search~\cite{andrade-2012} for lower bounds.
The experimental evaluation shows that their algorithm can solve a large set of real-world instances and outperform state-of-the-art algorithms.

Finally, there are exact procedures which are either based on other extension of the branch-and-bound paradigm, e.g.~\cite{rebennack2011branch,warrier2005branch,warrier2007branch}, or on the reformulation into other $\mathcal{NP}$-complete problems, for which a variety of solvers already exist.
For instance, Xu~\etal~\cite{xu2016new} recently developed an algorithm called \texttt{SBMS}, which calculates an optimal solution for a given MWVC instance by solving a series of SAT instances.

\subsection{Heuristic Methods.}
A widely used heuristic approach is local search, which usually computes an initial solution and then tries to improve it by simple insertion, removal or swap operations. 
Although in theory local search generally offers no guarantees for the solution's quality, in practice they find high quality solutions significantly faster than exact procedures.

For unweighted graphs, the iterated local search (ARW) by Andrade~\etal~\cite{andrade-2012}, is a very successful heuristic.
It is based on so-called $(1,2)$-swaps which remove one vertex from the solution and add two new vertices to it, thus improving the current solution by one.
Their algorithm uses special data structures which find such a $(1,2)$-swap in linear time in the number of edges or prove that none exists.
Their algorithm is able to find (near-)optimal solutions for small to medium- sized instances in milliseconds, but struggles on massive instances with millions of vertices and edges.

The hybrid iterated local search (HILS) by Nogueira~\etal~\cite{hybrid-ils-2018} adapts the ARW algorithm for weighted graphs.
In addition to weighted $(1,2)$-swaps, it also uses $(\omega,1)$-swaps that add one vertex $v$ into the current solution and exclude its $\omega$ neighbors. 
These two types of neighborhoods are explored separately using variable neighborhood descent (VND).
In practice, their algorithm finds all known optimal solutions on well-known benchmark instances within milliseconds and outperforms other state-of-the-art local searches.

Two other local searches, DynWVC1 and DynWVC2, for the equivalent minimum weight vertex cover problem are presented by Cai~\etal~\cite{cai-dynwvc}.
Their algorithms extend the existing FastWVC heuristic~\cite{li2017efficient} by dynamic selection strategies for vertices to be removed from the current solution.
In practice, DynWVC1 outperforms previous MWVC heuristics on map labeling instances and large scale networks, and DynWVC2 provides further improvements on large scale networks but performs worse on map labeling instances.

Recently, Li~\etal~\cite{li2019numwvc} presented a local search algorithm for the minimum weight vertex cover (MWVC) problem, which is complementary to the MWIS problem. Their algorithm applies reduction rules during the construction phase of the initial solution.
Furthermore, they adapt the configuration checking approach~\cite{cai2011local} to the MWVC problem which is used to reduce cycling, i.e. returning to a solution that has been visited recently.
Finally, they develop a technique called self-adaptive-vertex-removing, which dynamically adjusts the number of removed vertices per iteration.
Experiments show that algorithm outperforms state-of-the-art approaches on both graphs of up to millions of nodes and real-world instances.

\subsection{Struction.}
Originally the struction (STability number RedUCTION) was introduced by Ebenegger~\etal~\cite{ebenegger1984pseudo} and was later improved by Alexe~\etal~\cite{alexe2003struction}.
This method is a graph transformation for unweighted graphs, that can be applied to an arbitrary vertex and reduces the stability number by exactly one.
Thus, by successive application of the struction, the independence number of a graph can be determined.
Ebenegger~\etal~also show that there is an equivalence between finding a maximum weight independent set and maximizing a pseudo Boolean function, i.e. a real-valued function with Boolean variables, which allows to derive the struction as a special case.
Finally, the authors present a generalization of the struction to weighted graphs.

On this basis, theoretical algorithms with polynomial time complexity for special graph classes have been developed~\cite{hammer1985stability, hammer1985struction, alexe2003struction}. 
These algorithms use additional reduction rules and a careful selection of vertices on which the struction is applied.

Hoke and Troyon~\cite{hoke1994struction} developed another form of the weighted struction, using the same equivalence found by Ebenegger~\etal~\cite{ebenegger1984pseudo}.
In particular, they derive the \emph{revised weighted struction}.
However, this type of struction can only be applied to claw-free graphs: graphs without an induced three-leaf star.
This transformation also removes a vertex~$v$ and its neighborhood, but is able to create fewer new vertices, since these are only created for pairs of non-adjacent neighbors whose combined weight is greater than the weight of $v$.

As far as we are aware, prior to this work, only few experiments with
struction variants exist and are limited to only small instances:
Ebenegger~\etal~and Alexe~\etal~evaluated the struction only on small
graphs with less than a hundred vertices for the unweighted
case~\cite{ebenegger1984pseudo,alexe2003struction}.  Furthermore, for
the weighted case, none of the previously proposed struction variants has been
evaluated so
far~\cite{ebenegger1984pseudo,hoke1994struction,alexe2003struction}. 

\section{Preliminaries}
\label{sec:prelim}
A graph $G = (V,E)$ consists of a \emph{vertex set} $V$ and an \emph{edge set} $E \subset V \times V$.
It is called \emph{undirected} if for each edge $(u,v) \in E$ the edge set also contains $(v,u)$.
We only consider undirected graphs without self loops, i.e.~$(v,v)~\not\in~E$, and therefore we denote edges by sets~$\{u,v\}$.
In addition, a graph is (vertex-)\emph{weighted} if a positive scalar weight $w(v)$ is assigned to each vertex $v \in V$.
The weight of a vertex set~$X~\subset~V$ is defined as~$w(X)=~\sum_{v~\in~X}~w(x)$. 

A graph $G'=(V',E')$ is a \emph{subgraph} of $G=(V,E)$ if
$V'~\subset~V$ and~$E'~\subset~E~\cap~(V'~\times~V')$
holds. 
Given a vertex set $U~\subset~V$ the \emph{induced subgraph} of $U$ is the graph $G'=(U,E')$ with~$E'~=~E~\cap~(U~\times~U)$ and is denoted by $G[U]$.
Two vertices $u,v$ are called \emph{adjacent} if $\{u,v\} \in E$.
The \emph{neighborhood} $N(v)$ of a vertex $v$ is the set of all vertices adjacent to $v$. 
$N[v]=N(v)\cup\{v\}$ is called the \emph{closed neighborhood} and $\overline{N}(v) = V \setminus N[v]$ the \emph{non-neighborhood} of $v$.
Finally, we denote the \emph{degree} of a vertex $v$ is $\delta(v) = |N(v)|$.

For a given graph $G=(V,E)$, a vertex set $I \subset V$ is an \emph{independent set} if all vertices $v \in I$ are pairwise not adjacent.
An independent set is called \emph{maximal} if it is not a subset of another independent set and \emph{maximum} if no other independent set has greater cardinality.
The independence number~$\alpha(G)~=~|I|$, sometimes also called stability number, of a graph~$G$ is the cardinality of a maximum independent set $I$.
Likewise, for a weighted graph $G$ an independent set $I$ has \emph{maximum weight}, if there is no independent set $I'$ with a weight $w(I')$ greater than $w(I)$.
The weighted independence number $\alpha_w(G) = w(I)$ of a weighted graph $G$ is defined as the weight of a maximum weight independent set $I$.
For a given weighted graph $G$, the \emph{maximum weight independent set problem} (MWIS) seeks a maximum weight independent set.

\subsection{Original Weighted Struction.}
\label{subsec:struction}
We now present the original weighted struction introduced by Ebenegger et al.~\cite{ebenegger1984pseudo}, on which we base our struction variants. In general, we apply a struction to a \emph{center vertex}~$v$ and denote its neighborhood by~$N(v)~=~\{1,2,...,p\}$.
All variants we use remove $v$ from the graph $G$, producing a new
graph $G'$, and reduce the weighted independence number of the graph $G$ by its weight, i.e.~$\alpha_w(G)~=~\alpha_w(G')~+~w(v)$.

For ease of presentation, we first introduce a method called \emph{layering}.
Layering describes the partitioning of a given set $M$ that contains vertices $v_{x,y}$, that are indexed by two parameters~$x~\in~X,\ y~\in~Y$.
The sets $X,\ Y$ either contain vertices or vertex sets.
For $k~\in~X$ a layer $L_k$ contains all vertices having $k$ as first parameter, i.e.~$L_k~=~\{v_{x,y}~\in~M~:~x~=~k\}$.
Conversely, the \emph{layer} of a vertex $v_{x,y}$ is~$L(v_{x,y})=k$. 

In order to apply the original struction by
Ebenegger~\etal~\cite{ebenegger1984pseudo}, the center vertex $v$ must
have minimum weight among its closed neighborhood.
The struction is then applied by removing $v$ and creating new vertices for each pair $i,j$ of non-adjacent vertices in $N(v)$. To guarantee that we can obtain an MWIS $I$ of $G$ using an MWIS $I'$ of $G'$ with $w(I)~=~w(I')~+~w(v)$, we also insert edges between the new and original vertices.
An example of this type of struction is given in Figure~\ref{fig:original_struction}\ifAppendix{}$^{\bigstar,}$\footnote{Figures marked with $\bigstar$ are in Appendix~\ref{appendix:figure}, and will be included in the final submission.}.\else.\fi{}

In the following, we provide a formal definition of the original struction:
\begin{reduction}[Original Struction]
Let~$v~\in~V$ be a vertex with minimum weight~$w(v)$ among its closed neighborhood.
Transform the graph as follows:
\begin{itemize}
\item Remove~$v$, lower weight of each neighbor by~$w(v)$
\item For each pair of non-adjacent neighbors~$x<y$, create a new vertex~$v_{x,y}$ with weight~$w(v_{x,y})~:=~w(v)$
\item 
Insert edges between~$v_{q,x},v_{r,y}$ if either~$x$ and~$y$ are adjacent or $L_q \neq L_r$
\item 
Each vertex~$v_{x,y}$ is also connected to vertex ~$w~\in~V\setminus\{v\}$ adjacent to either~$x$ or~$y$
\end{itemize}
\end{reduction}

For a MWIS~$I'$ of~$G'$ we obtain a MWIS~$I$ of~$G$ as follows: If~$I'~\cap~N(v)~=~\emptyset$ applies, we have~$I~=~I'~\cup~\{v\}$, otherwise we remove the new vertices, i.e.~$I~=~I'~\cap~V$.
Furthermore we have~$\alpha_w(G)~=~\alpha_w(G')~+~w(v)$.

\ifAppendix
\else
\begin{figure*}[ht]
	\centering
	\begin{subfigure}[b]{0.25\textwidth}
		\centering
		\includegraphics[width=\textwidth]{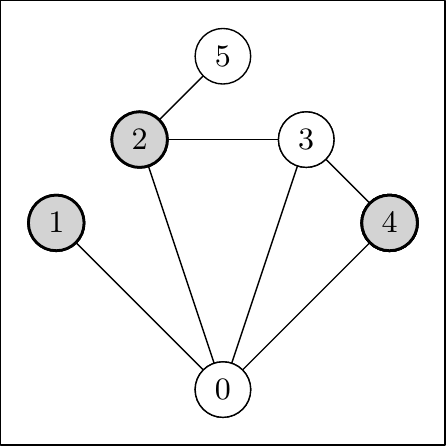}
		\caption{Original Graph}
	\end{subfigure}
	\begin{subfigure}[b]{0.25\textwidth}
		\centering
		\includegraphics[width=\textwidth]{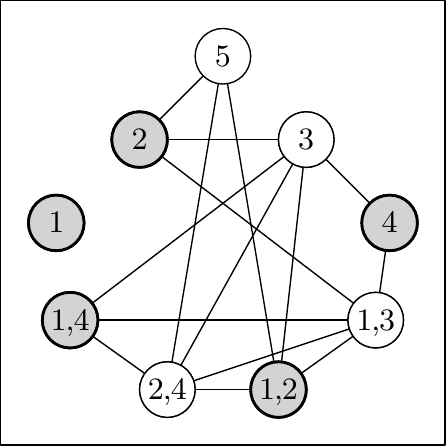}
		\caption{Original Struction}
		\label{fig:original_struction}
	\end{subfigure}
	\begin{subfigure}[b]{0.25\textwidth}
		\centering
		\includegraphics[width=\textwidth]{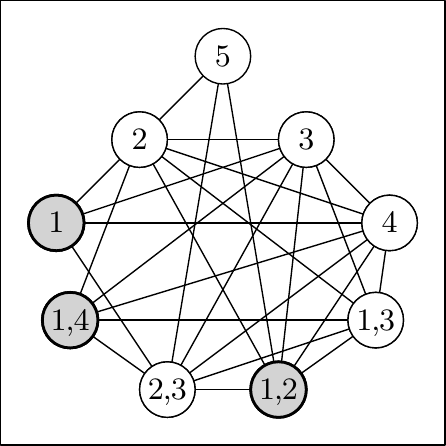}
		\caption{Modified Struction}
		\label{fig:modified_struction}
	\end{subfigure}
	\caption{Application of original struction and modified struction.
	Vertices representing the same independent set in the different graphs are highlighted in gray.}
\end{figure*}
\fi{} 

\section{New Weighted Struction Variants}
We now introduce three new struction variants: 
First, we deal with the fact that using the original weighted struction, an MWIS in the transformed graph might consist of more vertices than in the original graph.
We do so by using different weight assignments for the new vertices and inserting additional edges.
Second, we present a generalization of the revised weighted struction that can be applied to vertices of general graphs without the need to fulfill specific weight constraints.
However, this variant creates new vertices for independent sets in the neighborhood of a vertex $v$ whose weight is greater than $w(v)$.
Finally, we alleviate this issue by creating new vertices only for a specific subset of these independent sets in the third variant.

\subsection{Modified Weighted Struction.}
One caveat of the original struction is that the \emph{number of vertices} that are part of an MWIS in the transformed graph is generally larger than in the original graph.
The modified struction tries to alleviate this issue by ensuring that the number of vertices of an MWIS stays the same in both graphs. 
This is done by using a different weight assignment and inserting additional edges.
In particular, the newly created vertex for each pair of non-adjacent neighbors~$x,y \in N(v)$ with~$x~<~y$ is now assigned weight~$w(v_{x,y})~=~w(y)$ (instead of $w(v)$).
Furthermore, in addition to the edges created in the original struction, each neighbor~$k~\in~N(v)$ is connected to each vertex~$v_{x,y}$ belonging to a different layer than~$k$.
Finally, $N(v)$ is extended to a clique by adding edges between vertices~$x,y~\in~N(v)$.
For an MWIS~$I'$ of~$G'$ we now obtain an MWIS~$I$ of~$G$ as follows: If~$I'~\cap~N(v)~=~\emptyset$ applies, we have~$I~=~I'~\cup~\{v\}$, otherwise we obtain $I$ by replacing each new vertex~$v_{x,y}~\in~I'$ with the original vertex~$v_y$, i.e.~$I~=~(I'~\cap~V)~\cup~\{v_y~|~v_{x,y}~\in~I'~\setminus~V\}$.
As for the original struction, we have~$\alpha_w(G)~=~\alpha_w(G')~+~w(v)$.
An example of the modified struction is given in Figure~\ref{fig:modified_struction}\ifAppendix{}$^\bigstar$.\else.\fi{} 
A proof of correctness can be found in Appendix~\ref{appendix:proofs}.

\subsection{Extended Weighted Struction.}
The extended struction removes the weight restriction for the vertex $v$ in the former variants.
Unlike the previous two structions, this variant considers independent sets of arbitrary size in the neighborhood $N(v)$.
In fact, we create new vertices for each independent set in $G[N(v)]$ if its weight is greater than $v$.
Note that this can result in up to $\mathcal{O}(2^{\delta(v)})$ new vertices.
An example application of the extended struction can be found in Figure~\ref{fig:extended_struction}\ifAppendix{}$^{\bigstar}$\fi{}.

\begin{reduction}[Extended Weighted Struction]
Let~$v~\in~V$ be an arbitrary vertex and~$C$ the set of all independent sets~$c$ in~$G[N(v)]$ with~$w(c)~>~w(v)$.
We derive the transformed graph~$G'$ as follows:
First, remove~$v$ together with its neighborhood and create a new vertex~$v_c$ with weight~$w(v_c)~=~w(c)~-~w(v)$ for each independent set~$c~\in~C$.
Each vertex~$v_c$ is then connected to each
non-neighbor~$w~\in~\overline{N}(v)$ adjacent to at least one
vertex in $c$. 
Finally, the vertices~$v_c$ are connected with each other, forming a clique.
For an MWIS~$I'$ of~$G'$ we obtain an MWIS~$I$ of~$G$ as follows:
If $I'~\setminus~V~=~\{v_c\}$ replace~$v_c$ with the vertices of the corresponding independent set $c$, i.e.~$I~=~(I'~\cap~V)~\cup~c$, otherwise~$I~=~I'~\cup~\{v\}$.
Furthermore we have~$\alpha_w(G)~=~\alpha_w(G')~+~w(v)$.
\end{reduction}

A proof of correctness is in Appendix~\ref{appendix:proofs}.

\ifAppendix
\else
\begin{figure*}[ht]
	\centering
	\begin{subfigure}[t]{0.25\textwidth}
		\centering
		\includegraphics[width=\textwidth]{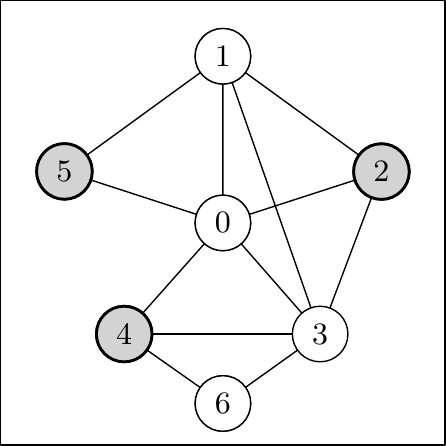}
		\caption{Original Graph}
	\end{subfigure}
	\begin{subfigure}[t]{0.25\textwidth}
		\centering
		\includegraphics[width=\textwidth]{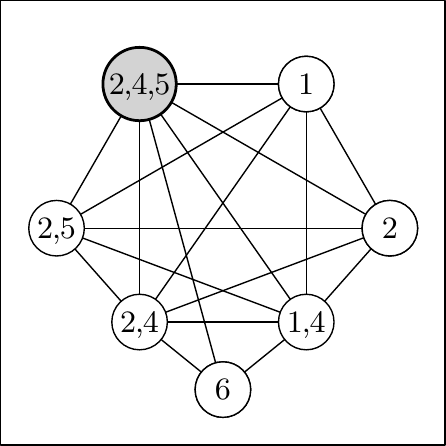}
		\caption{Extended Struction}
		\label{fig:extended_struction}
	\end{subfigure}
	\begin{subfigure}[t]{0.25\textwidth}
		\centering
		\includegraphics[width=\textwidth]{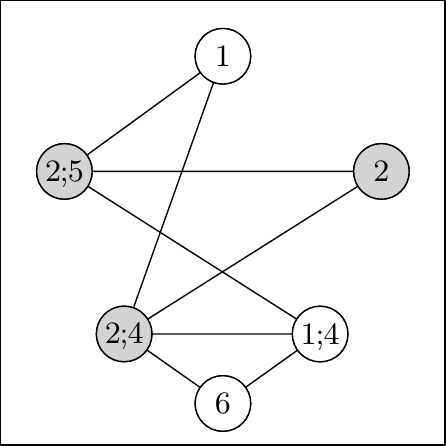}
		\caption{Extended Reduced Struction}
		\label{fig:extended_reduced_struction}
	\end{subfigure}
	\caption{Application of extended struction and extended reduced struction.
	Vertices representing the same independent set in the different graphs are highlighted in gray.
	We assume some weight constraints in the original graph for the construction in b) and c):
$w(1)~>w(0)$, $w(2)~>w(0)$ and $w(3)~+~w(4)~+~w(5)~\leq~w(0)$.}
\end{figure*}
\fi{}

\subsection{Extended Reduced Weighted Struction.}
The extended reduced struction is a variant of the extended struction, which can potentially reduce the number of newly created vertices.
For this purpose, only independent sets with weight ``just'' greater than $w(v)$ are considered.
In particular, this type of struction considers independent sets that have weight greater than $w(v)$, where each subset of this independent set has weight less than $w(v)$.
For this purpose, let $C$ be the set of all independent sets in
$G[N(v)]$ and $C'~\subseteq~C$ be the subset of independent sets for which there is no independent set in $C$ that has a weight greater than $w(v)$ and is a proper superset of $C'$.
We then use the same construction as for the extended struction, but only create new vertices for the set $C'$.
The resulting set of vertices is denoted by $V_C$.
However, since this construction might not be valid anymore, we also add additional vertices that are connected to each other by using layering.
To be more specific, for each pair of an independent set~$c~\in~C'$
and a vertex~$y~\in~N(v)$ we create a vertex~$v_{c,y}$ with
weight~$w(v_{c,y})~=~w(y)$, if~$c$ can be extended by~$y$, i.e.~$y$ is
not adjacent to any vertex~$v'~\in~c$. 
We denote this set of vertices $v_{c,y}$ by $V_E$.
We then insert edges between two vertices~$v_{c,y},v_{c',y'}$ if they either belong to different layers or $y$ and $y'$ have been adjacent.
Moreover, each vertex~$v_{c,y}$ is connected to each non-neighbor~$w~\in~\overline{N}(v)$, if~$w$ has been connected to either~$y$ or a vertex~$x~\in~c$.
Finally, we connect each vertex~$v_{c}$ to each vertex~$v_{c',y}$ belonging to a different layer than~$c$.
For an MWIS~$I'$ of~$G'$ we obtain an MWIS~$I$ of~$G$ as follows:
If $I'~\cap~V_C~=\emptyset$ applies, we set $I~=~I'~\cup~\{v\}$.
Otherwise, there is a single vertex $v_c \in I' \cap V_C$ that we replace with the vertices of its independent set $c$.
Moreover, we replace each vertex $v_{c,y} \in I' \cap V_E$ with the vertex $v_y$.
Altogether we have $I = (I' \cap V) \cup c \cup \{v_y \mid v_{c,y} \in I' \cap V_E \}$.
Furthermore we have~$\alpha_w(G)~=~\alpha_w(G')~+~w(v)$.
An example application of the extended struction can be found in Figure~\ref{fig:extended_reduced_struction}\ifAppendix{}$^\bigstar$.\else.\fi{}

\ifAppendix
\else
\begin{algorithm}[t]
\SetAlgoLined
\begin{algorithmic}
	\STATE   \textbf{input} graph $G=(V,E)$, current solution weight $c$ (initially zero), best solution weight $\mathcal{W}$ (initially zero)
        \vspace*{-.45cm}
	\STATE   \textbf{procedure} Solve($G$, $c$, $\mathcal{W}$)
	\STATE   \quad $(G,c) \leftarrow$ Reduce$(G, c)$
	\STATE   \quad \textbf{if} $\mathcal{W} = 0$ \textbf{then} $\mathcal{W}  \leftarrow$ $c+\mathrm{ILS}(G)$ 
	\STATE   \quad \textbf{if} $c$ + UpperBound($G$) $\leq \mathcal{W}$ \textbf{then} \textbf{return} $\mathcal{W}$
	\STATE   \quad \textbf{if} $G$ is empty \textbf{then} \textbf{return} $\max\{\mathcal{W}, c\}$
	\STATE   \quad \textbf{if} $G$ is not connected \textbf{then}
	\STATE   \quad \quad \textbf{for all} $G_i \in $ Components($G$) \textbf{do}
	\STATE   \quad \quad \quad $c \leftarrow c + \text{Solve}$($G_i$, 0, 0)
	
	\STATE   \quad \quad \textbf{return} $\max(\mathcal{W},c)$
	\STATE   \quad $(G_1, c_1), (G_2, c_2) \leftarrow $ Branch$(G, c)$
	\STATE   \quad  \COMMENT{Run 1st case, update currently best solution}
	\STATE   \quad $\mathcal{W} \leftarrow $ Solve$(G_1, c_1, \mathcal{W})$ 
	\STATE   \quad \COMMENT{Use updated $\mathcal{W}$ to shrink the search space}
	\STATE   \quad $\mathcal{W}\leftarrow $ Solve$(G_2, c_2, \mathcal{W})$
	\STATE   \textbf{return} $\mathcal{W}$
\end{algorithmic}

\caption{Branch-and-Reduce Algorithm for MWIS}
\label{branchreducelabel}
  \label{branch_reduce_code}
\end{algorithm}
\fi{} 

\section{Practically Efficient Structions}
\label{sec:struction}
We now propose our two novel preprocessing algorithms for the MWIS problems based on the struction variants presented in the previous section.
Furthermore, we present how we integrate the different structions into the framework of Lamm~\etal~\cite{lamm-2019}, both as a preprocessing step and as a reduce step in branch-and-reduce, to more quickly compute optimal solutions. This takes an initial step towards a more general \emph{branch-and-transform} framework.

Since the different forms of the weighted struction do not necessarily reduce the number of vertices, we divide them (and existing reduction rules) into three different classes that are used throughout this paper: 
For \emph{decreasing transformations} (reductions) the transformed graph $G'$ has less vertices than the original graph $G$.
Note that all reduction rules used in the original framework of Lamm~\etal~\cite{lamm-2019} belong in this class.
Furthermore, we derive special cases of the structions, which also belong to this type.
Transformations where the number of vertices in the original graph
stays the same -- but reduce the size and weight of MWIS -- in the transformed graph are called \emph{plateau transformations}.
While plateau transformations cannot reduce the size of the graph, they can potentially produce new subgraphs which can then be reduced by other (decreasing) transformations.
Finally, \emph{increasing transformations} are transformations whose resulting graph has more vertices than the original graph.
Similar to plateau transformations, the idea is to reduce the resulting graph by further reduction rules and transformations.
However, since increasing structions can lead to a larger transformed graph, it is difficult to integrate them into algorithms that only apply non-increasing transformations.

\subsection{Non-Increasing Reduction Algorithm.}
In this section we show how to obtain decreasing and plateau transformations from the four different forms of structions presented in Section~\ref{subsec:struction}.
Based on these, an incremental preprocessing algorithm is proposed.

In general, when applying any form of struction, the number of vertices of the transformed graph $G'$ depends on the number of removed and newly created vertices.
However, it is difficult to estimate the number of resulting vertices in advance, since it varies depending on the number of possible independent sets present in the neighborhood of the center vertex.
Thus, when applying a struction variant, we generally keep track of the number of vertices that will be created.
If this number exceeds a given maximum value $n_{max}$, we discard the corresponding struction to ensure that not too many vertices are created.

We begin by taking a closer look at the structurally similar original weighted struction and modified weighted struction.
These variants reduce the number of vertices by at most one, since they only remove the center vertex $v$.
Therefore, decreasing or plateau structions can be obtained by setting $n_{max}~=~0$ or $n_{max}~=~1$.
However, note that this type of decreasing struction is already covered by the isolated weight transfer proposed by Lamm~\etal~\cite{lamm-2019}.

Looking at the two remaining struction variants, we see that they not only remove the center vertex $v$ but also its neighborhood $N(v)$.
Thus, the size of the graph can be reduced by up to $\delta(v) + 1$.
Decreasing or plateau structions can be obtained by using the corresponding struction variant with $n_{max}~=~\delta(v)$ or $n_{max}~=~\delta(v) + 1$.

The resulting rules can then easily be integrated into the kernelization algorithm used by Lamm~\etal~\cite{lamm-2019}.
However, since all struction variants are very general reduction rules, they tend to be expensive in terms of running time.
We therefore apply them after the faster localized reduction rules, but before the even more expensive critical set reduction.
To be specific, we use the following reduction rule order:~$R~=~[$\textproc{Neighborhood Removal}, \textproc{Degree Two Fold}, \textproc{Clique Reduction}, \textproc{Domination}, \textproc{Twin}, \textproc{Clique Neighborhood Removal}, \textproc{Generalized Fold}, \textproc{Decreasing Struction}, \textproc{Plateau Struction}, \textproc{Critical Weighted Independent Set}$]$.

\subsection{Cyclic Blow-Up Algorithm.}
Next, we extend the previous algorithm to also make use of increasing structions.
The main idea is to alternate between computing an irreducible graph using the previous (non-increasing) algorithm and then applying increasing structions while ensuring that the graph size does not increase too much.
The reasoning for this is that even though the graph size might increase, this can generate new and potentially reducible subgraphs, thus leading to an overall decrease in the graph size.

In the following, we say that a graph $K$ is better than a graph $K'$ if it has fewer vertices.
However, our algorithm can easily be adapted to match other quality criteria.
Pseudocode for our algorithm is given in Algorithm~\ref{cyclic_blow_up_code}.

Our algorithms maintains two graphs $K$ and $K^{\star}$.
$K^{\star}$ is the best graph found so far, i.e. the graph with the least number of vertices.
$K$ is the current graph, which we try to reduce to get a better graph $K^{\star}$.
Both graphs, $K$ and $K^{\star}$ are initialized with the graph obtained by applying the non-increasing reduction algorithm of the previous section.
The algorithm then alternates between two phases, a \emph{blow-up phase} (Section~\ref{blowup_phase}) and a \emph{reduction phase}.
During the blow-up phase a set of increasing structions is applied to $K$, resulting in a new graph~$K'$.
$K'$ is then reduced using the non-increasing algorithm, resulting in a graph~$K''$.
Next, we have to decide whether or not to use $K''$ or $K$ for the next iteration.
Note, that it can be advantageous to accept a graph $K''$ even if it has more vertices than $K$ to avoid local minima.
Nonetheless, we decided to only keep $K''$ if it is less vertices than $K$ as this strategy provided better results during preliminary experiments.
Finally, since we might not completely reduce the graph, we use a termination criterion, which will be discussed in Section~\ref{stopping_criterion}.

\begin{algorithm}[t]
\SetAlgoLined
\begin{algorithmic}
  \STATE \textbf{input} graph $G=(V,E)$, unsuccessful iteration threshold $X \in [1,\infty)$ 
	\STATE \textbf{procedure} CyclicBlowUp($G$, $X$)
	\STATE \quad $K \leftarrow$ Reduce$(G)$
	\STATE \quad $K^{\star} \leftarrow K$
	\STATE \quad count $\leftarrow 0$ 
  \STATE \quad \textbf{while} $|V(K)| < \alpha \cdot |V(K^{\star})|$ \textbf{and} count $< X$ \textbf{do}
	\STATE \quad \quad $K' \leftarrow$ BlowUp$(K)$ 
  \STATE \quad \quad \textbf{if} $K' = K$ \textbf{then}
  \STATE \quad \quad \quad \textbf{return} $K^{\star}$
  \STATE \quad \quad $K'' \leftarrow$ Reduce$(K')$
  \STATE \quad \quad $K \leftarrow$ Accept$(K'',K)$
    \STATE \quad \quad \textbf{if} $K < K^{\star}$ \textbf{then}
  \STATE \quad \quad \quad $K^{\star} \leftarrow K$
	\STATE \textbf{return} $K^{\star}$
\end{algorithmic}
\caption{Cyclic Blow-Up Algorithm}
\label{cyclic_blow_up_code}
\end{algorithm}

\subsubsection{Blow-Up Phase.}
\label{blowup_phase}
The starting point of the blow-up phase is an irreducible graph, where no more reductions (including decreasing structions) can be applied.
Next, we select a vertex $v$ from a candidate set $\mathcal{C}$.
This candidate set consists of all vertices in the current graph which have not been explicitly excluded for selection during the algorithm.
Vertex selection is a crucial part of our algorithm. 
Depending on the selected vertex the struction might create a large number of new vertices and the size of the transformed graph can increase drastically.
Thus, we discuss possible selection strategies in the next section.

Next, we apply a struction to the selected vertex $v$.
As for our previous algorithm, we keep track of the number of newly created and deleted vertices during this step.
In particular, if the struction would result in more than $n_{max}$ vertices, it is aborted.
In this case, the vertex $v$ is excluded from the candidate set.
The vertex $v$ will become viable again as soon as the corresponding
struction would create another transformed graph, i.e. its neighborhood~$N(v)$ changed.

After having applied a struction, we then proceed with the subsequent reduction phase.
It might also be possible to apply more than one struction during a blow-up phase.
However, one has to be careful to not let the size of the graph grow too large.

\paragraph{Vertex Selection Strategies.}
\label{vertex_selection_strategies}
The goal of the vertex selection procedure is to find an increasing struction that results in a new graph, which can then be reduced to an overall smaller graph.
In general, it is very difficult to estimate in advance to what extent the transformed graph can be reduced without actually performing the reduction phase.
Most of the following strategies therefore aim at increasing the size of the graph by only a few vertices.
The number of newly created vertices is determined by the number of independent sets in the neighborhood of $v$ having a total greater weight than $v$.
In general, determining this number is $\mathcal{NP}$-complete~\cite{prodinger1982fibonacci} and thus often infeasible to compute in practice.
In contrast, a much simpler selection strategy would be to choose vertices uniformly at random.
However, this can lead to structions that significantly increase the graph size.

Thus, in order to limit the size increase of a struction, we decided to use an approximation of the exact number of independent sets in the neighborhood of $v$.
In particular, we only consider independent sets up to a size of two.
This results in a lower bound $L$ for the number of independent sets~\cite{pedersen2006bounds} which can be computed in $\mathcal{O}(\Delta^2)$ time.
Since the lower bound $L$ can be far smaller than the actual number of newly created vertices, we use an additional \emph{tightness-check}: 
This check is passed if less than $L'~=\lceil~\beta~\cdot~L~\rceil$ new vertices with $\beta \in (1,\infty)$ are created by the corresponding struction.
Our strategy then works as follows:
We select a vertex $v$ with a minimal increase and perform the tightness-check.
If it fails, we know that at least $L'$ new vertices are created by the corresponding struction.
Therefore~$L'$ forms a tighter bound for the number of new vertices, and we reinsert $v$ to $C$ using the bound $L'$.
We then repeat this process until we find a vertex that passes the tightness-check.

\vspace*{-.25cm}
\subsection{Termination criteria.}
\label{stopping_criterion}
In general, the size of $K$ can decrease very slowly or even
exhibit oscillatory behavior.
This can cause the algorithm to take a long time to improve $K^{\star}$ or even not improve it at all.
For this purpose, one needs an appropriate termination criterion.
First, we want to avoid that the size of the current graph $K$ distances too much from that of the best graph $K^{\star}$.
Therefore we abort the algorithm as soon as the size of the current
graph exceeds the size of the best graph by a factor
$\alpha~\in~[1,\infty)$, that is if $|V(K)| \geq \alpha \cdot V(K^{\star})$.
Additionally, we also count the number of unsuccessful iterations, i.e. iterations in which the new graph has been rejected.
Our second criterion aborts the algorithm if this value exceeds some constant~$X~\in~[1,\infty)$.

\section{Experimental Evaluation}
\label{sec:experiments}
We now evaluate the impact and performance our preprocessing methods: 
First, we compare the performance of our algorithms with the two configurations used in the branch-and-reduce framework of Lamm~\etal~\cite{lamm-2019}. For this purpose, we examine the sizes of the reduced graphs, the number of instances solved, as well as the time required to do so. Second, we perform a broader comparison with other state-of-the-art algorithms, including heuristic approaches.

\subsection{Methodology and Setup.}\label{methodology}
We ran all the experiments on a machine with four Octa-Core Intel Xeon E5-4640 processors running at 2.4 GHz, 512 GB of main memory, 420 MB L3-Cache and 48256 KB L2-Cache.
The machine runs Ubuntu 18.04.4 and Linux kernel version 4.15.0-96.
All algorithms were implemented in C++11 and compiled with g++ version 7.5.0 with optimization flag -O3.
All algorithms were executed sequentially with a time limit of \numprint{1000} seconds.
The experiments for heuristic methods were performed with five different random seeds.
Furthermore, we present \emph{cactus plots} types of plots, which show the number of instances solved over time. 

\subsubsection{Algorithm Configuration.}\label{algorithms}
For our evaluation, we use both the non-increasing algorithm, as well as the cyclic blow-up algorithm.
In particular, we use two different configurations of the cyclic blow-up algorithm:
The first configuration, called~\conStrong, aims to achieve small reduced graphs.
For this purpose, we set the number of unsuccessful blow-up phases to
$X=64$, the number of vertices that a struction is allowed to create
to $n_{max}=$ \numprint{2048} and the maximum struction degree (the degree up to which we can apply structions)
$d_{max}=512$. 
In our preliminary experiments, this configuration was always able to compute the smallest reduced graphs.
The second configuration, called \conFast, aims to achieve a good trade-off between the reduced graph size and the time required to compute an optimal solution.
Thus we set $X=25$, $n_{max}=512$ and $d_{max}=64$. 
Finally, all our algorithms use $\beta=2$ for the tightness-check during vertex selection, as well as the the extended weighted struction, as
this struction variant achieved the best performance during preliminary experiments.

To measure the impact of our preprocessing methods on existing solvers, we add each configuration to the branch-and-reduce framework by Lamm~\etal~\cite{lamm-2019}.
This results in three solvers, which we call \cyclicFast, \cyclicStrong\ and \nonIncreasing\ in the following.
Note that each solver uses the corresponding configuration only for its initial preprocessing, whereas subsequent graph reductions only use decreasing transformations.
Finally, we have replaced the ILS local search used in the original
framework to compute lower bounds with the hybrid iterated local search (\hils) of
Nogueira~\etal~\cite{hybrid-ils-2018}. 
This resulted in slightly better runtimes during preliminary experiments, but had no impact on the number of instances that were solved.
 
\subsubsection{Instances.}\label{datasets}
We test all algorithms on a large corpus of sparse graphs, which mainly consists of instances found in previous works on the maximum (weight) independent set problem~\cite{cai-dynwvc,lamm-2019,andrade-2012}.
In particular, this corpus consists of real-world conflict graphs that were derived from OpenStreetMap (OSM)~\cite{OSMWEB,barth-2016,cai-dynwvc}, as well as a collection of large social networks from the Stanford Large Network Dataset Repository (SNAP)~\cite{snapnets}. 
One caveat of this corpus is that most its instances are unweighted.
Following the example of previous work~\cite{cai-dynwvc,li2017efficient,lamm-2019}, we alleviate this issue by assigning each vertex a random weight that is uniformly distributed in the interval $[1,200]$.
Furthermore, we extend this set of benchmark instances by also considering instances stemming from dual graphs of well-known triangle meshes (mesh)~\cite{sander2008efficient}, 3d meshes derived from simulations using the finite element method (FE), as well as instances for the maximum weight clique problem~\cite{glasgowinstances}.
However, the complements of the maximum weight clique instances are only somewhat sparse---and most are irreducible by our techniques.
This behavior has already been observed by Akiba and Iwata~\cite{akiba-tcs-2016} on similar instances. Therefore, we will omit these instances from our experiments.
An overview of all instances considered is given in Appendix~\ref{properties}.

\subsection{Comparison with Branch-and-Reduce.}\label{bnr_comp}

We begin by comparing our three solvers presented in Section~\ref{algorithms} with the two configurations, called \basicSparse\ and \basicDense\, of the branch-and-reduce framework outlined by Lamm~\etal~\cite{lamm-2019}.
Our comparison is divided into two parts: First, we consider the sizes of the irreducible graphs after the initial reduction phase. Second, we compare the number of solved instances and the time required to solve them.
A complete overview of the reduced graph sizes and running times for each algorithm is given in Appendix~\ref{bnr_tables}.

Table~\ref{graph_size_table_all} shows the sizes of the irreducible graphs after the initial reduction phase.
Note that we omit \basicDense\ as it always calculates equally sized or larger graphs than \basicSparse.

First, we note that with the exception of \texttt{fe\_ocean}, \cyclicStrong\ always produces the smallest reduced graphs.
For this particular instance, the usage of the struction limits the efficiency of the critical set reduction, resulting in a larger reduced graphs.
Furthermore, the greatest improvement can be found on the mesh instances, which are all completely reduced using \cyclicStrong.
In comparison, \basicSparse\ is not able to obtain an empty graph on a single of these instances and ends up with reduced graphs consisting of up to thousands of vertices.
Overall, \cyclicStrong\ is able to achieve an empty reduced graph on $60$ of the $87$ instances tested \---
an additional $48$ instances compared to the $22$ empty reduced graphs computed by \basicSparse.

If we compare the reduced graphs of~\cyclicStrong\ and \cyclicFast\, we see that they always have the same size on the mesh instances. 
However, the size of the reduced instances computed by~\cyclicFast\ on the other instance families is up to a few thousand vertices larger.
On the OSM instances for example, \cyclicFast\ calculates a reduced graph that has the same size as the one computed by \cyclicStrong\ on only~\numprint{16} out of~\numprint{34} instances with the largest difference being~\numprint{2216} vertices.

\begin{table*}[t]
\centering
\vspace*{-.5cm}
\scriptsize
\begin{ThreePartTable}
\begin{tabular}{ccccccccccc}
  \mc{1}{l|}{Graph} & \mc{1}{r}{$n$} & \mc{1}{r|}{$t_{r}$} & \mc{1}{r}{$n$} & \mc{1}{r|}{$t_{r}$} & \mc{1}{r}{$n$} & \mc{1}{r|}{$t_{r}$} & \mc{1}{r}{$n$} & \mc{1}{r|}{$t_{r}$} & \mc{1}{r}{$n$} & \mc{1}{r}{$t_{r}$} \\ 
  \hline

  \mc{1}{l|}{OSM instances} & \mc{2}{c|}{\basicDense} & \mc{2}{c|}{\basicSparse} & \mc{2}{c|}{\nonIncreasing} & \mc{2}{c|}{\cyclicFast} & \mc{2}{c}{\cyclicStrong} \\  
  \hline

  \rowcolor{lightergray} \mc{1}{l|}{\texttt{alabama-AM2}} & \mc{1}{r}{173} & \mc{1}{r|}{0.06} & \mc{1}{r}{173} & \mc{1}{r|}{0.07} & \mc{1}{r}{0} & \mc{1}{r|}{0.01} & \mc{1}{r}{0} & \mc{1}{r|}{0.01} & \mc{1}{r}{0} & \mc{1}{r}{0.01} \\  
  \mc{1}{l|}{\texttt{district-of-columbia-AM2}} & \mc{1}{r}{\numprint{6360}} & \mc{1}{r|}{11.86} & \mc{1}{r}{\numprint{6360}} & \mc{1}{r|}{14.39} & \mc{1}{r}{\numprint{5606}} & \mc{1}{r|}{0.85} & \mc{1}{r}{\numprint{1855}} & \mc{1}{r|}{2.51} & \mc{1}{r}{\numprint{1484}} & \mc{1}{r}{84.91} \\  
  \mc{1}{l|}{\texttt{florida-AM3}} & \mc{1}{r}{\numprint{1069}} & \mc{1}{r|}{31.52} & \mc{1}{r}{\numprint{1069}} & \mc{1}{r|}{35.20} & \mc{1}{r}{814} & \mc{1}{r|}{0.13} & \mc{1}{r}{661} & \mc{1}{r|}{0.44} & \mc{1}{r}{267} & \mc{1}{r}{42.26} \\  
  \mc{1}{l|}{\texttt{georgia-AM3}} & \mc{1}{r}{861} & \mc{1}{r|}{8.99} & \mc{1}{r}{861} & \mc{1}{r|}{10.14} & \mc{1}{r}{796} & \mc{1}{r|}{0.08} & \mc{1}{r}{587} & \mc{1}{r|}{0.69} & \mc{1}{r}{425} & \mc{1}{r}{12.84} \\  
  \mc{1}{l|}{\texttt{greenland-AM3}} & \mc{1}{r}{\numprint{3942}} & \mc{1}{r|}{3.81} & \mc{1}{r}{\numprint{3942}} & \mc{1}{r|}{24.77} & \mc{1}{r}{\numprint{3953}} & \mc{1}{r|}{3.94} & \mc{1}{r}{\numprint{3339}} & \mc{1}{r|}{10.27} & \mc{1}{r}{\numprint{3339}} & \mc{1}{r}{54.44} \\  
  \rowcolor{lightergray} \mc{1}{l|}{\texttt{new-hampshire-AM3}} & \mc{1}{r}{247} & \mc{1}{r|}{4.99} & \mc{1}{r}{247} & \mc{1}{r|}{5.69} & \mc{1}{r}{164} & \mc{1}{r|}{0.02} & \mc{1}{r}{0} & \mc{1}{r|}{0.07} & \mc{1}{r}{0} & \mc{1}{r}{0.09} \\  
  \rowcolor{lightergray} \mc{1}{l|}{\texttt{rhode-island-AM2}} & \mc{1}{r}{\numprint{1103}} & \mc{1}{r|}{0.55} & \mc{1}{r}{\numprint{1103}} & \mc{1}{r|}{0.68} & \mc{1}{r}{845} & \mc{1}{r|}{0.17} & \mc{1}{r}{0} & \mc{1}{r|}{0.53} & \mc{1}{r}{0} & \mc{1}{r}{4.57} \\  
  \rowcolor{lightergray} \mc{1}{l|}{\texttt{utah-AM3}} & \mc{1}{r}{568} & \mc{1}{r|}{8.21} & \mc{1}{r}{568} & \mc{1}{r|}{8.97} & \mc{1}{r}{396} & \mc{1}{r|}{0.03} & \mc{1}{r}{0} & \mc{1}{r|}{0.09} & \mc{1}{r}{0} & \mc{1}{r}{0.40} \\  

  \hline 
  \mc{1}{l|}{Empty graphs} & \mc{2}{r|}{0\% (0/34)} & \mc{2}{r|}{0\% (0/34)} & \mc{2}{r|}{11.8\% (4/34)} & \mc{2}{r|}{41.2\% (14/34)} & \mc{2}{r}{50\% (17/34)} \\  
  \hline \hline

  \mc{1}{l|}{SNAP instances} & \mc{2}{c|}{\basicDense} & \mc{2}{c|}{\basicSparse} & \mc{2}{c|}{\nonIncreasing} & \mc{2}{c|}{\cyclicFast} & \mc{2}{c}{\cyclicStrong} \\  
  \hline

  \mc{1}{l|}{\texttt{as-skitter}} & \mc{1}{r}{\numprint{26584}} & \mc{1}{r|}{25.82} & \mc{1}{r}{\numprint{8585}} & \mc{1}{r|}{36.69} & \mc{1}{r}{\numprint{3426}} & \mc{1}{r|}{4.75} & \mc{1}{r}{\numprint{2782}} & \mc{1}{r|}{5.50} & \mc{1}{r}{\numprint{2343}} & \mc{1}{r}{6.80} \\  
  \rowcolor{lightergray} \mc{1}{l|}{\texttt{ca-AstroPh}} & \mc{1}{r}{0} & \mc{1}{r|}{0.02} & \mc{1}{r}{0} & \mc{1}{r|}{0.02} & \mc{1}{r}{0} & \mc{1}{r|}{0.02} & \mc{1}{r}{0} & \mc{1}{r|}{0.03} & \mc{1}{r}{0} & \mc{1}{r}{0.03} \\  
  \rowcolor{lightergray} \mc{1}{l|}{\texttt{email-EuAll}} & \mc{1}{r}{0} & \mc{1}{r|}{0.08} & \mc{1}{r}{0} & \mc{1}{r|}{0.09} & \mc{1}{r}{0} & \mc{1}{r|}{0.06} & \mc{1}{r}{0} & \mc{1}{r|}{0.09} & \mc{1}{r}{0} & \mc{1}{r}{0.07} \\  
  \rowcolor{lightergray} \mc{1}{l|}{\texttt{p2p-Gnutella06}} & \mc{1}{r}{0} & \mc{1}{r|}{0.01} & \mc{1}{r}{0} & \mc{1}{r|}{0.01} & \mc{1}{r}{0} & \mc{1}{r|}{0.01} & \mc{1}{r}{0} & \mc{1}{r|}{0.01} & \mc{1}{r}{0} & \mc{1}{r}{0.01} \\  
  \rowcolor{lightergray} \mc{1}{l|}{\texttt{roadNet-PA}} & \mc{1}{r}{\numprint{133814}} & \mc{1}{r|}{2.43} & \mc{1}{r}{\numprint{35442}} & \mc{1}{r|}{7.73} & \mc{1}{r}{300} & \mc{1}{r|}{1.05} & \mc{1}{r}{0} & \mc{1}{r|}{1.19} & \mc{1}{r}{0} & \mc{1}{r}{1.14} \\  
  \mc{1}{l|}{\texttt{soc-LiveJournal1}} & \mc{1}{r}{\numprint{60041}} & \mc{1}{r|}{236.88} & \mc{1}{r}{\numprint{29508}} & \mc{1}{r|}{213.74} & \mc{1}{r}{\numprint{4319}} & \mc{1}{r|}{22.27} & \mc{1}{r}{\numprint{3530}} & \mc{1}{r|}{24.13} & \mc{1}{r}{\numprint{1314}} & \mc{1}{r}{37.77} \\  
  \mc{1}{l|}{\texttt{web-Google}} & \mc{1}{r}{\numprint{2810}} & \mc{1}{r|}{1.57} & \mc{1}{r}{\numprint{1254}} & \mc{1}{r|}{2.42} & \mc{1}{r}{361} & \mc{1}{r|}{1.75} & \mc{1}{r}{46} & \mc{1}{r|}{1.88} & \mc{1}{r}{46} & \mc{1}{r}{7.97} \\  
  \rowcolor{lightergray} \mc{1}{l|}{\texttt{wiki-Vote}} & \mc{1}{r}{477} & \mc{1}{r|}{0.03} & \mc{1}{r}{0} & \mc{1}{r|}{0.02} & \mc{1}{r}{0} & \mc{1}{r|}{0.02} & \mc{1}{r}{0} & \mc{1}{r|}{0.02} & \mc{1}{r}{0} & \mc{1}{r}{0.02} \\  

  \hline 
  \mc{1}{l|}{Empty graphs} & \mc{2}{r|}{58.1\% (18/31)} & \mc{2}{r|}{67.7\% (21/31)} & \mc{2}{r|}{67.7\% (21/34)} & \mc{2}{r|}{80.6\% (25/31)} & \mc{2}{r}{80.6\% (25/31)} \\  
  \hline \hline

  \mc{1}{l|}{mesh instances} & \mc{2}{c|}{\basicDense} & \mc{2}{c|}{\basicSparse} & \mc{2}{c|}{\nonIncreasing} & \mc{2}{c|}{\cyclicFast} & \mc{2}{c}{\cyclicStrong} \\  
  \hline

  \rowcolor{lightergray} \mc{1}{l|}{\texttt{buddha}} & \mc{1}{r}{\numprint{380315}} & \mc{1}{r|}{5.56} & \mc{1}{r}{\numprint{107265}} & \mc{1}{r|}{26.19} & \mc{1}{r}{86} & \mc{1}{r|}{1.83} & \mc{1}{r}{0} & \mc{1}{r|}{1.87} & \mc{1}{r}{0} & \mc{1}{r}{1.91} \\  
  \rowcolor{lightergray} \mc{1}{l|}{\texttt{dragon}} & \mc{1}{r}{\numprint{51885}} & \mc{1}{r|}{0.89} & \mc{1}{r}{\numprint{12893}} & \mc{1}{r|}{1.34} & \mc{1}{r}{0} & \mc{1}{r|}{0.18} & \mc{1}{r}{0} & \mc{1}{r|}{0.19} & \mc{1}{r}{0} & \mc{1}{r}{0.21} \\  
  \rowcolor{lightergray} \mc{1}{l|}{\texttt{ecat}} & \mc{1}{r}{\numprint{239787}} & \mc{1}{r|}{4.07} & \mc{1}{r}{\numprint{26270}} & \mc{1}{r|}{10.09} & \mc{1}{r}{274} & \mc{1}{r|}{2.12} & \mc{1}{r}{0} & \mc{1}{r|}{2.12} & \mc{1}{r}{0} & \mc{1}{r}{2.14} \\  
  \hline 
  \mc{1}{l|}{Empty graphs} & \mc{2}{r|}{0\% (0/15)} & \mc{2}{r|}{0\% (0/15)} & \mc{2}{r|}{66.7\% (10/15)} & \mc{2}{r|}{100\% (15/15)} & \mc{2}{r}{100\% (15/15)} \\  
  \hline \hline

  \mc{1}{l|}{FE instances} & \mc{2}{c|}{\basicDense} & \mc{2}{c|}{\basicSparse} & \mc{2}{c|}{\nonIncreasing} & \mc{2}{c|}{\cyclicFast} & \mc{2}{c}{\cyclicStrong} \\  
  \hline  

  \mc{1}{l|}{\texttt{fe\_ocean}} & \mc{1}{r}{\numprint{141283}} & \mc{1}{r|}{1.05} & \mc{1}{r}{0} & \mc{1}{r|}{5.94} & \mc{1}{r}{\numprint{138338}} & \mc{1}{r|}{8.90} & \mc{1}{r}{\numprint{138134}} & \mc{1}{r|}{9.61} & \mc{1}{r}{\numprint{138049}} & \mc{1}{r}{10.78} \\  
  \rowcolor{lightergray} \mc{1}{l|}{\texttt{fe\_sphere}} & \mc{1}{r}{\numprint{15269}} & \mc{1}{r|}{0.21} & \mc{1}{r}{\numprint{15269}} & \mc{1}{r|}{1.47} & \mc{1}{r}{\numprint{2961}} & \mc{1}{r|}{0.34} & \mc{1}{r}{147} & \mc{1}{r|}{0.62} & \mc{1}{r}{0} & \mc{1}{r}{0.75} \\  
  \hline

  \mc{1}{l|}{Empty graphs} & \mc{2}{r|}{0\% (0/7)} & \mc{2}{r|}{14.3\% (1/7)} & \mc{2}{r|}{0\% (0/7)} & \mc{2}{r|}{28.6\% (2/7)} & \mc{2}{r}{42.9\% (3/7)} \\  
\end{tabular}
\end{ThreePartTable}

\caption{Smallest irreducible graph found by each algorithm and time
  (in seconds) required to compute it. Rows are highlighted in gray if
  one of our algorithms is able to obtain an empty graph. 
}
\label{graph_size_table_all}
\end{table*}

\begin{figure*}[t]
	\centering
  \includegraphics[width=.85\linewidth]{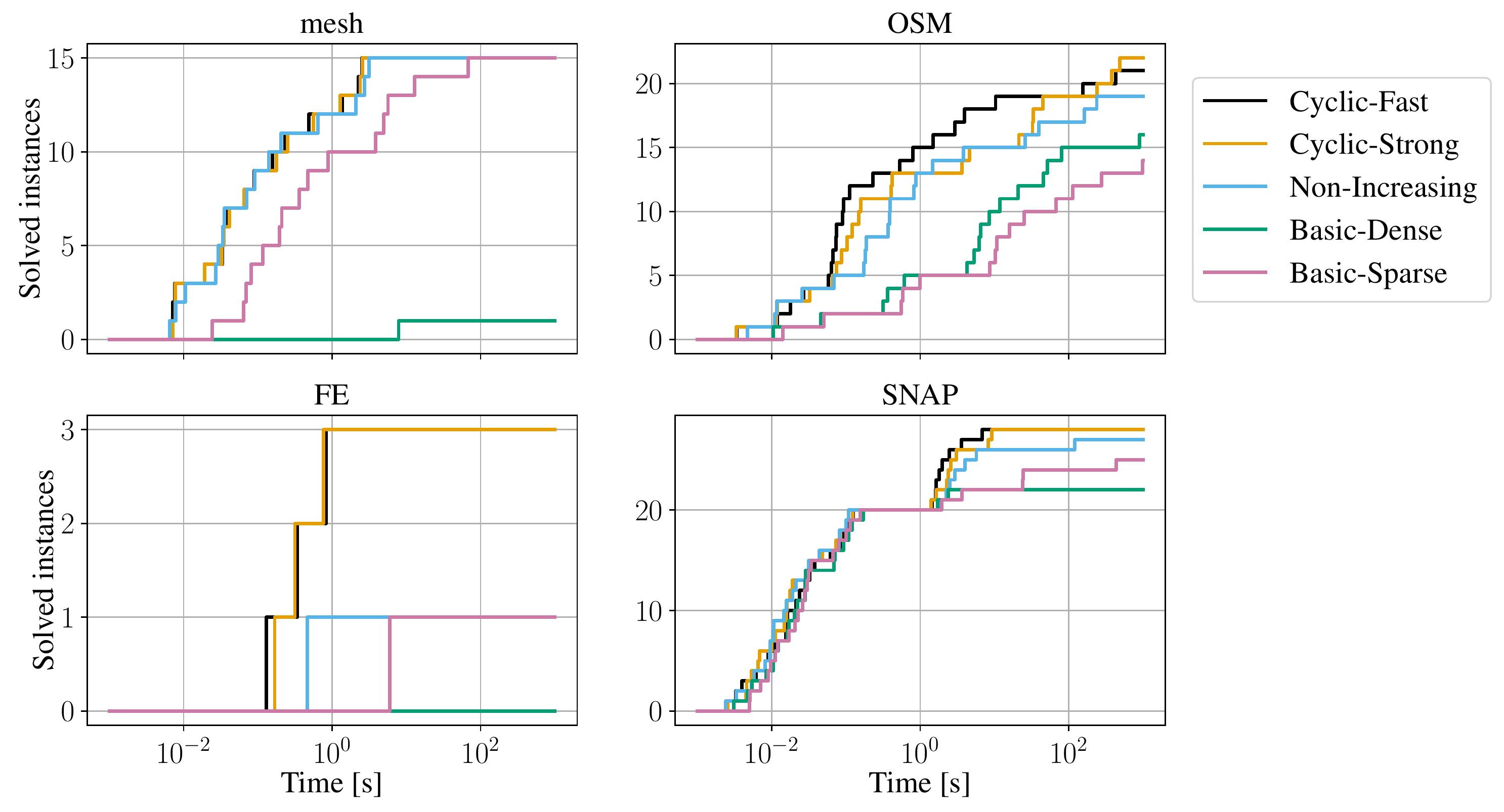}
	\caption{Cactus plots for the different instance families and evaluated solvers.}
	\label{cactus_plot}
\end{figure*}
\begin{table*}[t!]
\centering
\scriptsize
\begin{ThreePartTable}
\begin{tabu}{ccccccccc}

  \mc{1}{l}{Graph} & \mc{1}{|r}{$t_{max}$} & \mc{1}{r}{$w_{max}$} & \mc{1}{|r}{$t_{max}$} & \mc{1}{r}{$w_{max}$} & \mc{1}{|r}{$t_{max}$} & \mc{1}{r}{$w_{max}$} & \mc{1}{|r}{$t_{max}$} & \mc{1}{r}{$w_{max}$} \\ 
  \hline  

  \mc{1}{l}{OSM instances} & \mc{2}{|c}{DynWVC2} & \mc{2}{|c}{HILS} & \mc{2}{|c}{Cyclic-Fast} & \mc{2}{|c}{Cyclic-Strong} \\ 
  \hline  

  \mc{1}{l}{\cellcolor{lightgray!50}{\texttt{alabama-AM2}}} & \mc{1}{|r}{\cellcolor{lightgray!50}{0.24}} & \mc{1}{r}{\cellcolor{lightgray!50}{\numprint{174269}}} & \mc{1}{|r}{\cellcolor{lightgray!50}{0.03}} & \mc{1}{r}{\textbf{\cellcolor{lightgray!50}{\numprint{174309}}}} & \mc{1}{|r}{\cellcolor{lightgray!50}{0.01}} & \mc{1}{r}{\textbf{\cellcolor{lightgray!50}{\numprint{174309}}}} & \mc{1}{|r}{\cellcolor{lightgray!50}{0.01}} & \mc{1}{r}{\textbf{\cellcolor{lightgray!50}{\numprint{174309}}}} \\  

  \mc{1}{l}{\texttt{district-of-columbia-AM2}} & \mc{1}{|r}{915.18} & \mc{1}{r}{\numprint{208977}} & \mc{1}{|r}{400.69} & \mc{1}{r}{\textbf{\numprint{209132}}} & \mc{1}{|r}{4.21} & \mc{1}{r}{\textbf{\numprint{209132}}} & \mc{1}{|r}{84.21} & \mc{1}{r}{\numprint{209131}} \\  

  \mc{1}{l}{\cellcolor{lightgray!50}{\texttt{florida-AM3}}} & \mc{1}{|r}{\cellcolor{lightgray!50}{862.04}} & \mc{1}{r}{\cellcolor{lightgray!50}{\numprint{237120}}} & \mc{1}{|r}{\cellcolor{lightgray!50}{3.98}} & \mc{1}{r}{\textbf{\cellcolor{lightgray!50}{\numprint{237333}}}} & \mc{1}{|r}{\cellcolor{lightgray!50}{1.57}} & \mc{1}{r}{\textbf{\cellcolor{lightgray!50}{\numprint{237333}}}} & \mc{1}{|r}{\cellcolor{lightgray!50}{40.97}} & \mc{1}{r}{\textbf{\cellcolor{lightgray!50}{\numprint{237333}}}} \\  

  \mc{1}{l}{\cellcolor{lightgray!50}{\texttt{georgia-AM3}}} & \mc{1}{|r}{\cellcolor{lightgray!50}{1.31}} & \mc{1}{r}{\textbf{\cellcolor{lightgray!50}{\numprint{222652}}}} & \mc{1}{|r}{\cellcolor{lightgray!50}{0.04}} & \mc{1}{r}{\textbf{\cellcolor{lightgray!50}{\numprint{222652}}}} & \mc{1}{|r}{\cellcolor{lightgray!50}{0.98}} & \mc{1}{r}{\textbf{\cellcolor{lightgray!50}{\numprint{222652}}}} & \mc{1}{|r}{\cellcolor{lightgray!50}{12.97}} & \mc{1}{r}{\textbf{\cellcolor{lightgray!50}{\numprint{222652}}}} \\  

  \mc{1}{l}{\texttt{greenland-AM3}} & \mc{1}{|r}{640.46} & \mc{1}{r}{\numprint{14010}} & \mc{1}{|r}{1.18} & \mc{1}{r}{\textbf{\numprint{14011}}} & \mc{1}{|r}{10.95} & \mc{1}{r}{\textbf{\numprint{14011}}} & \mc{1}{|r}{58.24} & \mc{1}{r}{\numprint{14008}} \\  

  \mc{1}{l}{\cellcolor{lightgray!50}{\texttt{new-hampshire-AM3}}} & \mc{1}{|r}{\cellcolor{lightgray!50}{1.63}} & \mc{1}{r}{\textbf{\cellcolor{lightgray!50}{\numprint{116060}}}} & \mc{1}{|r}{\cellcolor{lightgray!50}{0.03}} & \mc{1}{r}{\textbf{\cellcolor{lightgray!50}{\numprint{116060}}}} & \mc{1}{|r}{\cellcolor{lightgray!50}{0.05}} & \mc{1}{r}{\textbf{\cellcolor{lightgray!50}{\numprint{116060}}}} & \mc{1}{|r}{\cellcolor{lightgray!50}{0.08}} & \mc{1}{r}{\textbf{\cellcolor{lightgray!50}{\numprint{116060}}}} \\  

  \mc{1}{l}{\cellcolor{lightgray!50}{\texttt{rhode-island-AM2}}} & \mc{1}{|r}{\cellcolor{lightgray!50}{13.90}} & \mc{1}{r}{\cellcolor{lightgray!50}{\numprint{184576}}} & \mc{1}{|r}{\cellcolor{lightgray!50}{0.24}} & \mc{1}{r}{\textbf{\cellcolor{lightgray!50}{\numprint{184596}}}} & \mc{1}{|r}{\cellcolor{lightgray!50}{0.41}} & \mc{1}{r}{\textbf{\cellcolor{lightgray!50}{\numprint{184596}}}} & \mc{1}{|r}{\cellcolor{lightgray!50}{4.37}} & \mc{1}{r}{\textbf{\cellcolor{lightgray!50}{\numprint{184596}}}} \\  

  \mc{1}{l}{\cellcolor{lightgray!50}{\texttt{utah-AM3}}} & \mc{1}{|r}{\cellcolor{lightgray!50}{136.90}} & \mc{1}{r}{\textbf{\cellcolor{lightgray!50}{\numprint{98847}}}} & \mc{1}{|r}{\cellcolor{lightgray!50}{0.07}} & \mc{1}{r}{\textbf{\cellcolor{lightgray!50}{\numprint{98847}}}} & \mc{1}{|r}{\cellcolor{lightgray!50}{0.09}} & \mc{1}{r}{\textbf{\cellcolor{lightgray!50}{\numprint{98847}}}} & \mc{1}{|r}{\cellcolor{lightgray!50}{0.27}} & \mc{1}{r}{\textbf{\cellcolor{lightgray!50}{\numprint{98847}}}} \\ 
  \hline  

  \mc{1}{l}{Solved instances} & \mc{2}{|r}{} & \mc{2}{|r}{} & \mc{2}{|r}{61.8\% (21/34)} & \mc{2}{|r}{64.7\% (22/34)} \\  

  \mc{1}{l}{Optimal weight} & \mc{2}{|r}{68.2\% (15/22)} & \mc{2}{|r}{100.0\% (22/22)} & \mc{2}{|r}{} & \mc{2}{|r}{} \\ 
  \hline \hline  

  \mc{1}{l}{SNAP instances} & \mc{2}{|c}{DynWVC2} & \mc{2}{|c}{HILS} & \mc{2}{|c}{Cyclic-Fast} & \mc{2}{|c}{Cyclic-Strong} \\ 
  \hline  

  \mc{1}{l}{\texttt{as-skitter}} & \mc{1}{|r}{383.97} & \mc{1}{r}{\numprint{123273938}} & \mc{1}{|r}{999.32} & \mc{1}{r}{\numprint{122658804}} & \mc{1}{|r}{346.69} & \mc{1}{r}{\numprint{124137148}} & \mc{1}{|r}{354.71} & \mc{1}{r}{\textbf{\numprint{124137365}}} \\  

  \mc{1}{l}{\cellcolor{lightgray!50}{\texttt{ca-AstroPh}}} & \mc{1}{|r}{\cellcolor{lightgray!50}{125.05}} & \mc{1}{r}{\cellcolor{lightgray!50}{\numprint{797480}}} & \mc{1}{|r}{\cellcolor{lightgray!50}{13.47}} & \mc{1}{r}{\textbf{\cellcolor{lightgray!50}{\numprint{797510}}}} & \mc{1}{|r}{\cellcolor{lightgray!50}{0.02}} & \mc{1}{r}{\textbf{\cellcolor{lightgray!50}{\numprint{797510}}}} & \mc{1}{|r}{\cellcolor{lightgray!50}{0.02}} & \mc{1}{r}{\textbf{\cellcolor{lightgray!50}{\numprint{797510}}}} \\  

  \mc{1}{l}{\cellcolor{lightgray!50}{\texttt{email-EuAl}l}} & \mc{1}{|r}{\cellcolor{lightgray!50}{132.62}} & \mc{1}{r}{\textbf{\cellcolor{lightgray!50}{\numprint{25286322}}}} & \mc{1}{|r}{\cellcolor{lightgray!50}{338.14}} & \mc{1}{r}{\textbf{\cellcolor{lightgray!50}{\numprint{25286322}}}} & \mc{1}{|r}{\cellcolor{lightgray!50}{0.07}} & \mc{1}{r}{\textbf{\cellcolor{lightgray!50}{\numprint{25286322}}}} & \mc{1}{|r}{\cellcolor{lightgray!50}{0.07}} & \mc{1}{r}{\textbf{\cellcolor{lightgray!50}{\numprint{25286322}}}} \\  

  \mc{1}{l}{\cellcolor{lightgray!50}{\texttt{p2p-Gnutella06}}} & \mc{1}{|r}{\cellcolor{lightgray!50}{186.97}} & \mc{1}{r}{\cellcolor{lightgray!50}{\numprint{548611}}} & \mc{1}{|r}{\cellcolor{lightgray!50}{1.29}} & \mc{1}{r}{\textbf{\cellcolor{lightgray!50}{\numprint{548612}}}} & \mc{1}{|r}{\cellcolor{lightgray!50}{0.01}} & \mc{1}{r}{\textbf{\cellcolor{lightgray!50}{\numprint{548612}}}} & \mc{1}{|r}{\cellcolor{lightgray!50}{0.01}} & \mc{1}{r}{\textbf{\cellcolor{lightgray!50}{\numprint{548612}}}} \\  

  \mc{1}{l}{\cellcolor{lightgray!50}{\texttt{roadNet-PA}}} & \mc{1}{|r}{\cellcolor{lightgray!50}{469.18}} & \mc{1}{r}{\cellcolor{lightgray!50}{\numprint{60990177}}} & \mc{1}{|r}{\cellcolor{lightgray!50}{999.94}} & \mc{1}{r}{\cellcolor{lightgray!50}{\numprint{60037011}}} & \mc{1}{|r}{\cellcolor{lightgray!50}{0.96}} & \mc{1}{r}{\textbf{\cellcolor{lightgray!50}{\numprint{61731589}}}} & \mc{1}{|r}{\cellcolor{lightgray!50}{1.04}} & \mc{1}{r}{\textbf{\cellcolor{lightgray!50}{\numprint{61731589}}}} \\  

  \mc{1}{l}{\texttt{soc-LiveJournal1}} & \mc{1}{|r}{999.99} & \mc{1}{r}{\numprint{279231875}} & \mc{1}{|r}{\numprint{1000.00}} & \mc{1}{r}{\numprint{255079926}} & \mc{1}{|r}{51.33} & \mc{1}{r}{\numprint{284036222}} & \mc{1}{|r}{44.19} & \mc{1}{r}{\textbf{\numprint{284036239}}} \\  

  \mc{1}{l}{\cellcolor{lightgray!50}{\texttt{web-Google}}} & \mc{1}{|r}{\cellcolor{lightgray!50}{324.65}} & \mc{1}{r}{\cellcolor{lightgray!50}{\numprint{56206250}}} & \mc{1}{|r}{\cellcolor{lightgray!50}{995.92}} & \mc{1}{r}{\cellcolor{lightgray!50}{\numprint{56008278}}} & \mc{1}{|r}{\cellcolor{lightgray!50}{1.72}} & \mc{1}{r}{\textbf{\cellcolor{lightgray!50}{\numprint{56326504}}}} & \mc{1}{|r}{\cellcolor{lightgray!50}{6.44}} & \mc{1}{r}{\textbf{\cellcolor{lightgray!50}{\numprint{56326504}}}} \\  

  \mc{1}{l}{\cellcolor{lightgray!50}{\texttt{wiki-Vote}}} & \mc{1}{|r}{\cellcolor{lightgray!50}{0.32}} & \mc{1}{r}{\textbf{\cellcolor{lightgray!50}{\numprint{500079}}}} & \mc{1}{|r}{\cellcolor{lightgray!50}{10.34}} & \mc{1}{r}{\textbf{\cellcolor{lightgray!50}{\numprint{500079}}}} & \mc{1}{|r}{\cellcolor{lightgray!50}{0.02}} & \mc{1}{r}{\textbf{\cellcolor{lightgray!50}{\numprint{500079}}}} & \mc{1}{|r}{\cellcolor{lightgray!50}{0.02}} & \mc{1}{r}{\textbf{\cellcolor{lightgray!50}{\numprint{500079}}}} \\ 
  \hline  

  \mc{1}{l}{Solved instances} & \mc{2}{|r}{} & \mc{2}{|r}{} & \mc{2}{|r}{90.3\% (28/31)} & \mc{2}{|r}{90.3\% (28/31)} \\  

  \mc{1}{l}{Optimal weight} & \mc{2}{|r}{28.6\% (8/28)} & \mc{2}{|r}{57.1\% (16/28)} & \mc{2}{|r}{} & \mc{2}{|r}{} \\ 
  \hline \hline  

  \mc{1}{l}{mesh instances} & \mc{2}{|c}{DynWVC2} & \mc{2}{|c}{HILS} & \mc{2}{|c}{Cyclic-Fast} & \mc{2}{|c}{Cyclic-Strong} \\ 
  \hline  

  \mc{1}{l}{\cellcolor{lightgray!50}{\texttt{buddha}}} & \mc{1}{|r}{\cellcolor{lightgray!50}{797.35}} & \mc{1}{r}{\cellcolor{lightgray!50}{\numprint{56757052}}} & \mc{1}{|r}{\cellcolor{lightgray!50}{999.94}} & \mc{1}{r}{\cellcolor{lightgray!50}{\numprint{55490134}}} & \mc{1}{|r}{\cellcolor{lightgray!50}{1.75}} & \mc{1}{r}{\textbf{\cellcolor{lightgray!50}{\numprint{57555880}}}} & \mc{1}{|r}{\cellcolor{lightgray!50}{1.77}} & \mc{1}{r}{\textbf{\cellcolor{lightgray!50}{\numprint{57555880}}}} \\  

  \mc{1}{l}{\cellcolor{lightgray!50}{\texttt{dragon}}} & \mc{1}{|r}{\cellcolor{lightgray!50}{981.51}} & \mc{1}{r}{\cellcolor{lightgray!50}{\numprint{7944042}}} & \mc{1}{|r}{\cellcolor{lightgray!50}{996.01}} & \mc{1}{r}{\cellcolor{lightgray!50}{\numprint{7940422}}} & \mc{1}{|r}{\cellcolor{lightgray!50}{0.21}} & \mc{1}{r}{\textbf{\cellcolor{lightgray!50}{\numprint{7956530}}}} & \mc{1}{|r}{\cellcolor{lightgray!50}{0.22}} & \mc{1}{r}{\textbf{\cellcolor{lightgray!50}{\numprint{7956530}}}} \\  

  \mc{1}{l}{\cellcolor{lightgray!50}{\texttt{ecat}}} & \mc{1}{|r}{\cellcolor{lightgray!50}{542.87}} & \mc{1}{r}{\cellcolor{lightgray!50}{\numprint{36129804}}} & \mc{1}{|r}{\cellcolor{lightgray!50}{999.91}} & \mc{1}{r}{\cellcolor{lightgray!50}{\numprint{35512644}}} & \mc{1}{|r}{\cellcolor{lightgray!50}{2.19}} & \mc{1}{r}{\textbf{\cellcolor{lightgray!50}{\numprint{36650298}}}} & \mc{1}{|r}{\cellcolor{lightgray!50}{2.29}} & \mc{1}{r}{\textbf{\cellcolor{lightgray!50}{\numprint{36650298}}}} \\ 
  \hline  

  \mc{1}{l}{Solved instances} & \mc{2}{|r}{} & \mc{2}{|r}{} & \mc{2}{|r}{100.0\% (15/15)} & \mc{2}{|r}{100.0\% (15/15)} \\  

  \mc{1}{l}{Optimal weight} & \mc{2}{|r}{0.0\% (0/15)} & \mc{2}{|r}{0.0\% (0/15)} & \mc{2}{|r}{} & \mc{2}{|r}{} \\ 
  \hline \hline  

  \mc{1}{l}{FE instances} & \mc{2}{|c}{DynWVC1} & \mc{2}{|c}{HILS} & \mc{2}{|c}{Cyclic-Fast} & \mc{2}{|c}{Cyclic-Strong} \\ 
  \hline  

  \mc{1}{l}{\texttt{fe\_ocean}} & \mc{1}{|r}{983.53} & \mc{1}{r}{\textbf{\numprint{7222521}}} & \mc{1}{|r}{999.57} & \mc{1}{r}{\numprint{7069279}} & \mc{1}{|r}{18.85} & \mc{1}{r}{\numprint{6591832}} & \mc{1}{|r}{19.04} & \mc{1}{r}{\numprint{6591537}} \\  

  \mc{1}{l}{\cellcolor{lightgray!50}{\texttt{fe\_sphere}}} & \mc{1}{|r}{\cellcolor{lightgray!50}{875.87}} & \mc{1}{r}{\cellcolor{lightgray!50}{\numprint{616978}}} & \mc{1}{|r}{\cellcolor{lightgray!50}{843.67}} & \mc{1}{r}{\cellcolor{lightgray!50}{\numprint{616528}}} & \mc{1}{|r}{\cellcolor{lightgray!50}{0.63}} & \mc{1}{r}{\textbf{\cellcolor{lightgray!50}{\numprint{617816}}}} & \mc{1}{|r}{\cellcolor{lightgray!50}{0.67}} & \mc{1}{r}{\textbf{\cellcolor{lightgray!50}{\numprint{617816}}}} \\ 
  \hline  

  \mc{1}{l}{Solved instances} & \mc{2}{|r}{} & \mc{2}{|r}{} & \mc{2}{|r}{42.9\% (3/7)} & \mc{2}{|r}{42.9\% (3/7)} \\  

  \mc{1}{l}{Optimal weight} & \mc{2}{|r}{0.0\% (0/3)} & \mc{2}{|r}{0.0\% (0/3)} & \mc{2}{|r}{} & \mc{2}{|r}{} \\  

\end{tabu}
\end{ThreePartTable}

\caption{Best solution found by each algorithm and time (in seconds) required to compute it. The global best solution is highlighted in bold. Rows are highlighted in gray if one of our exact solvers is able to solve the corresponding instances.}
\label{best_weight_table_all}
\end{table*}

Next, we examine the number of solved instances and the time required to solve them.
For this purpose, Figure~\ref{cactus_plot} shows cactus plots for the number of solved instances over time.
First, we can see that \cyclicStrong\ was able to solve the most instances overall (\numprint{68} out of \numprint{87} instances). 
To be more specific, \cyclicStrong\ was able to solve an additional \numprint{11} instances compared to \basicSparse\ and \basicDense.
Of these newly solved instances six are from the OSM family, three from the SNAP family and one additional instance from the FE family.

Comparing the time that our algorithms require to solve the instances with \basicSparse\ and \basicDense, we can see improvements on almost all instances.
Our \cyclicFast\ algorithm is able to find solutions up to an order of magnitude faster than \basicSparse\ and \basicDense\ on five mesh instances,~\numprint{13} OSM instances and three SNAP instances.
On the two OSM instances \texttt{pennsylvania-AM3} and \texttt{utah-AM3} as well as the SNAP instance \texttt{roadNet-CA}, we are up to two orders of magnitude faster.
We attribute this increase in performance to the much smaller reduced graph size, as often a smaller graph size tends to result in finding a solution faster.
Furthermore, the generalized fold reduction that is used in \basicSparse\ and \basicDense\ tends to increase the running time.
Thus, we omitted this reduction rule from our algorithm.

\subsection{Comparison with Heuristic Approaches.}\label{sota_comp}
In the following we provide a comparison of our algorithms with heuristic state-of-the-art approaches.
For this purpose, we also include the two local searches \dynWVC\ and \hils\ in addition to the two branch-and-reduce algorithms \basicSparse\ and \basicDense.
For \dynWVC\, we use both configurations \dynWVCOne\ and \dynWVCTwo\ described by Cai~\etal~\cite{cai-dynwvc}.
For all algorithms we compare both the best achievable weighted independent set, as well as their convergence behavior regarding solution quality.
An overview of the maximum weight and the minimum time required to obtain it is given in Table~\ref{best_weight_table_all}.
Furthermore, for each exact algorithm the number of solved instances is shown, whereas for heuristic methods the number of instances on which they are also able to find an optimal solution is given.
However, note that the heuristic methods tested are not able to verify the optimality of the solution they computed.
For the individual instance families, we list either \dynWVCOne\ or \dynWVCTwo\ depending on which of the two configurations provides better performance.
Finally, we omit \basicSparse, \basicDense\ and \nonIncreasing, as these are outperformed by either \cyclicFast\ and \cyclicStrong\ as presented in the previous section.
A complete overview of the solution sizes and running times for each algorithm is given in Appendix~\ref{sota_tables}.

Considering the OSM family, we see that our algorithms are able to compute an optimal solution on~\numprint{22} of the~\numprint{34} instances tested.
However, \hils\ was also able to calculate a solution of optimal size on all these instances.
Considering the OSM family, we can see that \hils\ calculates optimal solutions on all~\numprint{22} of the~\numprint{34} instances that can be solved by our algorithm \cyclicStrong.
In contrast, \dynWVC\ is only able to do so on~\numprint{15} of the~\numprint{22} instances.
Furthermore, on ten of the remaining~\numprint{12} instances which our algorithms are not able to solve, \hils\ is able to calculate the best solution.
Finally, when comparing the time required to compute the best solution, we find that \hils\ generally performs better than the other algorithms.

Looking at the SNAP instances, we have already seen that \cyclicFast\ and \cyclicStrong\ can solve~\numprint{28} of the~\numprint{31} instances optimally.
In contrast, \hils\ can only calculate optimal solutions on~\numprint{16} of these~\numprint{28} instances.
Furthermore, \dynWVC\ is able to obtain optimal solutions on eight of the solved instances.
For the three unsolved instances, \cyclicStrong\ computes the best solution on \texttt{as-skitter} and \texttt{soc-LiveJournal1}, while \dynWVCOne\ obtains it on \texttt{soc-pokec-relationships}.
In terms of running time, we can see that both \dynWVC\ and \hils\ are often orders of magnitude slower than our algorithms in achieving their optimal solution.

On the mesh instances, we can observe a similar pattern as for the SNAP instances.
Our algorithms \cyclicFast\ and \cyclicStrong\ are able to solve all instances optimally and always need less than three seconds to obtain them.
On the other hand, none of the evaluated local searches is able to compute an optimal solution on a single instance and are slower than our algorithms by \emph{orders of magnitude}.

Finally, on the FE family, neither \dynWVC\ nor \hils\ are able to obtain a solution of equal weight on any of the three instances solved by our algorithms.
However, considering the unsolved instances, our algorithms are only able to compute the best solution on \texttt{fe\_body}.
On all remaining instances, one of the two \dynWVC\ configurations calculates the best solution.

\section{Conclusion and Future Work}
\label{sec:conclusion}
In this work, we engineered two new algorithms for finding maximum weight independent sets.
Our algorithms use novel transformations that make heavy use of the struction method.
In general, the struction method can be classified as a transformation that reduces the independence number of a graph.  
One caveat of the struction method is that it does not guarantee the size of the graph reduces in tandem with its independence number, from which we derive so-called increasing transformations.
We introduced three different types of structions that aim to reduce the number of newly constructed vertices.
We then derived special cases of these struction variants that can be
efficiently applied in practice. 
Our experimental evaluation indicates that our techniques outperform existing
algorithms on a wide variety of instances. In particular, with the exception of a single instance, our transformations produce the smallest-known reduced graphs and, when performed before branch-and-reduce, solve more instances than existing exact algorithms---at times even solving instances faster than heuristic approaches. Of particular interest for future work is engineering increasing transformations that are efficient enough to be used throughout recursion---a more general \emph{branch-and-transform} technique. %
Further work includes evaluating the conic reduction~\cite{lozin2000conic} or clique reduction~\cite{lovasz1986matching}, which are similar to struction. 

\renewcommand\bibsection{\section*{\refname}}
\bibliographystyle{abbrvnat}
\bibliography{mwis}

\begin{appendix}
\ifAppendix
\clearpage
\clearpage
\onecolumn
\section{Figures Illustrating the Application of Struction Variants}
\label{appendix:figure}
\begin{figure*}[htb]
	\centering
	\begin{subfigure}[b]{0.25\textwidth}
		\centering
		\includegraphics[width=\textwidth]{figs/original_graph.pdf}
		\caption{Original Graph}
	\end{subfigure}
	\begin{subfigure}[b]{0.25\textwidth}
		\centering
		\includegraphics[width=\textwidth]{figs/original_struction.pdf}
		\caption{Original Struction}
		\label{fig:original_struction}
	\end{subfigure}
	\begin{subfigure}[b]{0.25\textwidth}
		\centering
		\includegraphics[width=\textwidth]{figs/modified_struction.pdf}
		\caption{Modified Struction}
		\label{fig:modified_struction}
	\end{subfigure}
	\caption{Application of original struction and modified struction.
	Vertices representing the same independent set in the different graphs are highlighted in gray.}
\end{figure*}

\begin{figure*}[htb]
	\centering
	\begin{subfigure}[t]{0.25\textwidth}
		\centering
		\includegraphics[width=\textwidth]{figs/original_graph_.pdf}
		\caption{Original Graph}
	\end{subfigure}
	\begin{subfigure}[t]{0.25\textwidth}
		\centering
		\includegraphics[width=\textwidth]{figs/extended_struction.pdf}
		\caption{Extended Struction}
		\label{fig:extended_struction}
	\end{subfigure}
	\begin{subfigure}[t]{0.25\textwidth}
		\centering
		\includegraphics[width=\textwidth]{figs/extended_reduced_struction.pdf}
		\caption{Extended Reduced Struction}
		\label{fig:extended_reduced_struction}
	\end{subfigure}
	\caption{Application of extended struction and extended reduced struction.
	Vertices representing the same independent set in the different graphs are highlighted in gray.
	We assume some weight constraints in the original graph for the construction in b) and c):
$w(1)~>w(0)$, $w(2)~>w(0)$ and $w(3)~+~w(4)~+~w(5)~\leq~w(0)$.}
\end{figure*}
\fi{} 



\ifAppendix
\section{Pseudocode for Branch-and-Reduce}
\begin{algorithm}[t]
\SetAlgoLined
\begin{algorithmic}
	\STATE   \textbf{input} graph $G=(V,E)$, current solution weight $c$ (initially zero), best solution weight $\mathcal{W}$ (initially zero)
        \vspace*{-.45cm}
	\STATE   \textbf{procedure} Solve($G$, $c$, $\mathcal{W}$)
	\STATE   \quad $(G,c) \leftarrow$ Reduce$(G, c)$
	\STATE   \quad \textbf{if} $\mathcal{W} = 0$ \textbf{then} $\mathcal{W}  \leftarrow$ $c+\mathrm{ILS}(G)$ 
	\STATE   \quad \textbf{if} $c$ + UpperBound($G$) $\leq \mathcal{W}$ \textbf{then} \textbf{return} $\mathcal{W}$
	\STATE   \quad \textbf{if} $G$ is empty \textbf{then} \textbf{return} $\max\{\mathcal{W}, c\}$
	\STATE   \quad \textbf{if} $G$ is not connected \textbf{then}
	\STATE   \quad \quad \textbf{for all} $G_i \in $ Components($G$) \textbf{do}
	\STATE   \quad \quad \quad $c \leftarrow c + \text{Solve}$($G_i$, 0, 0)
	
	\STATE   \quad \quad \textbf{return} $\max(\mathcal{W},c)$
	\STATE   \quad $(G_1, c_1), (G_2, c_2) \leftarrow $ Branch$(G, c)$
	\STATE   \quad  \COMMENT{Run 1st case, update currently best solution}
	\STATE   \quad $\mathcal{W} \leftarrow $ Solve$(G_1, c_1, \mathcal{W})$ 
	\STATE   \quad \COMMENT{Use updated $\mathcal{W}$ to shrink the search space}
	\STATE   \quad $\mathcal{W}\leftarrow $ Solve$(G_2, c_2, \mathcal{W})$
	\STATE   \textbf{return} $\mathcal{W}$
\end{algorithmic}

\caption{Branch-and-Reduce Algorithm for MWIS}
\label{branchreducelabel}
  \label{branch_reduce_code}
\end{algorithm}
\fi{} 

\section{Graph Properties}\label{properties}

\begin{table}[h!]
\scriptsize
\centering
\begin{ThreePartTable}
\begin{tabu}{ccc}

  \mc{1}{l|}{Graph} & \mc{1}{r|}{$|V|$} & \mc{1}{r}{|E|} \\ 
\midrule
  \mc{1}{l|}{\texttt{fe\_4elt2}}  & \mc{1}{r|}{\numprint{11143}}  & \mc{1}{r}{\numprint{65636}} \\

  \mc{1}{l|}{\texttt{fe\_body}}   & \mc{1}{r|}{\numprint{45087}}  & \mc{1}{r}{\numprint{327468}} \\

  \mc{1}{l|}{\texttt{fe\_ocean}}  & \mc{1}{r|}{\numprint{143437}} & \mc{1}{r}{\numprint{819186}} \\

  \mc{1}{l|}{\texttt{fe\_pwt}}    & \mc{1}{r|}{\numprint{36519}}  & \mc{1}{r}{\numprint{289588}} \\

  \mc{1}{l|}{\texttt{fe\_rotor}}  & \mc{1}{r|}{\numprint{99617}}  & \mc{1}{r}{\numprint{1324862}} \\

  \mc{1}{l|}{\texttt{fe\_sphere}} & \mc{1}{r|}{\numprint{16386}}  & \mc{1}{r}{\numprint{98304}} \\

  \mc{1}{l|}{\texttt{fe\_tooth}}  & \mc{1}{r|}{\numprint{78136}}  & \mc{1}{r}{\numprint{905182}} \\

\end{tabu}
\end{ThreePartTable}

\caption{Properties of FE instances}
\label{basic_property_table_fe}
\end{table}

\begin{table}[h!]
\scriptsize
\centering
\begin{ThreePartTable}
\begin{tabu}{ccc}

  \mc{1}{l|}{Graph} & \mc{1}{r|}{$|V|$} & \mc{1}{r}{|E|} \\ 
\midrule

  \mc{1}{l|}{\texttt{beethoven}} & \mc{1}{r|}{\numprint{4419}}    & \mc{1}{r}{\numprint{12982}} \\

  \mc{1}{l|}{\texttt{blob}}      & \mc{1}{r|}{\numprint{16068}}   & \mc{1}{r}{\numprint{48204}} \\

  \mc{1}{l|}{\texttt{buddha}}    & \mc{1}{r|}{\numprint{1087716}} & \mc{1}{r}{\numprint{3263148}} \\

  \mc{1}{l|}{\texttt{bunny}}     & \mc{1}{r|}{\numprint{68790}}   & \mc{1}{r}{\numprint{206034}} \\

  \mc{1}{l|}{\texttt{cow}}       & \mc{1}{r|}{\numprint{5036}}    & \mc{1}{r}{\numprint{14732}} \\

  \mc{1}{l|}{\texttt{dragon}}    & \mc{1}{r|}{\numprint{150000}}  & \mc{1}{r}{\numprint{450000}} \\

  \mc{1}{l|}{\texttt{dragonsub}} & \mc{1}{r|}{\numprint{600000}}  & \mc{1}{r}{\numprint{1800000}} \\

  \mc{1}{l|}{\texttt{ecat}}      & \mc{1}{r|}{\numprint{684496}}  & \mc{1}{r}{\numprint{2053488}} \\

  \mc{1}{l|}{\texttt{face}}      & \mc{1}{r|}{\numprint{22871}}   & \mc{1}{r}{\numprint{68108}} \\

  \mc{1}{l|}{\texttt{fandisk}}   & \mc{1}{r|}{\numprint{8634}}    & \mc{1}{r}{\numprint{25636}} \\

  \mc{1}{l|}{\texttt{feline}}    & \mc{1}{r|}{\numprint{41262}}   & \mc{1}{r}{\numprint{123786}} \\

  \mc{1}{l|}{\texttt{gameguy}}   & \mc{1}{r|}{\numprint{42623}}   & \mc{1}{r}{\numprint{127700}} \\

  \mc{1}{l|}{\texttt{gargoyle}}  & \mc{1}{r|}{\numprint{20000}}   & \mc{1}{r}{\numprint{60000}} \\

  \mc{1}{l|}{\texttt{turtle}}    & \mc{1}{r|}{\numprint{267534}}  & \mc{1}{r}{\numprint{802356}} \\

  \mc{1}{l|}{\texttt{venus}}     & \mc{1}{r|}{\numprint{5672}}    & \mc{1}{r}{\numprint{17016}} \\

  \end{tabu}
\end{ThreePartTable}

\caption{Properties of mesh instances}
\label{basic_property_table_mesh}
\end{table}

\begin{table}[h!]
\centering
\scriptsize
\begin{ThreePartTable}
\begin{tabu}{ccc}

  \mc{1}{l|}{Graph} & \mc{1}{r|}{$|V|$} & \mc{1}{r}{|E|} \\ 
\midrule
  \mc{1}{l|}{\texttt{as-skitter}}              & \mc{1}{r|}{\numprint{1696415}} & \mc{1}{r}{\numprint{22190596}} \\

  \mc{1}{l|}{\texttt{ca-AstroPh}}              & \mc{1}{r|}{\numprint{18772}}   & \mc{1}{r}{\numprint{396100}} \\

  \mc{1}{l|}{\texttt{ca-CondMat}}              & \mc{1}{r|}{\numprint{23133}}   & \mc{1}{r}{\numprint{186878}} \\

  \mc{1}{l|}{\texttt{ca-GrQc}}                 & \mc{1}{r|}{\numprint{5242}}    & \mc{1}{r}{\numprint{28968}} \\

  \mc{1}{l|}{\texttt{ca-HepPh}}                & \mc{1}{r|}{\numprint{12008}}   & \mc{1}{r}{\numprint{236978}} \\

  \mc{1}{l|}{\texttt{ca-HepTh}}                & \mc{1}{r|}{\numprint{9877}}    & \mc{1}{r}{\numprint{51946}} \\

  \mc{1}{l|}{\texttt{email-Enron}}             & \mc{1}{r|}{\numprint{36692}}   & \mc{1}{r}{\numprint{367662}} \\

  \mc{1}{l|}{\texttt{email-EuAll}}             & \mc{1}{r|}{\numprint{265214}}  & \mc{1}{r}{\numprint{728962}} \\

  \mc{1}{l|}{\texttt{p2p-Gnutella04}}          & \mc{1}{r|}{\numprint{10876}}   & \mc{1}{r}{\numprint{79988}} \\

  \mc{1}{l|}{\texttt{p2p-Gnutella05}}          & \mc{1}{r|}{\numprint{8846}}    & \mc{1}{r}{\numprint{63678}} \\

  \mc{1}{l|}{\texttt{p2p-Gnutella06}}          & \mc{1}{r|}{\numprint{8717}}    & \mc{1}{r}{\numprint{63050}} \\

  \mc{1}{l|}{\texttt{p2p-Gnutella08}}          & \mc{1}{r|}{\numprint{6301}}    & \mc{1}{r}{\numprint{41554}} \\

  \mc{1}{l|}{\texttt{p2p-Gnutella09}}          & \mc{1}{r|}{\numprint{8114}}    & \mc{1}{r}{\numprint{52026}} \\

  \mc{1}{l|}{\texttt{p2p-Gnutella24}}          & \mc{1}{r|}{\numprint{26518}}   & \mc{1}{r}{\numprint{130738}} \\

  \mc{1}{l|}{\texttt{p2p-Gnutella25}}          & \mc{1}{r|}{\numprint{22687}}   & \mc{1}{r}{\numprint{109410}} \\

  \mc{1}{l|}{\texttt{p2p-Gnutella30}}          & \mc{1}{r|}{\numprint{36682}}   & \mc{1}{r}{\numprint{176656}} \\

  \mc{1}{l|}{\texttt{p2p-Gnutella31}}          & \mc{1}{r|}{\numprint{62586}}   & \mc{1}{r}{\numprint{295784}} \\

  \mc{1}{l|}{\texttt{roadNet-CA}}              & \mc{1}{r|}{\numprint{1965206}} & \mc{1}{r}{\numprint{5533214}} \\

  \mc{1}{l|}{\texttt{roadNet-PA}}              & \mc{1}{r|}{\numprint{1088092}} & \mc{1}{r}{\numprint{3083796}} \\

  \mc{1}{l|}{\texttt{roadNet-TX}}              & \mc{1}{r|}{\numprint{1379917}} & \mc{1}{r}{\numprint{3843320}} \\

  \mc{1}{l|}{\texttt{soc-Epinions1}}           & \mc{1}{r|}{\numprint{75879}}   & \mc{1}{r}{\numprint{811480}} \\

  \mc{1}{l|}{\texttt{soc-LiveJournal1}}        & \mc{1}{r|}{\numprint{4847571}} & \mc{1}{r}{\numprint{85702474}} \\

  \mc{1}{l|}{\texttt{soc-Slashdot0811}}        & \mc{1}{r|}{\numprint{77360}}   & \mc{1}{r}{\numprint{938360}} \\

  \mc{1}{l|}{\texttt{soc-Slashdot0902}}        & \mc{1}{r|}{\numprint{82168}}   & \mc{1}{r}{\numprint{1008460}} \\

  \mc{1}{l|}{\texttt{soc-pokec-relationships}} & \mc{1}{r|}{\numprint{1632803}} & \mc{1}{r}{\numprint{44603928}} \\

  \mc{1}{l|}{\texttt{web-BerkStan}}            & \mc{1}{r|}{\numprint{685230}}  & \mc{1}{r}{\numprint{13298940}} \\

  \mc{1}{l|}{\texttt{web-Google}}              & \mc{1}{r|}{\numprint{875713}}  & \mc{1}{r}{\numprint{8644102}} \\

  \mc{1}{l|}{\texttt{web-NotreDame}}           & \mc{1}{r|}{\numprint{325729}}  & \mc{1}{r}{\numprint{2180216}} \\

  \mc{1}{l|}{\texttt{web-Stanford}}            & \mc{1}{r|}{\numprint{281903}}  & \mc{1}{r}{\numprint{3985272}} \\

  \mc{1}{l|}{\texttt{wiki-Talk}}               & \mc{1}{r|}{\numprint{2394385}} & \mc{1}{r}{\numprint{9319130}} \\

  \mc{1}{l|}{\texttt{wiki-Vote}}               & \mc{1}{r|}{\numprint{7115}}    & \mc{1}{r}{\numprint{201524}} \\

\end{tabu}
\end{ThreePartTable}

\caption{Properties of SNAP instances}
\label{basic_property_table_snap}
\end{table}

\begin{table}[h!]
\centering
\scriptsize
\begin{ThreePartTable}
\begin{tabu}{ccc}

  \mc{1}{l|}{Graph} & \mc{1}{r|}{$|V|$} & \mc{1}{r}{|E|} \\ 
  \midrule

\mc{1}{l|}{\texttt{alabama-AM2}}              & \mc{1}{r|}{\numprint{1164}}  & \mc{1}{r}{\numprint{38772}} \\

\mc{1}{l|}{\texttt{alabama-AM3}}              & \mc{1}{r|}{\numprint{3504}}  & \mc{1}{r}{\numprint{619328}} \\

\mc{1}{l|}{\texttt{district-of-columbia-AM1}} & \mc{1}{r|}{\numprint{2500}}  & \mc{1}{r}{\numprint{49302}} \\

\mc{1}{l|}{\texttt{district-of-columbia-AM2}} & \mc{1}{r|}{\numprint{13597}} & \mc{1}{r}{\numprint{3219590}} \\

\mc{1}{l|}{\texttt{district-of-columbia-AM3}} & \mc{1}{r|}{\numprint{46221}} & \mc{1}{r}{\numprint{55458274}} \\

\mc{1}{l|}{\texttt{florida-AM2}}              & \mc{1}{r|}{\numprint{1254}}  & \mc{1}{r}{\numprint{33872}} \\

\mc{1}{l|}{\texttt{florida-AM3}}              & \mc{1}{r|}{\numprint{2985}}  & \mc{1}{r}{\numprint{308086}} \\

\mc{1}{l|}{\texttt{georgia-AM3}}              & \mc{1}{r|}{\numprint{1680}}  & \mc{1}{r}{\numprint{148252}} \\

\mc{1}{l|}{\texttt{greenland-AM3}}            & \mc{1}{r|}{\numprint{4986}}  & \mc{1}{r}{\numprint{7304722}} \\

\mc{1}{l|}{\texttt{hawaii-AM2}}               & \mc{1}{r|}{\numprint{2875}}  & \mc{1}{r}{\numprint{530316}} \\

\mc{1}{l|}{\texttt{hawaii-AM3}}               & \mc{1}{r|}{\numprint{28006}} & \mc{1}{r}{\numprint{98889842}} \\

\mc{1}{l|}{\texttt{idaho-AM3}}                & \mc{1}{r|}{\numprint{4064}}  & \mc{1}{r}{\numprint{7848160}} \\

\mc{1}{l|}{\texttt{kansas-AM3}}               & \mc{1}{r|}{\numprint{2732}}  & \mc{1}{r}{\numprint{1613824}} \\

\mc{1}{l|}{\texttt{kentucky-AM2}}             & \mc{1}{r|}{\numprint{2453}}  & \mc{1}{r}{\numprint{1286856}} \\

\mc{1}{l|}{\texttt{kentucky-AM3}}             & \mc{1}{r|}{\numprint{19095}} & \mc{1}{r}{\numprint{119067260}} \\

\mc{1}{l|}{\texttt{louisiana-AM3}}            & \mc{1}{r|}{\numprint{1162}}  & \mc{1}{r}{\numprint{74154}} \\

\mc{1}{l|}{\texttt{maryland-AM3}}             & \mc{1}{r|}{\numprint{1018}}  & \mc{1}{r}{\numprint{190830}} \\

\mc{1}{l|}{\texttt{massachusetts-AM2}}        & \mc{1}{r|}{\numprint{1339}}  & \mc{1}{r}{\numprint{70898}} \\

\mc{1}{l|}{\texttt{massachusetts-AM3}}        & \mc{1}{r|}{\numprint{3703}}  & \mc{1}{r}{\numprint{1102982}} \\

\mc{1}{l|}{\texttt{mexico-AM3}}               & \mc{1}{r|}{\numprint{1096}}  & \mc{1}{r}{\numprint{94262}} \\

\mc{1}{l|}{\texttt{new-hampshire-AM3}}        & \mc{1}{r|}{\numprint{1107}}  & \mc{1}{r}{\numprint{36042}} \\

\mc{1}{l|}{\texttt{north-carolina-AM3}}       & \mc{1}{r|}{\numprint{1557}}  & \mc{1}{r}{\numprint{473478}} \\

\mc{1}{l|}{\texttt{oregon-AM2}}               & \mc{1}{r|}{\numprint{1325}}  & \mc{1}{r}{\numprint{115034}} \\

\mc{1}{l|}{\texttt{oregon-AM3}}               & \mc{1}{r|}{\numprint{5588}}  & \mc{1}{r}{\numprint{5825402}} \\

\mc{1}{l|}{\texttt{pennsylvania-AM3}}         & \mc{1}{r|}{\numprint{1148}}  & \mc{1}{r}{\numprint{52928}} \\

\mc{1}{l|}{\texttt{rhode-island-AM2}}         & \mc{1}{r|}{\numprint{2866}}  & \mc{1}{r}{\numprint{590976}} \\

\mc{1}{l|}{\texttt{rhode-island-AM3}}         & \mc{1}{r|}{\numprint{15124}} & \mc{1}{r}{\numprint{25244438}} \\

\mc{1}{l|}{\texttt{utah-AM3}}                 & \mc{1}{r|}{\numprint{1339}}  & \mc{1}{r}{\numprint{85744}} \\

\mc{1}{l|}{\texttt{vermont-AM3}}              & \mc{1}{r|}{\numprint{3436}}  & \mc{1}{r}{\numprint{2272328}} \\

\mc{1}{l|}{\texttt{virginia-AM2}}             & \mc{1}{r|}{\numprint{2279}}  & \mc{1}{r}{\numprint{120080}} \\

\mc{1}{l|}{\texttt{virginia-AM3}}             & \mc{1}{r|}{\numprint{6185}}  & \mc{1}{r}{\numprint{1331806}} \\

\mc{1}{l|}{\texttt{washington-AM2}}           & \mc{1}{r|}{\numprint{3025}}  & \mc{1}{r}{\numprint{304898}} \\

\mc{1}{l|}{\texttt{washington-AM3}}           & \mc{1}{r|}{\numprint{10022}} & \mc{1}{r}{\numprint{4692426}} \\

\mc{1}{l|}{\texttt{west-virginia-AM3}}        & \mc{1}{r|}{\numprint{1185}}  & \mc{1}{r}{\numprint{251240}} \\

\end{tabu}
\end{ThreePartTable}

\caption{Properties of OSM instances}
\label{basic_property_table_osm}
\end{table}
\vfill
\pagebreak

\section{Proofs}
\label{appendix:proofs}

\begin{lemma} After using the modified weighted struction, the equality
$\alpha_w(G)~=~\alpha_w(G')~+~w(v)$ holds.
\end{lemma}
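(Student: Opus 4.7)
The plan is to prove the equality $\alpha_w(G) = \alpha_w(G') + w(v)$ via two matching inequalities: given an MWIS $I'$ of $G'$, construct an independent set $I$ of $G$ with $w(I) = w(I') + w(v)$, and conversely. Both directions are driven by four structural observations about any independent set $I'$ in $G'$: (a) since $N(v)$ is now a clique, $|I' \cap N(v)| \le 1$; (b) since new vertices in different layers are fully interconnected, $I' \cap V_{\mathrm{new}}$ lies in a single layer $L_q$; (c) since each $k \in N(v)$ is adjacent in $G'$ to every new vertex outside its own layer, a neighbor $k \in I'$ and a new vertex $v_{q,y} \in I'$ can coexist only when $k = q$; and (d) two same-layer new vertices $v_{q,y_i}, v_{q,y_j}$ are non-adjacent in $G'$ iff $y_i, y_j$ are non-adjacent in $G$.

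For the direction $\alpha_w(G) \geq \alpha_w(G') + w(v)$, I would first argue that we may assume without loss of generality that whenever $I'$ contains a new vertex $v_{q,y}$, it also contains $q$: adding $q$ preserves independence by (c) and the original struction rule that $v_{q,y}$ is adjacent to every $w \in V \setminus \{v\}$ adjacent to $q$ or $y$, while the weight change is $w(q) - w(v) \ge 0$ because $v$ minimizes weight on $N[v]$. Then I would split on $I' \cap (N(v) \cup V_{\mathrm{new}})$ into three cases: (1) empty, set $I = I' \cup \{v\}$; (2) $\{q\}$ alone, set $I = I'$; (3) $\{q\} \cup \{v_{q,y_1},\ldots,v_{q,y_k}\}$, set $I = (I' \cap V) \cup \{y_1,\ldots,y_k\}$. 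Independence of $I$ in $G$ follows from (b)--(d) and the original-struction adjacency rule, while the weight identity uses the modified assignment $w(v_{q,y}) = w(y)$ and the neighbor-weight decrement.

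For the direction $\alpha_w(G) \leq \alpha_w(G') + w(v)$, note that an MWIS $I$ of $G$ either contains $v$, or satisfies $I \cap N(v) \neq \emptyset$ (else $I \cup \{v\}$ would strictly improve $I$). If $v \in I$, then $I \setminus \{v\} \subseteq V \setminus N[v]$, and $G'$ restricted to this set agrees with $G$, so $I' = I \setminus \{v\}$ is independent in $G'$ with $w(I') = w(I) - w(v)$. If $v \notin I$ and $I \cap N(v) = \{y_1 < \ldots < y_k\}$, set $I' = (I \setminus \{y_2,\ldots,y_k\}) \cup \{v_{y_1,y_2},\ldots,v_{y_1,y_k}\}$; the new vertices all lie in layer $L_{y_1}$ and are pairwise non-adjacent by (d), while the adjacency rules of $G'$ preclude edges from them to $I \cap (V \setminus N[v])$. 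The weight identity once more reduces to $w(v_{y_1,y_j}) = w(y_j)$ together with $w(y_1) \mapsto w(y_1) - w(v)$.

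The main obstacle will be case (3) of the first direction: verifying that replacing $\{v_{q,y_1},\ldots,v_{q,y_k}\}$ by $\{y_1,\ldots,y_k\}$ does not introduce any adjacency to $I' \cap V \setminus \{q\}$. This requires carefully combining (d) for pairwise non-adjacency of the $y_i$, the fact that $v_{q,y_i}$ is adjacent in $G'$ to every original vertex adjacent to $y_i$ in $G$ (so $I' \cap V$ is disjoint from $N(y_i)$), and the observation that $q$ and $y_i$ are non-adjacent in $G$ by the very existence of $v_{q,y_i}$.
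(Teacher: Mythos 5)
Your proof is correct and follows essentially the same two-directional construction as the paper's: given an MWIS of $G$ avoiding $v$, keep the smallest neighbor $x\in I\cap N(v)$ and add the layer-$x$ vertices $v_{x,y_i}$, and conversely replace the new vertices of an MWIS of $G'$ by the corresponding original neighbors. You are in fact more careful than the paper's own terse argument, which omits the independence verifications and the normalization ensuring that $q\in I'$ whenever some $v_{q,y}\in I'$.
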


\begin{proof}
Any maximal independent set in $G$ must contain either $v$ or some
nodes from $N(v)$. Let $I$ be a maximum weight independent set in
$G$, and $\alpha_w(G)$ its weight. If $v\in I$, then there is an
independent set $I^{\prime}$ in $G^{\prime}$ with weight
$\alpha_w(G)-w(v)$, namely $I^{\prime}=I\setminus v$. If
$v\notin I$, that is there are some nodes from $N(v)$ in $I$, then
there is an independent set $I^{\prime}$ in $G^{\prime}$ with weight
$\alpha_w(G)-w(v)$. Let denote the nodes from $I$ that are in
$N(v)$ by $\{x, y_1, y_2,\ldots,y_p\},
x<y_1<y_2<\cdots<y_p$. The independent set $I^{\prime}$ with weight
$\alpha_w(G)-w(v)$ can be constructed, namely $I^{\prime} =
I\setminus \{x, y_1, y_2,\ldots,y_p\} \cup
\{x^{\prime}, v^{\prime}_{x,y_1},
v^{\prime}_{x,y_2},\ldots,v^{\prime}_{x,y_p}\}$.
 
Let $I^{\prime}$ be a maximum weight independent set in $G^{\prime}$
and $\alpha_w(G^{\prime})$ its weight. If any new nodes $x^{\prime},
v^{\prime}_{x,y_1}, v^{\prime}_{x,y_2},\ldots,v^{\prime}_{x,y_p}\in
I^{\prime}$, then there is an independent set $I$ in $G$ with weight
$\alpha_w(G^{\prime})+w(v)$, namely $I=I^{\prime}\setminus
\{x^{\prime}, v^{\prime}_{x,y_1},
v^{\prime}_{x,y_2},\ldots,v^{\prime}_{x,y_p}\} \cup \{x, y_1,
y_2,\ldots,y_p\}$. If there is no new node in $I^{\prime}$,
then there is an independent set $I$ in $G$ with weight
$\alpha_w(G^{\prime})+w(v)$, namely $I=I^{\prime}\cup \{v\}$.
\end{proof}

\begin{lemma} After using the extended weighted struction, the equality
$\alpha_w(G)~=~\alpha_w(G')~+~w(v)$ holds.
\end{lemma}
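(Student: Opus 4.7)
The plan is to establish both inequalities $\alpha_w(G) \geq \alpha_w(G') + w(v)$ and $\alpha_w(G) \leq \alpha_w(G') + w(v)$ by exhibiting the recipe-based transfers between independent sets in the two graphs and verifying their independence and weight. The clique on $\{v_c : c \in C\}$ and the fact that the non-neighborhood $\overline{N}(v)$ is preserved intact in $G'$ are the two structural facts that will do most of the work.

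For the forward inequality, I would start from an MWIS $I'$ of $G'$. Since the vertices $v_c$ form a clique, $I'$ contains at most one new vertex. If none, then $I' \subseteq \overline{N}(v)$, so $I := I' \cup \{v\}$ is independent in $G$ with $w(I) = w(I') + w(v)$. If there is a unique $v_c \in I'$, set $I := (I' \setminus \{v_c\}) \cup c$. The critical verification is that no vertex $x \in I' \cap \overline{N}(v)$ is adjacent in $G$ to any vertex of $c$: by the construction of $G'$, any such $G$-edge would force the edge $\{v_c,x\}$ to exist in $G'$, contradicting $I'$'s independence. The weight bookkeeping $w(v_c) = w(c)-w(v)$ then gives $w(I) = w(I') - w(v_c) + w(c) = w(I') + w(v)$.

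For the reverse inequality, take an MWIS $I$ of $G$ and set $c := I \cap N(v)$. I would split into three cases. (i) If $v \in I$, then $c = \emptyset$ and $I \setminus \{v\} \subseteq \overline{N}(v)$ is directly independent in $G'$ with weight $w(I)-w(v)$. (ii) If $v \notin I$ and $w(c) > w(v)$, so $c \in C$, let $I' := (I \setminus c) \cup \{v_c\}$: independence of $I$ in $G$ forbids any $G$-edge from $I \setminus c$ to $c$, and hence by construction no $G'$-edge from $I \setminus c$ to $v_c$; the weight is $w(I) - w(c) + (w(c)-w(v)) = w(I) - w(v)$. (iii) If $v \notin I$ and $w(c) \leq w(v)$, take $I' := I \setminus c$, which already lies in $\overline{N}(v)$ and has weight $w(I) - w(c) \geq w(I) - w(v)$.

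The main obstacle will be case (iii) of the reverse direction: the ``natural'' companion vertex $v_c$ is simply not present in $G'$ because $c$ failed the admission threshold $w(c) > w(v)$. The resolution is to accept the loss by dropping $c$ entirely, which the threshold makes harmless. The remaining work, chiefly checking that the recipe-defined sets are genuinely independent, reduces to unwinding the adjacency rules attached to the $v_c$ and noting that edges among original vertices in $\overline{N}(v)$ are unchanged between $G$ and $G'$.
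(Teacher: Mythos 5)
Your proof is correct and follows essentially the same route as the paper's: a bidirectional transfer of independent sets between $G$ and $G'$, using the clique on the new vertices to guarantee that at most one $v_c$ can appear in $I'$. In fact your case (iii) is more careful than the paper's own argument, which silently assumes that whenever $v\notin I$ the set $I\cap N(v)$ has weight exceeding $w(v)$ and hence has a corresponding new vertex in $G'$; your observation that simply dropping $c$ costs at most $w(v)$ closes that small gap cleanly.
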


\begin{proof}
Any maximal independent set in $G$ must contain either $v$ or some
nodes from $N(v)$. Let $I$ be a maximum weight independent set in
$G$, and $\alpha_w(G)$ its weight. If $v\in I$, then there is an
independent set $I^{\prime}$ in $G^{\prime}$ with weight
$\alpha_w(G)-w(v)$, namely $I^{\prime}=I\setminus v$. If
$v\notin I$, that is there are some nodes from $N(v)$ in $I$, then
there is an independent set $I^{\prime}$ in $G^{\prime}$ with weight
$\alpha_w(G)-w(v)$. Let denote the nodes from $I$ that are in
$N(v)$ by $\{u_1, u_2,\ldots,u_t\}$. The independent set
$I^{\prime}$ with weight $\alpha_w(G)-w(v)$ can be constructed from
$I$ by deleting nodes $\{u_1, u_2,\ldots,u_t\}$ and adding the one new
node that corresponds to the independent set containing these nodes.
 
Let $I^{\prime}$ be a maximum weight independent set in $G^{\prime}$
and $\alpha_w(G^{\prime})$ its weight. If any new node $y\in
I^{\prime}$ (let nodes $\{u_1, u_2,\ldots,u_t\}$ from $G$ be the nodes
that correspond to the independent set that is represented by $y$),
then there is an independent set $I$ in $G$ with weight
$\alpha_w(G^{\prime})+w(v)$, namely $I=I^{\prime}\setminus y \cup
\{u_1, u_2,\ldots,u_t\}$. If there is no new node in $I^{\prime}$,
then there is an independent set $I$ in $G$ with weight
$\alpha_w(G^{\prime})+w(v)$, namely $I=I^{\prime}\cup \{v\}$.
\end{proof}

\begin{lemma} After using the extended reduced weighted struction, the equality
$\alpha_w(G)~=~\alpha_w(G')~+~w(v)$ holds.
\end{lemma}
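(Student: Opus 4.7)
The strategy is the same two-direction argument used for the modified and extended weighted struction: produce an explicit weight-preserving (up to $w(v)$) bijection between ``canonical'' MWISes of $G$ and $G'$ by invoking the translation map that accompanies the construction, and verify that it outputs independent sets on both sides.

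First, for the direction $\alpha_w(G) \geq \alpha_w(G') + w(v)$, I would take an MWIS $I'$ of $G'$ and argue a WLOG normalisation: since $V_C$ is a clique we have $|I' \cap V_C| \le 1$, and since $v_c$ is \emph{not} adjacent to $v_{c,y}$ (same layer) while the $\overline{N}(v)$-neighbourhood of $v_c$ is contained in that of any same-layer $v_{c,y}$, any MWIS with $v_{c,y}\in I'$ but $v_c\notin I'$ could be augmented by $v_c$; hence we may assume $I'\cap V_C=\emptyset \Rightarrow I'\cap V_E=\emptyset$. With this normalisation I would split into two cases. If $I'\cap V_C=\emptyset$, then $I'\subseteq \overline{N}(v)$, so $I:=I'\cup\{v\}$ is independent in $G$ with $w(I)=w(I')+w(v)$. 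If $I'\cap V_C=\{v_c\}$ and $I'\cap V_E=\{v_{c,y}:y\in S\}$ for some $S\subseteq N(v)$, then the same-layer edge rule (adjacency mirrors $G[N(v)]$) and the construction of $V_E$ guarantee that $c\cup S$ is independent in $G[N(v)]$; setting $I:=(I'\cap V)\cup c\cup S$ gives an independent set in $G$, since the edges between $V_C\cup V_E$ and $\overline{N}(v)$ faithfully copy the original adjacencies of vertices in $c\cup S$. The weight identity follows from $w(v_c)=w(c)-w(v)$ and $w(v_{c,y})=w(y)$.

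For the reverse direction $\alpha_w(G)\le \alpha_w(G')+w(v)$, I would take an MWIS $I$ of $G$. If $v\in I$ then $I\setminus\{v\}\subseteq \overline{N}(v)$ is independent in $G'$ with weight $w(I)-w(v)$. Otherwise $S:=I\cap N(v)$ is non-empty (else replacing $S$ by $\{v\}$ would not decrease weight), and by maximality $w(S)>w(v)$. By the defining property of $C'$ (every independent set in $N(v)$ of weight $>w(v)$ contains a minimal such subset, which lies in $C'$), we may choose $c\in C'$ with $c\subseteq S$. Put $I' := (I\setminus N[v])\cup \{v_c\}\cup \{v_{c,y}:y\in S\setminus c\}$. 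A short check using the edge rules (distinct layers are always separated; within layer $c$ the $V_E$-edges copy $G[N(v)]$; $V_C\to V_E$ edges only go across layers; edges to $\overline{N}(v)$ respect the original adjacencies) shows $I'$ is independent in $G'$; telescoping the weights gives $w(I')=w(I\setminus N[v])+(w(c)-w(v))+w(S\setminus c)=w(I)-w(v)$.

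\textbf{Main obstacle.} The delicate step is the reverse direction: we must be sure that a subset $c\in C'$ with $c\subseteq S$ actually exists, which rests on the ``minimality'' characterisation of $C'$ and the monotonicity of weight under subset-taking, and we must check that the induced $V_E$-vertices $\{v_{c,y}:y\in S\setminus c\}$ really exist (i.e.\ that $c$ can be extended by each such $y$, which holds because $c\cup (S\setminus c)=S$ is independent in $N(v)$). The remaining work is a routine edge-by-edge verification enabled by the constructive edge rules of the extended reduced struction.
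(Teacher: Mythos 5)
Your proposal is correct, and it is substantially more explicit than what the paper provides: the paper's entire proof of this lemma is a one-sentence deferral ("concludes from the proof for the extended weighted struction and the modified weighted struction"), whereas you actually carry out the combination. Your argument is in the same spirit as the paper's two preceding proofs (a two-direction translation of independent sets), but you supply the two ingredients that the deferral glosses over and that are genuinely specific to the reduced variant: (i) in the direction from $G$ to $G'$, the existence of some $c\in C'$ with $c\subseteq S$, which rests on the minimality characterisation of $C'$ and guarantees that the vertices $v_{c,y}$ for $y\in S\setminus c$ actually exist; and (ii) in the direction from $G'$ to $G$, the normalisation showing that a maximum weight independent set meeting $V_E$ may be assumed to meet $V_C$ (via the neighbourhood-containment argument $N(v_c)\cap\overline{N}(v)\subseteq N(v_{c,y})\cap\overline{N}(v)$), which is needed before the case split is exhaustive. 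Both of these checks go through under the construction's edge rules, and your weight bookkeeping telescopes correctly in each direction, so the proposal fills a real gap in the paper's exposition rather than merely restating it. The only cosmetic quibble is that the reverse-direction reduction to $w(S)>w(v)$ is phrased as a maximality remark; it is cleaner to say that if $w(S)\le w(v)$ one may pass to the alternative maximum weight independent set $(I\setminus S)\cup\{v\}$ and fall into the first case.
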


\begin{proof}
Concludes from the proof for extended weighted struction and the
modified weighted struction.
\end{proof}
\onecolumn 
\clearpage

\begin{landscape}
\section{Branch-and-Reduce Comparison}\label{bnr_tables}
\begin{table}[h!]
\centering
\scriptsize
\begin{ThreePartTable}
\begin{tabular}{cccccccccccccccc}

  \mc{1}{l|}{Graph} & \mc{1}{r}{$n$} & \mc{1}{r}{$t_{r}$} & \mc{1}{r|}{$t_{t}$} & \mc{1}{r}{$n$} & \mc{1}{r}{$t_{r}$} & \mc{1}{r|}{$t_{t}$} & \mc{1}{r}{$n$} & \mc{1}{r}{$t_{r}$} & \mc{1}{r|}{$t_{t}$} & \mc{1}{r}{$n$} & \mc{1}{r}{$t_{r}$} & \mc{1}{r|}{$t_{t}$} & \mc{1}{r}{$n$} & \mc{1}{r}{$t_{r}$} & \mc{1}{r}{$t_{t}$} \\ 
  \hline  

  \mc{1}{l|}{FE instances} & \mc{3}{c|}{\basicDense} & \mc{3}{c|}{\basicSparse} & \mc{3}{c|}{\nonIncreasing} & \mc{3}{c|}{\cyclicFast} & \mc{3}{c}{\cyclicStrong} \\  
  \hline  

  \rowcolor{lightergray} \mc{1}{l|}{\texttt{fe\_4elt2}} & \mc{1}{r}{\numprint{8580}} & \mc{1}{r}{0.29} & \mc{1}{r|}{-} & \mc{1}{r}{\numprint{8578}} & \mc{1}{r}{0.87} & \mc{1}{r|}{-} & \mc{1}{r}{562} & \mc{1}{r}{0.10} & \mc{1}{r|}{-} & \mc{1}{r}{0} & \mc{1}{r}{0.12} & \mc{1}{r|}{\textbf{0.13}} & \mc{1}{r}{0} & \mc{1}{r}{0.16} & \mc{1}{r}{0.17} \\  
  \mc{1}{l|}{\texttt{fe\_body}} & \mc{1}{r}{\numprint{16107}} & \mc{1}{r}{0.69} & \mc{1}{r|}{-} & \mc{1}{r}{\numprint{15992}} & \mc{1}{r}{3.40} & \mc{1}{r|}{-} & \mc{1}{r}{\numprint{1162}} & \mc{1}{r}{0.16} & \mc{1}{r|}{-} & \mc{1}{r}{625} & \mc{1}{r}{0.44} & \mc{1}{r|}{-} & \mc{1}{r}{553} & \mc{1}{r}{0.94} & \mc{1}{r}{-} \\  
  \mc{1}{l|}{\texttt{fe\_ocean}} & \mc{1}{r}{\numprint{141283}} & \mc{1}{r}{1.05} & \mc{1}{r|}{-} & \mc{1}{r}{0} & \mc{1}{r}{5.94} & \mc{1}{r|}{\textbf{5.99}} & \mc{1}{r}{\numprint{138338}} & \mc{1}{r}{8.90} & \mc{1}{r|}{-} & \mc{1}{r}{\numprint{138134}} & \mc{1}{r}{9.61} & \mc{1}{r|}{-} & \mc{1}{r}{\numprint{138049}} & \mc{1}{r}{10.78} & \mc{1}{r}{-} \\  
  \mc{1}{l|}{\texttt{fe\_pwt}} & \mc{1}{r}{\numprint{34521}} & \mc{1}{r}{0.46} & \mc{1}{r|}{-} & \mc{1}{r}{\numprint{34521}} & \mc{1}{r}{2.70} & \mc{1}{r|}{-} & \mc{1}{r}{\numprint{25550}} & \mc{1}{r}{0.78} & \mc{1}{r|}{-} & \mc{1}{r}{\numprint{20241}} & \mc{1}{r}{1.80} & \mc{1}{r|}{-} & \mc{1}{r}{\numprint{14107}} & \mc{1}{r}{5.65} & \mc{1}{r}{-} \\  
  \mc{1}{l|}{\texttt{fe\_rotor}} & \mc{1}{r}{\numprint{98271}} & \mc{1}{r}{9.80} & \mc{1}{r|}{-} & \mc{1}{r}{\numprint{98271}} & \mc{1}{r}{24.47} & \mc{1}{r|}{-} & \mc{1}{r}{\numprint{91946}} & \mc{1}{r}{4.80} & \mc{1}{r|}{-} & \mc{1}{r}{\numprint{91634}} & \mc{1}{r}{4.82} & \mc{1}{r|}{-} & \mc{1}{r}{\numprint{89647}} & \mc{1}{r}{11.11} & \mc{1}{r}{-} \\  
  \rowcolor{lightergray} \mc{1}{l|}{\texttt{fe\_sphere}} & \mc{1}{r}{\numprint{15269}} & \mc{1}{r}{0.21} & \mc{1}{r|}{-} & \mc{1}{r}{\numprint{15269}} & \mc{1}{r}{1.47} & \mc{1}{r|}{-} & \mc{1}{r}{\numprint{2961}} & \mc{1}{r}{0.34} & \mc{1}{r|}{-} & \mc{1}{r}{147} & \mc{1}{r}{0.62} & \mc{1}{r|}{\textbf{0.83}} & \mc{1}{r}{0} & \mc{1}{r}{0.75} & \mc{1}{r}{0.77} \\  
  \rowcolor{lightergray} \mc{1}{l|}{\texttt{fe\_tooth}} & \mc{1}{r}{\numprint{10922}} & \mc{1}{r}{1.69} & \mc{1}{r|}{-} & \mc{1}{r}{\numprint{10801}} & \mc{1}{r}{3.79} & \mc{1}{r|}{-} & \mc{1}{r}{15} & \mc{1}{r}{0.41} & \mc{1}{r|}{0.46} & \mc{1}{r}{0} & \mc{1}{r}{0.30} & \mc{1}{r|}{\textbf{0.34}} & \mc{1}{r}{0} & \mc{1}{r}{0.28} & \mc{1}{r}{0.32} \\  
  \hline \hline  

  \mc{1}{l|}{OSM instances} & \mc{3}{c|}{\basicDense} & \mc{3}{c|}{\basicSparse} & \mc{3}{c|}{\nonIncreasing} & \mc{3}{c|}{\cyclicFast} & \mc{3}{c|}{\cyclicStrong} \\  
  \hline  

  \rowcolor{lightergray} \mc{1}{l|}{\texttt{alabama-AM2}} & \mc{1}{r}{173} & \mc{1}{r}{0.06} & \mc{1}{r|}{0.31} & \mc{1}{r}{173} & \mc{1}{r}{0.07} & \mc{1}{r|}{0.55} & \mc{1}{r}{0} & \mc{1}{r}{0.01} & \mc{1}{r|}{\textbf{0.01}} & \mc{1}{r}{0} & \mc{1}{r}{0.01} & \mc{1}{r|}{0.01} & \mc{1}{r}{0} & \mc{1}{r}{0.01} & \mc{1}{r}{0.01} \\  

  \rowcolor{lightergray} \mc{1}{l|}{\texttt{alabama-AM3}} & \mc{1}{r}{\numprint{1614}} & \mc{1}{r}{12.05} & \mc{1}{r|}{-} & \mc{1}{r}{\numprint{1614}} & \mc{1}{r}{14.37} & \mc{1}{r|}{-} & \mc{1}{r}{\numprint{1288}} & \mc{1}{r}{0.34} & \mc{1}{r|}{-} & \mc{1}{r}{456} & \mc{1}{r}{1.45} & \mc{1}{r|}{\textbf{3.94}} & \mc{1}{r}{0} & \mc{1}{r}{33.11} & \mc{1}{r}{33.16} \\  

  \mc{1}{l|}{\texttt{district-of-columbia-AM1}} & \mc{1}{r}{800} & \mc{1}{r}{1.22} & \mc{1}{r|}{-} & \mc{1}{r}{800} & \mc{1}{r}{1.28} & \mc{1}{r|}{-} & \mc{1}{r}{367} & \mc{1}{r}{0.03} & \mc{1}{r|}{39.81} & \mc{1}{r}{185} & \mc{1}{r}{0.41} & \mc{1}{r|}{\textbf{0.80}} & \mc{1}{r}{0} & \mc{1}{r}{3.65} & \mc{1}{r}{3.66} \\  

  \rowcolor{lightergray} \mc{1}{l|}{\texttt{district-of-columbia-AM2}} & \mc{1}{r}{\numprint{6360}} & \mc{1}{r}{11.86} & \mc{1}{r|}{-} & \mc{1}{r}{\numprint{6360}} & \mc{1}{r}{14.39} & \mc{1}{r|}{-} & \mc{1}{r}{\numprint{5606}} & \mc{1}{r}{0.85} & \mc{1}{r|}{-} & \mc{1}{r}{\numprint{1855}} & \mc{1}{r}{2.51} & \mc{1}{r|}{-} & \mc{1}{r}{\numprint{1484}} & \mc{1}{r}{84.91} & \mc{1}{r}{-} \\  

  \mc{1}{l|}{\texttt{district-of-columbia-AM3}} & \mc{1}{r}{\numprint{33367}} & \mc{1}{r}{63.23} & \mc{1}{r|}{-} & \mc{1}{r}{\numprint{33367}} & \mc{1}{r}{358.14} & \mc{1}{r|}{-} & \mc{1}{r}{\numprint{32320}} & \mc{1}{r}{33.68} & \mc{1}{r|}{-} & \mc{1}{r}{\numprint{28842}} & \mc{1}{r}{66.67} & \mc{1}{r|}{-} & \mc{1}{r}{\numprint{25031}} & \mc{1}{r}{441.44} & \mc{1}{r}{-} \\  

  \rowcolor{lightergray} \mc{1}{l|}{\texttt{florida-AM2}} & \mc{1}{r}{41} & \mc{1}{r}{0.01} & \mc{1}{r|}{0.01} & \mc{1}{r}{41} & \mc{1}{r}{0.01} & \mc{1}{r|}{0.01} & \mc{1}{r}{0} & \mc{1}{r}{0.00} & \mc{1}{r|}{0.00} & \mc{1}{r}{0} & \mc{1}{r}{0.00} & \mc{1}{r|}{\textbf{0.00}} & \mc{1}{r}{0} & \mc{1}{r}{0.00} & \mc{1}{r}{0.00} \\  

  \mc{1}{l|}{\texttt{florida-AM3}} & \mc{1}{r}{\numprint{1069}} & \mc{1}{r}{31.52} & \mc{1}{r|}{45.81} & \mc{1}{r}{\numprint{1069}} & \mc{1}{r}{35.20} & \mc{1}{r|}{-} & \mc{1}{r}{814} & \mc{1}{r}{0.13} & \mc{1}{r|}{3.85} & \mc{1}{r}{661} & \mc{1}{r}{0.44} & \mc{1}{r|}{\textbf{2.93}} & \mc{1}{r}{267} & \mc{1}{r}{42.26} & \mc{1}{r}{45.05} \\  

  \mc{1}{l|}{\texttt{georgia-AM3}} & \mc{1}{r}{861} & \mc{1}{r}{8.99} & \mc{1}{r|}{892.17} & \mc{1}{r}{861} & \mc{1}{r}{10.14} & \mc{1}{r|}{-} & \mc{1}{r}{796} & \mc{1}{r}{0.08} & \mc{1}{r|}{25.97} & \mc{1}{r}{587} & \mc{1}{r}{0.69} & \mc{1}{r|}{\textbf{10.35}} & \mc{1}{r}{425} & \mc{1}{r}{12.84} & \mc{1}{r}{32.53} \\  

  \mc{1}{l|}{\texttt{greenland-AM3}} & \mc{1}{r}{\numprint{3942}} & \mc{1}{r}{3.81} & \mc{1}{r|}{-} & \mc{1}{r}{\numprint{3942}} & \mc{1}{r}{24.77} & \mc{1}{r|}{-} & \mc{1}{r}{\numprint{3953}} & \mc{1}{r}{3.94} & \mc{1}{r|}{-} & \mc{1}{r}{\numprint{3339}} & \mc{1}{r}{10.27} & \mc{1}{r|}{-} & \mc{1}{r}{\numprint{3339}} & \mc{1}{r}{54.44} & \mc{1}{r}{-} \\  

  \rowcolor{lightergray} \mc{1}{l|}{\texttt{hawaii-AM2}} & \mc{1}{r}{428} & \mc{1}{r}{2.08} & \mc{1}{r|}{4.27} & \mc{1}{r}{428} & \mc{1}{r}{2.15} & \mc{1}{r|}{10.22} & \mc{1}{r}{262} & \mc{1}{r}{0.07} & \mc{1}{r|}{0.18} & \mc{1}{r}{0} & \mc{1}{r}{0.09} & \mc{1}{r|}{\textbf{0.09}} & \mc{1}{r}{0} & \mc{1}{r}{0.10} & \mc{1}{r}{0.10} \\  

  \mc{1}{l|}{\texttt{hawaii-AM3}} & \mc{1}{r}{\numprint{24436}} & \mc{1}{r}{70.38} & \mc{1}{r|}{-} & \mc{1}{r}{\numprint{24436}} & \mc{1}{r}{743.04} & \mc{1}{r|}{-} & \mc{1}{r}{\numprint{24184}} & \mc{1}{r}{98.22} & \mc{1}{r|}{-} & \mc{1}{r}{\numprint{22997}} & \mc{1}{r}{118.52} & \mc{1}{r|}{-} & \mc{1}{r}{\numprint{21087}} & \mc{1}{r}{632.02} & \mc{1}{r}{-} \\  

  \mc{1}{l|}{\texttt{idaho-AM3}} & \mc{1}{r}{\numprint{3208}} & \mc{1}{r}{3.17} & \mc{1}{r|}{-} & \mc{1}{r}{\numprint{3208}} & \mc{1}{r}{29.91} & \mc{1}{r|}{-} & \mc{1}{r}{\numprint{3204}} & \mc{1}{r}{6.96} & \mc{1}{r|}{-} & \mc{1}{r}{\numprint{3160}} & \mc{1}{r}{8.74} & \mc{1}{r|}{-} & \mc{1}{r}{\numprint{2909}} & \mc{1}{r}{33.77} & \mc{1}{r}{-} \\  

  \mc{1}{l|}{\texttt{kansas-AM3}} & \mc{1}{r}{\numprint{1605}} & \mc{1}{r}{2.46} & \mc{1}{r|}{-} & \mc{1}{r}{\numprint{1605}} & \mc{1}{r}{4.81} & \mc{1}{r|}{-} & \mc{1}{r}{\numprint{1550}} & \mc{1}{r}{0.49} & \mc{1}{r|}{-} & \mc{1}{r}{903} & \mc{1}{r}{2.46} & \mc{1}{r|}{\textbf{430.93}} & \mc{1}{r}{860} & \mc{1}{r}{41.61} & \mc{1}{r}{489.15} \\  

  \rowcolor{lightergray} \mc{1}{l|}{\texttt{kentucky-AM2}} & \mc{1}{r}{442} & \mc{1}{r}{2.05} & \mc{1}{r|}{11.85} & \mc{1}{r}{442} & \mc{1}{r}{2.19} & \mc{1}{r|}{67.28} & \mc{1}{r}{183} & \mc{1}{r}{0.20} & \mc{1}{r|}{0.39} & \mc{1}{r}{0} & \mc{1}{r}{0.22} & \mc{1}{r|}{\textbf{0.23}} & \mc{1}{r}{0} & \mc{1}{r}{0.41} & \mc{1}{r}{0.42} \\  

  \mc{1}{l|}{\texttt{kentucky-AM3}} & \mc{1}{r}{\numprint{16871}} & \mc{1}{r}{109.47} & \mc{1}{r|}{-} & \mc{1}{r}{\numprint{16871}} & \mc{1}{r}{\numprint{3344.67}} & \mc{1}{r|}{-} & \mc{1}{r}{\numprint{16807}} & \mc{1}{r}{237.86} & \mc{1}{r|}{-} & \mc{1}{r}{\numprint{15947}} & \mc{1}{r}{298.49} & \mc{1}{r|}{-} & \mc{1}{r}{\numprint{15684}} & \mc{1}{r}{705.46} & \mc{1}{r}{-} \\  

  \rowcolor{lightergray} \mc{1}{l|}{\texttt{louisiana-AM3}} & \mc{1}{r}{382} & \mc{1}{r}{4.56} & \mc{1}{r|}{6.55} & \mc{1}{r}{382} & \mc{1}{r}{5.04} & \mc{1}{r|}{25.22} & \mc{1}{r}{349} & \mc{1}{r}{0.03} & \mc{1}{r|}{0.82} & \mc{1}{r}{0} & \mc{1}{r}{0.07} & \mc{1}{r|}{\textbf{0.07}} & \mc{1}{r}{0} & \mc{1}{r}{0.16} & \mc{1}{r}{0.16} \\  

  \rowcolor{lightergray} \mc{1}{l|}{\texttt{maryland-AM3}} & \mc{1}{r}{187} & \mc{1}{r}{7.59} & \mc{1}{r|}{8.49} & \mc{1}{r}{187} & \mc{1}{r}{8.65} & \mc{1}{r|}{10.73} & \mc{1}{r}{335} & \mc{1}{r}{0.03} & \mc{1}{r|}{0.19} & \mc{1}{r}{0} & \mc{1}{r}{0.11} & \mc{1}{r|}{\textbf{0.11}} & \mc{1}{r}{0} & \mc{1}{r}{0.15} & \mc{1}{r}{0.15} \\  

  \rowcolor{lightergray} \mc{1}{l|}{\texttt{massachusetts-AM2}} & \mc{1}{r}{196} & \mc{1}{r}{0.04} & \mc{1}{r|}{0.36} & \mc{1}{r}{196} & \mc{1}{r}{0.04} & \mc{1}{r|}{0.58} & \mc{1}{r}{193} & \mc{1}{r}{0.02} & \mc{1}{r|}{0.07} & \mc{1}{r}{0} & \mc{1}{r}{0.06} & \mc{1}{r|}{\textbf{0.06}} & \mc{1}{r}{0} & \mc{1}{r}{0.07} & \mc{1}{r}{0.07} \\  

  \mc{1}{l|}{\texttt{massachusetts-AM3}} & \mc{1}{r}{\numprint{2008}} & \mc{1}{r}{9.42} & \mc{1}{r|}{-} & \mc{1}{r}{\numprint{2008}} & \mc{1}{r}{12.62} & \mc{1}{r|}{-} & \mc{1}{r}{\numprint{1928}} & \mc{1}{r}{0.36} & \mc{1}{r|}{-} & \mc{1}{r}{\numprint{1636}} & \mc{1}{r}{1.08} & \mc{1}{r|}{-} & \mc{1}{r}{\numprint{1632}} & \mc{1}{r}{31.83} & \mc{1}{r}{-} \\  

  \rowcolor{lightergray} \mc{1}{l|}{\texttt{mexico-AM3}} & \mc{1}{r}{620} & \mc{1}{r}{25.29} & \mc{1}{r|}{80.23} & \mc{1}{r}{620} & \mc{1}{r}{27.52} & \mc{1}{r|}{991.99} & \mc{1}{r}{514} & \mc{1}{r}{0.03} & \mc{1}{r|}{\textbf{1.47}} & \mc{1}{r}{483} & \mc{1}{r}{0.28} & \mc{1}{r|}{1.50} & \mc{1}{r}{0} & \mc{1}{r}{21.03} & \mc{1}{r}{21.30} \\  

  \rowcolor{lightergray} \mc{1}{l|}{\texttt{new-hampshire-AM3}} & \mc{1}{r}{247} & \mc{1}{r}{4.99} & \mc{1}{r|}{6.19} & \mc{1}{r}{247} & \mc{1}{r}{5.69} & \mc{1}{r|}{15.89} & \mc{1}{r}{164} & \mc{1}{r}{0.02} & \mc{1}{r|}{0.17} & \mc{1}{r}{0} & \mc{1}{r}{0.07} & \mc{1}{r|}{\textbf{0.07}} & \mc{1}{r}{0} & \mc{1}{r}{0.09} & \mc{1}{r}{0.09} \\  

  \mc{1}{l|}{\texttt{north-carolina-AM3}} & \mc{1}{r}{\numprint{1178}} & \mc{1}{r}{0.69} & \mc{1}{r|}{-} & \mc{1}{r}{\numprint{1178}} & \mc{1}{r}{1.22} & \mc{1}{r|}{-} & \mc{1}{r}{\numprint{1146}} & \mc{1}{r}{0.25} & \mc{1}{r|}{-} & \mc{1}{r}{\numprint{1144}} & \mc{1}{r}{0.43} & \mc{1}{r|}{-} & \mc{1}{r}{700} & \mc{1}{r}{47.38} & \mc{1}{r}{379.088} \\  

  \rowcolor{lightergray} \mc{1}{l|}{\texttt{oregon-AM2}} & \mc{1}{r}{35} & \mc{1}{r}{0.04} & \mc{1}{r|}{0.05} & \mc{1}{r}{35} & \mc{1}{r}{0.05} & \mc{1}{r|}{0.05} & \mc{1}{r}{0} & \mc{1}{r}{0.01} & \mc{1}{r|}{\textbf{0.01}} & \mc{1}{r}{0} & \mc{1}{r}{0.02} & \mc{1}{r|}{0.02} & \mc{1}{r}{0} & \mc{1}{r}{0.01} & \mc{1}{r}{0.01} \\  

  \mc{1}{l|}{\texttt{oregon-AM3}} & \mc{1}{r}{\numprint{3670}} & \mc{1}{r}{9.95} & \mc{1}{r|}{-} & \mc{1}{r}{\numprint{3670}} & \mc{1}{r}{34.95} & \mc{1}{r|}{-} & \mc{1}{r}{\numprint{3584}} & \mc{1}{r}{3.92} & \mc{1}{r|}{-} & \mc{1}{r}{\numprint{3417}} & \mc{1}{r}{6.21} & \mc{1}{r|}{-} & \mc{1}{r}{\numprint{2721}} & \mc{1}{r}{38.72} & \mc{1}{r}{-} \\  

  \rowcolor{lightergray} \mc{1}{l|}{\texttt{pennsylvania-AM3}} & \mc{1}{r}{315} & \mc{1}{r}{16.69} & \mc{1}{r|}{20.71} & \mc{1}{r}{315} & \mc{1}{r}{19.39} & \mc{1}{r|}{113.87} & \mc{1}{r}{317} & \mc{1}{r}{0.03} & \mc{1}{r|}{0.39} & \mc{1}{r}{0} & \mc{1}{r}{0.07} & \mc{1}{r|}{\textbf{0.07}} & \mc{1}{r}{0} & \mc{1}{r}{0.12} & \mc{1}{r}{0.12} \\  

  \rowcolor{lightergray} \mc{1}{l|}{\texttt{rhode-island-AM2}} & \mc{1}{r}{\numprint{1103}} & \mc{1}{r}{0.55} & \mc{1}{r|}{-} & \mc{1}{r}{\numprint{1103}} & \mc{1}{r}{0.68} & \mc{1}{r|}{-} & \mc{1}{r}{845} & \mc{1}{r}{0.17} & \mc{1}{r|}{163.07} & \mc{1}{r}{0} & \mc{1}{r}{0.53} & \mc{1}{r|}{\textbf{0.53}} & \mc{1}{r}{0} & \mc{1}{r}{4.57} & \mc{1}{r}{4.58} \\  

  \mc{1}{l|}{\texttt{rhode-island-AM3}} & \mc{1}{r}{\numprint{13031}} & \mc{1}{r}{7.75} & \mc{1}{r|}{-} & \mc{1}{r}{\numprint{13031}} & \mc{1}{r}{193.76} & \mc{1}{r|}{-} & \mc{1}{r}{\numprint{12934}} & \mc{1}{r}{26.54} & \mc{1}{r|}{-} & \mc{1}{r}{\numprint{12653}} & \mc{1}{r}{29.75} & \mc{1}{r|}{-} & \mc{1}{r}{\numprint{12653}} & \mc{1}{r}{59.69} & \mc{1}{r}{-} \\  

  \rowcolor{lightergray} \mc{1}{l|}{\texttt{utah-AM3}} & \mc{1}{r}{568} & \mc{1}{r}{8.21} & \mc{1}{r|}{51.91} & \mc{1}{r}{568} & \mc{1}{r}{8.97} & \mc{1}{r|}{276.27} & \mc{1}{r}{396} & \mc{1}{r}{0.03} & \mc{1}{r|}{0.87} & \mc{1}{r}{0} & \mc{1}{r}{0.09} & \mc{1}{r|}{\textbf{0.09}} & \mc{1}{r}{0} & \mc{1}{r}{0.40} & \mc{1}{r}{0.41} \\  

  \mc{1}{l|}{\texttt{vermont-AM3}} & \mc{1}{r}{\numprint{2630}} & \mc{1}{r}{4.79} & \mc{1}{r|}{-} & \mc{1}{r}{\numprint{2630}} & \mc{1}{r}{9.82} & \mc{1}{r|}{-} & \mc{1}{r}{\numprint{2289}} & \mc{1}{r}{0.97} & \mc{1}{r|}{-} & \mc{1}{r}{\numprint{2069}} & \mc{1}{r}{1.37} & \mc{1}{r|}{-} & \mc{1}{r}{\numprint{2045}} & \mc{1}{r}{55.28} & \mc{1}{r}{-} \\  

  \rowcolor{lightergray} \mc{1}{l|}{\texttt{virginia-AM2}} & \mc{1}{r}{237} & \mc{1}{r}{0.13} & \mc{1}{r|}{0.61} & \mc{1}{r}{237} & \mc{1}{r}{0.12} & \mc{1}{r|}{0.99} & \mc{1}{r}{0} & \mc{1}{r}{0.03} & \mc{1}{r|}{\textbf{0.03}} & \mc{1}{r}{0} & \mc{1}{r}{0.03} & \mc{1}{r|}{0.03} & \mc{1}{r}{0} & \mc{1}{r}{0.03} & \mc{1}{r}{0.03} \\  

  \mc{1}{l|}{\texttt{virginia-AM3}} & \mc{1}{r}{\numprint{3867}} & \mc{1}{r}{34.13} & \mc{1}{r|}{-} & \mc{1}{r}{\numprint{3867}} & \mc{1}{r}{39.74} & \mc{1}{r|}{-} & \mc{1}{r}{\numprint{3738}} & \mc{1}{r}{0.40} & \mc{1}{r|}{-} & \mc{1}{r}{\numprint{2827}} & \mc{1}{r}{1.28} & \mc{1}{r|}{-} & \mc{1}{r}{\numprint{2547}} & \mc{1}{r}{81.67} & \mc{1}{r}{-} \\  

  \rowcolor{lightergray} \mc{1}{l|}{\texttt{washington-AM2}} & \mc{1}{r}{382} & \mc{1}{r}{0.24} & \mc{1}{r|}{5.31} & \mc{1}{r}{382} & \mc{1}{r}{0.18} & \mc{1}{r|}{8.58} & \mc{1}{r}{171} & \mc{1}{r}{0.05} & \mc{1}{r|}{0.37} & \mc{1}{r}{0} & \mc{1}{r}{0.06} & \mc{1}{r|}{\textbf{0.06}} & \mc{1}{r}{0} & \mc{1}{r}{0.07} & \mc{1}{r}{0.07} \\  

  \mc{1}{l|}{\texttt{washington-AM3}} & \mc{1}{r}{\numprint{8030}} & \mc{1}{r}{50.21} & \mc{1}{r|}{-} & \mc{1}{r}{\numprint{8030}} & \mc{1}{r}{67.00} & \mc{1}{r|}{-} & \mc{1}{r}{\numprint{7649}} & \mc{1}{r}{2.19} & \mc{1}{r|}{-} & \mc{1}{r}{\numprint{6895}} & \mc{1}{r}{3.12} & \mc{1}{r|}{-} & \mc{1}{r}{\numprint{6159}} & \mc{1}{r}{73.52} & \mc{1}{r}{-} \\  

  \mc{1}{l|}{\texttt{west-virginia-AM3}} & \mc{1}{r}{991} & \mc{1}{r}{10.69} & \mc{1}{r|}{-} & \mc{1}{r}{991} & \mc{1}{r}{12.13} & \mc{1}{r|}{-} & \mc{1}{r}{970} & \mc{1}{r}{0.08} & \mc{1}{r|}{238.39} & \mc{1}{r}{890} & \mc{1}{r}{0.33} & \mc{1}{r|}{\textbf{155.49}} & \mc{1}{r}{881} & \mc{1}{r}{38.73} & \mc{1}{r}{241.68} \\  

\end{tabular}
\end{ThreePartTable}

\caption{Obtained irreducible graph sizes $n$, time $t_r$ (in seconds) needed to obtain them and total solving time $t_t$ (in seconds) on FE and OSM instances.
The global best solving time $t_t$ is highlighted in bold.Rows are highlighted in gray if one of our algorithms is able to obtain an empty graph.}
\label{bnr_table_fe}
\end{table}

\begin{table*}[h]
\centering
\scriptsize
\begin{ThreePartTable}
\begin{tabular}{cccccccccccccccc}

  \mc{1}{l|}{Graph} & \mc{1}{r}{$n$} & \mc{1}{r}{$t_{r}$} & \mc{1}{r|}{$t_{t}$} & \mc{1}{r}{$n$} & \mc{1}{r}{$t_{r}$} & \mc{1}{r|}{$t_{t}$} & \mc{1}{r}{$n$} & \mc{1}{r}{$t_{r}$} & \mc{1}{r|}{$t_{t}$} & \mc{1}{r}{$n$} & \mc{1}{r}{$t_{r}$} & \mc{1}{r|}{$t_{t}$} & \mc{1}{r}{$n$} & \mc{1}{r}{$t_{r}$} & \mc{1}{r}{$t_{t}$} \\ 
  \hline  

  \mc{1}{l|}{mesh instances} & \mc{3}{c|}{\basicDense} & \mc{3}{c|}{\basicSparse} & \mc{3}{c|}{\nonIncreasing} & \mc{3}{c|}{\cyclicFast} & \mc{3}{c}{\cyclicStrong} \\  
  \hline  

  \rowcolor{lightergray} \mc{1}{l|}{\texttt{beethoven}} & \mc{1}{r}{\numprint{1254}} & \mc{1}{r}{0.02} & \mc{1}{r|}{7.86} & \mc{1}{r}{427} & \mc{1}{r}{0.02} & \mc{1}{r|}{0.08} & \mc{1}{r}{0} & \mc{1}{r}{0.01} & \mc{1}{r|}{\textbf{0.01}} & \mc{1}{r}{0} & \mc{1}{r}{0.01} & \mc{1}{r|}{0.01} & \mc{1}{r}{0} & \mc{1}{r}{0.01} & \mc{1}{r}{0.01} \\  

  \rowcolor{lightergray} \mc{1}{l|}{\texttt{blob}} & \mc{1}{r}{\numprint{5746}} & \mc{1}{r}{0.08} & \mc{1}{r|}{-} & \mc{1}{r}{\numprint{1464}} & \mc{1}{r}{0.06} & \mc{1}{r|}{0.20} & \mc{1}{r}{0} & \mc{1}{r}{0.03} & \mc{1}{r|}{\textbf{0.03}} & \mc{1}{r}{0} & \mc{1}{r}{0.03} & \mc{1}{r|}{0.04} & \mc{1}{r}{0} & \mc{1}{r}{0.03} & \mc{1}{r}{0.03} \\  

  \rowcolor{lightergray} \mc{1}{l|}{\texttt{buddha}} & \mc{1}{r}{\numprint{380315}} & \mc{1}{r}{5.56} & \mc{1}{r|}{-} & \mc{1}{r}{\numprint{107265}} & \mc{1}{r}{26.19} & \mc{1}{r|}{67.85} & \mc{1}{r}{86} & \mc{1}{r}{1.83} & \mc{1}{r|}{2.74} & \mc{1}{r}{0} & \mc{1}{r}{1.87} & \mc{1}{r|}{\textbf{2.26}} & \mc{1}{r}{0} & \mc{1}{r}{1.91} & \mc{1}{r}{2.39} \\  

  \rowcolor{lightergray} \mc{1}{l|}{\texttt{bunny}} & \mc{1}{r}{\numprint{24580}} & \mc{1}{r}{0.34} & \mc{1}{r|}{-} & \mc{1}{r}{\numprint{3290}} & \mc{1}{r}{0.56} & \mc{1}{r|}{0.89} & \mc{1}{r}{0} & \mc{1}{r}{0.12} & \mc{1}{r|}{\textbf{0.14}} & \mc{1}{r}{0} & \mc{1}{r}{0.13} & \mc{1}{r|}{0.16} & \mc{1}{r}{0} & \mc{1}{r}{0.15} & \mc{1}{r}{0.18} \\  

  \rowcolor{lightergray} \mc{1}{l|}{\texttt{cow}} & \mc{1}{r}{\numprint{1916}} & \mc{1}{r}{0.02} & \mc{1}{r|}{-} & \mc{1}{r}{513} & \mc{1}{r}{0.02} & \mc{1}{r|}{0.06} & \mc{1}{r}{0} & \mc{1}{r}{0.01} & \mc{1}{r|}{0.01} & \mc{1}{r}{0} & \mc{1}{r}{0.01} & \mc{1}{r|}{\textbf{0.01}} & \mc{1}{r}{0} & \mc{1}{r}{0.01} & \mc{1}{r}{0.01} \\  

  \rowcolor{lightergray} \mc{1}{l|}{\texttt{dragon}} & \mc{1}{r}{\numprint{51885}} & \mc{1}{r}{0.89} & \mc{1}{r|}{-} & \mc{1}{r}{\numprint{12893}} & \mc{1}{r}{1.34} & \mc{1}{r|}{3.83} & \mc{1}{r}{0} & \mc{1}{r}{0.18} & \mc{1}{r|}{\textbf{0.21}} & \mc{1}{r}{0} & \mc{1}{r}{0.19} & \mc{1}{r|}{0.23} & \mc{1}{r}{0} & \mc{1}{r}{0.21} & \mc{1}{r}{0.25} \\  

  \rowcolor{lightergray} \mc{1}{l|}{\texttt{dragonsub}} & \mc{1}{r}{\numprint{218779}} & \mc{1}{r}{2.60} & \mc{1}{r|}{-} & \mc{1}{r}{\numprint{19470}} & \mc{1}{r}{4.15} & \mc{1}{r|}{5.66} & \mc{1}{r}{506} & \mc{1}{r}{1.03} & \mc{1}{r|}{2.08} & \mc{1}{r}{0} & \mc{1}{r}{1.13} & \mc{1}{r|}{\textbf{1.36}} & \mc{1}{r}{0} & \mc{1}{r}{1.07} & \mc{1}{r}{1.28} \\  

  \rowcolor{lightergray} \mc{1}{l|}{\texttt{ecat}} & \mc{1}{r}{\numprint{239787}} & \mc{1}{r}{4.07} & \mc{1}{r|}{-} & \mc{1}{r}{\numprint{26270}} & \mc{1}{r}{10.09} & \mc{1}{r|}{12.93} & \mc{1}{r}{274} & \mc{1}{r}{2.12} & \mc{1}{r|}{3.16} & \mc{1}{r}{0} & \mc{1}{r}{2.12} & \mc{1}{r|}{\textbf{2.51}} & \mc{1}{r}{0} & \mc{1}{r}{2.14} & \mc{1}{r}{2.56} \\  

  \rowcolor{lightergray} \mc{1}{l|}{\texttt{face}} & \mc{1}{r}{\numprint{7588}} & \mc{1}{r}{0.09} & \mc{1}{r|}{-} & \mc{1}{r}{\numprint{1540}} & \mc{1}{r}{0.10} & \mc{1}{r|}{0.21} & \mc{1}{r}{0} & \mc{1}{r}{0.03} & \mc{1}{r|}{0.04} & \mc{1}{r}{0} & \mc{1}{r}{0.03} & \mc{1}{r|}{\textbf{0.03}} & \mc{1}{r}{0} & \mc{1}{r}{0.03} & \mc{1}{r}{0.04} \\  

  \rowcolor{lightergray} \mc{1}{l|}{\texttt{fandisk}} & \mc{1}{r}{\numprint{2851}} & \mc{1}{r}{0.05} & \mc{1}{r|}{-} & \mc{1}{r}{336} & \mc{1}{r}{0.03} & \mc{1}{r|}{0.07} & \mc{1}{r}{51} & \mc{1}{r}{0.02} & \mc{1}{r|}{0.03} & \mc{1}{r}{0} & \mc{1}{r}{0.02} & \mc{1}{r|}{\textbf{0.02}} & \mc{1}{r}{0} & \mc{1}{r}{0.02} & \mc{1}{r}{0.02} \\  

  \rowcolor{lightergray} \mc{1}{l|}{\texttt{feline}} & \mc{1}{r}{\numprint{14817}} & \mc{1}{r}{0.20} & \mc{1}{r|}{-} & \mc{1}{r}{\numprint{2743}} & \mc{1}{r}{0.25} & \mc{1}{r|}{0.47} & \mc{1}{r}{0} & \mc{1}{r}{0.08} & \mc{1}{r|}{0.09} & \mc{1}{r}{0} & \mc{1}{r}{0.08} & \mc{1}{r|}{\textbf{0.09}} & \mc{1}{r}{0} & \mc{1}{r}{0.08} & \mc{1}{r}{0.09} \\  

  \rowcolor{lightergray} \mc{1}{l|}{\texttt{gameguy}} & \mc{1}{r}{\numprint{13959}} & \mc{1}{r}{0.17} & \mc{1}{r|}{-} & \mc{1}{r}{312} & \mc{1}{r}{0.10} & \mc{1}{r|}{0.12} & \mc{1}{r}{0} & \mc{1}{r}{0.06} & \mc{1}{r|}{0.07} & \mc{1}{r}{0} & \mc{1}{r}{0.06} & \mc{1}{r|}{\textbf{0.07}} & \mc{1}{r}{0} & \mc{1}{r}{0.06} & \mc{1}{r}{0.07} \\  

  \rowcolor{lightergray} \mc{1}{l|}{\texttt{gargoyle}} & \mc{1}{r}{\numprint{6512}} & \mc{1}{r}{0.15} & \mc{1}{r|}{-} & \mc{1}{r}{\numprint{1819}} & \mc{1}{r}{0.14} & \mc{1}{r|}{0.36} & \mc{1}{r}{0} & \mc{1}{r}{0.03} & \mc{1}{r|}{0.03} & \mc{1}{r}{0} & \mc{1}{r}{0.03} & \mc{1}{r|}{\textbf{0.03}} & \mc{1}{r}{0} & \mc{1}{r}{0.03} & \mc{1}{r}{0.03} \\  

  \rowcolor{lightergray} \mc{1}{l|}{\texttt{turtle}} & \mc{1}{r}{\numprint{91624}} & \mc{1}{r}{1.17} & \mc{1}{r|}{-} & \mc{1}{r}{\numprint{16095}} & \mc{1}{r}{1.92} & \mc{1}{r|}{4.98} & \mc{1}{r}{186} & \mc{1}{r}{0.42} & \mc{1}{r|}{0.65} & \mc{1}{r}{0} & \mc{1}{r}{0.41} & \mc{1}{r|}{\textbf{0.49}} & \mc{1}{r}{0} & \mc{1}{r}{0.47} & \mc{1}{r}{0.56} \\  

  \rowcolor{lightergray} \mc{1}{l|}{\texttt{venus}} & \mc{1}{r}{\numprint{1898}} & \mc{1}{r}{0.02} & \mc{1}{r|}{-} & \mc{1}{r}{175} & \mc{1}{r}{0.01} & \mc{1}{r|}{0.02} & \mc{1}{r}{0} & \mc{1}{r}{0.01} & \mc{1}{r|}{0.01} & \mc{1}{r}{0} & \mc{1}{r}{0.01} & \mc{1}{r|}{\textbf{0.01}} & \mc{1}{r}{0} & \mc{1}{r}{0.01} & \mc{1}{r}{0.01} \\  

  \hline \hline  

  \mc{1}{l|}{SNAP instances} & \mc{3}{c|}{\basicDense} & \mc{3}{c|}{\basicSparse} & \mc{3}{c|}{\nonIncreasing} & \mc{3}{c|}{\cyclicFast} & \mc{3}{c}{\cyclicStrong} \\  
  \hline  

  \mc{1}{l|}{\texttt{as-skitter}} & \mc{1}{r}{\numprint{26584}} & \mc{1}{r}{25.82} & \mc{1}{r|}{-} & \mc{1}{r}{\numprint{8585}} & \mc{1}{r}{36.69} & \mc{1}{r|}{-} & \mc{1}{r}{\numprint{3426}} & \mc{1}{r}{4.75} & \mc{1}{r|}{-} & \mc{1}{r}{\numprint{2782}} & \mc{1}{r}{5.50} & \mc{1}{r|}{-} & \mc{1}{r}{\numprint{2343}} & \mc{1}{r}{6.80} & \mc{1}{r}{-} \\  

  \rowcolor{lightergray} \mc{1}{l|}{\texttt{ca-AstroPh}} & \mc{1}{r}{0} & \mc{1}{r}{0.02} & \mc{1}{r|}{\textbf{0.03}} & \mc{1}{r}{0} & \mc{1}{r}{0.02} & \mc{1}{r|}{0.03} & \mc{1}{r}{0} & \mc{1}{r}{0.02} & \mc{1}{r|}{0.03} & \mc{1}{r}{0} & \mc{1}{r}{0.03} & \mc{1}{r|}{0.04} & \mc{1}{r}{0} & \mc{1}{r}{0.03} & \mc{1}{r}{0.03} \\  

  \rowcolor{lightergray} \mc{1}{l|}{\texttt{ca-CondMat}} & \mc{1}{r}{0} & \mc{1}{r}{0.02} & \mc{1}{r|}{0.03} & \mc{1}{r}{0} & \mc{1}{r}{0.01} & \mc{1}{r|}{0.02} & \mc{1}{r}{0} & \mc{1}{r}{0.01} & \mc{1}{r|}{\textbf{0.02}} & \mc{1}{r}{0} & \mc{1}{r}{0.03} & \mc{1}{r|}{0.03} & \mc{1}{r}{0} & \mc{1}{r}{0.01} & \mc{1}{r}{0.02} \\  

  \rowcolor{lightergray} \mc{1}{l|}{\texttt{ca-GrQc}} & \mc{1}{r}{0} & \mc{1}{r}{0.00} & \mc{1}{r|}{0.00} & \mc{1}{r}{0} & \mc{1}{r}{0.00} & \mc{1}{r|}{0.00} & \mc{1}{r}{0} & \mc{1}{r}{0.00} & \mc{1}{r|}{\textbf{0.00}} & \mc{1}{r}{0} & \mc{1}{r}{0.00} & \mc{1}{r|}{0.00} & \mc{1}{r}{0} & \mc{1}{r}{0.00} & \mc{1}{r}{0.00} \\  

  \rowcolor{lightergray} \mc{1}{l|}{\texttt{ca-HepPh}} & \mc{1}{r}{0} & \mc{1}{r}{0.01} & \mc{1}{r|}{0.02} & \mc{1}{r}{0} & \mc{1}{r}{0.01} & \mc{1}{r|}{0.02} & \mc{1}{r}{0} & \mc{1}{r}{0.01} & \mc{1}{r|}{\textbf{0.01}} & \mc{1}{r}{0} & \mc{1}{r}{0.01} & \mc{1}{r|}{0.02} & \mc{1}{r}{0} & \mc{1}{r}{0.01} & \mc{1}{r}{0.01} \\  

  \rowcolor{lightergray} \mc{1}{l|}{\texttt{ca-HepTh}} & \mc{1}{r}{0} & \mc{1}{r}{0.01} & \mc{1}{r|}{0.01} & \mc{1}{r}{0} & \mc{1}{r}{0.00} & \mc{1}{r|}{0.01} & \mc{1}{r}{0} & \mc{1}{r}{0.01} & \mc{1}{r|}{\textbf{0.01}} & \mc{1}{r}{0} & \mc{1}{r}{0.01} & \mc{1}{r|}{0.01} & \mc{1}{r}{0} & \mc{1}{r}{0.00} & \mc{1}{r}{0.00} \\  

  \rowcolor{lightergray} \mc{1}{l|}{\texttt{email-Enron}} & \mc{1}{r}{0} & \mc{1}{r}{0.02} & \mc{1}{r|}{\textbf{0.03}} & \mc{1}{r}{0} & \mc{1}{r}{0.02} & \mc{1}{r|}{0.03} & \mc{1}{r}{0} & \mc{1}{r}{0.04} & \mc{1}{r|}{0.04} & \mc{1}{r}{0} & \mc{1}{r}{0.03} & \mc{1}{r|}{0.03} & \mc{1}{r}{0} & \mc{1}{r}{0.03} & \mc{1}{r}{0.03} \\  

  \rowcolor{lightergray} \mc{1}{l|}{\texttt{email-EuAll}} & \mc{1}{r}{0} & \mc{1}{r}{0.08} & \mc{1}{r|}{0.17} & \mc{1}{r}{0} & \mc{1}{r}{0.09} & \mc{1}{r|}{0.16} & \mc{1}{r}{0} & \mc{1}{r}{0.06} & \mc{1}{r|}{\textbf{0.08}} & \mc{1}{r}{0} & \mc{1}{r}{0.09} & \mc{1}{r|}{0.13} & \mc{1}{r}{0} & \mc{1}{r}{0.07} & \mc{1}{r}{0.10} \\  

  \rowcolor{lightergray} \mc{1}{l|}{\texttt{p2p-Gnutella04}} & \mc{1}{r}{0} & \mc{1}{r}{0.01} & \mc{1}{r|}{0.01} & \mc{1}{r}{0} & \mc{1}{r}{0.01} & \mc{1}{r|}{0.01} & \mc{1}{r}{0} & \mc{1}{r}{0.01} & \mc{1}{r|}{\textbf{0.01}} & \mc{1}{r}{0} & \mc{1}{r}{0.01} & \mc{1}{r|}{0.01} & \mc{1}{r}{0} & \mc{1}{r}{0.01} & \mc{1}{r}{0.01} \\  

  \rowcolor{lightergray} \mc{1}{l|}{\texttt{p2p-Gnutella05}} & \mc{1}{r}{0} & \mc{1}{r}{0.01} & \mc{1}{r|}{\textbf{0.01}} & \mc{1}{r}{0} & \mc{1}{r}{0.01} & \mc{1}{r|}{0.01} & \mc{1}{r}{0} & \mc{1}{r}{0.01} & \mc{1}{r|}{0.01} & \mc{1}{r}{0} & \mc{1}{r}{0.01} & \mc{1}{r|}{0.01} & \mc{1}{r}{0} & \mc{1}{r}{0.01} & \mc{1}{r}{0.01} \\  

  \rowcolor{lightergray} \mc{1}{l|}{\texttt{p2p-Gnutella06}} & \mc{1}{r}{0} & \mc{1}{r}{0.01} & \mc{1}{r|}{0.01} & \mc{1}{r}{0} & \mc{1}{r}{0.01} & \mc{1}{r|}{0.01} & \mc{1}{r}{0} & \mc{1}{r}{0.01} & \mc{1}{r|}{\textbf{0.01}} & \mc{1}{r}{0} & \mc{1}{r}{0.01} & \mc{1}{r|}{0.01} & \mc{1}{r}{0} & \mc{1}{r}{0.01} & \mc{1}{r}{0.01} \\  

  \rowcolor{lightergray} \mc{1}{l|}{\texttt{p2p-Gnutella08}} & \mc{1}{r}{0} & \mc{1}{r}{0.00} & \mc{1}{r|}{0.00} & \mc{1}{r}{0} & \mc{1}{r}{0.00} & \mc{1}{r|}{0.01} & \mc{1}{r}{0} & \mc{1}{r}{0.00} & \mc{1}{r|}{0.00} & \mc{1}{r}{0} & \mc{1}{r}{0.00} & \mc{1}{r|}{\textbf{0.00}} & \mc{1}{r}{0} & \mc{1}{r}{0.00} & \mc{1}{r}{0.00} \\  

  \rowcolor{lightergray} \mc{1}{l|}{\texttt{p2p-Gnutella09}} & \mc{1}{r}{0} & \mc{1}{r}{0.00} & \mc{1}{r|}{0.01} & \mc{1}{r}{0} & \mc{1}{r}{0.01} & \mc{1}{r|}{0.01} & \mc{1}{r}{0} & \mc{1}{r}{0.00} & \mc{1}{r|}{0.01} & \mc{1}{r}{0} & \mc{1}{r}{0.00} & \mc{1}{r|}{\textbf{0.00}} & \mc{1}{r}{0} & \mc{1}{r}{0.00} & \mc{1}{r}{0.01} \\  

  \rowcolor{lightergray} \mc{1}{l|}{\texttt{p2p-Gnutella24}} & \mc{1}{r}{0} & \mc{1}{r}{0.01} & \mc{1}{r|}{0.02} & \mc{1}{r}{0} & \mc{1}{r}{0.02} & \mc{1}{r|}{0.03} & \mc{1}{r}{0} & \mc{1}{r}{0.01} & \mc{1}{r|}{\textbf{0.01}} & \mc{1}{r}{0} & \mc{1}{r}{0.01} & \mc{1}{r|}{0.02} & \mc{1}{r}{0} & \mc{1}{r}{0.01} & \mc{1}{r}{0.01} \\  

  \rowcolor{lightergray} \mc{1}{l|}{\texttt{p2p-Gnutella25}} & \mc{1}{r}{10} & \mc{1}{r}{0.01} & \mc{1}{r|}{0.02} & \mc{1}{r}{0} & \mc{1}{r}{0.01} & \mc{1}{r|}{0.02} & \mc{1}{r}{0} & \mc{1}{r}{0.01} & \mc{1}{r|}{\textbf{0.01}} & \mc{1}{r}{0} & \mc{1}{r}{0.01} & \mc{1}{r|}{0.02} & \mc{1}{r}{0} & \mc{1}{r}{0.02} & \mc{1}{r}{0.02} \\  

  \rowcolor{lightergray} \mc{1}{l|}{\texttt{p2p-Gnutella30}} & \mc{1}{r}{0} & \mc{1}{r}{0.01} & \mc{1}{r|}{0.02} & \mc{1}{r}{0} & \mc{1}{r}{0.02} & \mc{1}{r|}{0.03} & \mc{1}{r}{0} & \mc{1}{r}{0.02} & \mc{1}{r|}{0.02} & \mc{1}{r}{0} & \mc{1}{r}{0.02} & \mc{1}{r|}{\textbf{0.02}} & \mc{1}{r}{0} & \mc{1}{r}{0.01} & \mc{1}{r}{0.02} \\  

  \rowcolor{lightergray} \mc{1}{l|}{\texttt{p2p-Gnutella31}} & \mc{1}{r}{0} & \mc{1}{r}{0.04} & \mc{1}{r|}{0.07} & \mc{1}{r}{0} & \mc{1}{r}{0.04} & \mc{1}{r|}{0.07} & \mc{1}{r}{0} & \mc{1}{r}{0.03} & \mc{1}{r|}{\textbf{0.03}} & \mc{1}{r}{0} & \mc{1}{r}{0.05} & \mc{1}{r|}{0.06} & \mc{1}{r}{0} & \mc{1}{r}{0.04} & \mc{1}{r}{0.05} \\  

  \rowcolor{lightergray} \mc{1}{l|}{\texttt{roadNet-CA}} & \mc{1}{r}{\numprint{234433}} & \mc{1}{r}{3.96} & \mc{1}{r|}{-} & \mc{1}{r}{\numprint{66406}} & \mc{1}{r}{20.51} & \mc{1}{r|}{437.62} & \mc{1}{r}{478} & \mc{1}{r}{2.14} & \mc{1}{r|}{5.70} & \mc{1}{r}{0} & \mc{1}{r}{2.42} & \mc{1}{r|}{\textbf{3.57}} & \mc{1}{r}{0} & \mc{1}{r}{2.59} & \mc{1}{r}{3.07} \\  

  \rowcolor{lightergray} \mc{1}{l|}{\texttt{roadNet-PA}} & \mc{1}{r}{\numprint{133814}} & \mc{1}{r}{2.43} & \mc{1}{r|}{-} & \mc{1}{r}{\numprint{35442}} & \mc{1}{r}{7.73} & \mc{1}{r|}{23.86} & \mc{1}{r}{300} & \mc{1}{r}{1.05} & \mc{1}{r|}{2.24} & \mc{1}{r}{0} & \mc{1}{r}{1.19} & \mc{1}{r|}{\textbf{1.44}} & \mc{1}{r}{0} & \mc{1}{r}{1.14} & \mc{1}{r}{1.40} \\  

  \rowcolor{lightergray} \mc{1}{l|}{\texttt{roadNet-TX}} & \mc{1}{r}{\numprint{153985}} & \mc{1}{r}{2.65} & \mc{1}{r|}{-} & \mc{1}{r}{\numprint{40350}} & \mc{1}{r}{10.49} & \mc{1}{r|}{24.30} & \mc{1}{r}{882} & \mc{1}{r}{1.23} & \mc{1}{r|}{3.98} & \mc{1}{r}{0} & \mc{1}{r}{1.32} & \mc{1}{r|}{\textbf{1.64}} & \mc{1}{r}{0} & \mc{1}{r}{1.34} & \mc{1}{r}{1.65} \\  

  \rowcolor{lightergray} \mc{1}{l|}{\texttt{soc-Epinions1}} & \mc{1}{r}{7} & \mc{1}{r}{0.05} & \mc{1}{r|}{\textbf{0.07}} & \mc{1}{r}{0} & \mc{1}{r}{0.06} & \mc{1}{r|}{0.08} & \mc{1}{r}{0} & \mc{1}{r}{0.08} & \mc{1}{r|}{0.10} & \mc{1}{r}{0} & \mc{1}{r}{0.07} & \mc{1}{r|}{0.08} & \mc{1}{r}{0} & \mc{1}{r}{0.07} & \mc{1}{r}{0.08} \\  

  \mc{1}{l|}{\texttt{soc-LiveJournal1}} & \mc{1}{r}{\numprint{60041}} & \mc{1}{r}{236.88} & \mc{1}{r|}{-} & \mc{1}{r}{\numprint{29508}} & \mc{1}{r}{213.74} & \mc{1}{r|}{-} & \mc{1}{r}{\numprint{4319}} & \mc{1}{r}{22.27} & \mc{1}{r|}{-} & \mc{1}{r}{\numprint{3530}} & \mc{1}{r}{24.13} & \mc{1}{r|}{-} & \mc{1}{r}{\numprint{1314}} & \mc{1}{r}{37.77} & \mc{1}{r}{-} \\  

  \rowcolor{lightergray} \mc{1}{l|}{\texttt{soc-Slashdot0811}} & \mc{1}{r}{0} & \mc{1}{r}{0.08} & \mc{1}{r|}{0.11} & \mc{1}{r}{0} & \mc{1}{r}{0.08} & \mc{1}{r|}{0.11} & \mc{1}{r}{0} & \mc{1}{r}{0.07} & \mc{1}{r|}{\textbf{0.08}} & \mc{1}{r}{0} & \mc{1}{r}{0.07} & \mc{1}{r|}{0.09} & \mc{1}{r}{0} & \mc{1}{r}{0.06} & \mc{1}{r}{0.07} \\  

  \rowcolor{lightergray} \mc{1}{l|}{\texttt{soc-Slashdot0902}} & \mc{1}{r}{0} & \mc{1}{r}{0.07} & \mc{1}{r|}{\textbf{0.09}} & \mc{1}{r}{0} & \mc{1}{r}{0.07} & \mc{1}{r|}{0.10} & \mc{1}{r}{0} & \mc{1}{r}{0.09} & \mc{1}{r|}{0.11} & \mc{1}{r}{0} & \mc{1}{r}{0.08} & \mc{1}{r|}{0.10} & \mc{1}{r}{0} & \mc{1}{r}{0.10} & \mc{1}{r}{0.12} \\  

  \mc{1}{l|}{\texttt{soc-pokec-relationships}} & \mc{1}{r}{\numprint{926346}} & \mc{1}{r}{299.11} & \mc{1}{r|}{-} & \mc{1}{r}{\numprint{898779}} & \mc{1}{r}{\numprint{1013.39}} & \mc{1}{r|}{-} & \mc{1}{r}{\numprint{808542}} & \mc{1}{r}{188.57} & \mc{1}{r|}{-} & \mc{1}{r}{\numprint{807412}} & \mc{1}{r}{217.83} & \mc{1}{r|}{-} & \mc{1}{r}{\numprint{807395}} & \mc{1}{r}{388.57} & \mc{1}{r}{-} \\  

  \mc{1}{l|}{\texttt{web-BerkStan}} & \mc{1}{r}{\numprint{36637}} & \mc{1}{r}{6.58} & \mc{1}{r|}{-} & \mc{1}{r}{\numprint{16661}} & \mc{1}{r}{8.70} & \mc{1}{r|}{-} & \mc{1}{r}{\numprint{1999}} & \mc{1}{r}{6.86} & \mc{1}{r|}{120.05} & \mc{1}{r}{151} & \mc{1}{r}{6.46} & \mc{1}{r|}{\textbf{6.83}} & \mc{1}{r}{151} & \mc{1}{r}{7.89} & \mc{1}{r}{8.25} \\  

  \mc{1}{l|}{\texttt{web-Google}} & \mc{1}{r}{\numprint{2810}} & \mc{1}{r}{1.57} & \mc{1}{r|}{\textbf{2.40}} & \mc{1}{r}{\numprint{1254}} & \mc{1}{r}{2.42} & \mc{1}{r|}{3.66} & \mc{1}{r}{361} & \mc{1}{r}{1.75} & \mc{1}{r|}{2.95} & \mc{1}{r}{46} & \mc{1}{r}{1.88} & \mc{1}{r|}{2.47} & \mc{1}{r}{46} & \mc{1}{r}{7.97} & \mc{1}{r}{9.24} \\  

  \mc{1}{l|}{\texttt{web-NotreDame}} & \mc{1}{r}{\numprint{13464}} & \mc{1}{r}{1.03} & \mc{1}{r|}{-} & \mc{1}{r}{\numprint{6052}} & \mc{1}{r}{2.03} & \mc{1}{r|}{-} & \mc{1}{r}{\numprint{2460}} & \mc{1}{r}{0.40} & \mc{1}{r|}{-} & \mc{1}{r}{\numprint{2061}} & \mc{1}{r}{0.56} & \mc{1}{r|}{\textbf{1.60}} & \mc{1}{r}{117} & \mc{1}{r}{2.44} & \mc{1}{r}{2.57} \\  

  \rowcolor{lightergray} \mc{1}{l|}{\texttt{web-Stanford}} & \mc{1}{r}{\numprint{14153}} & \mc{1}{r}{1.81} & \mc{1}{r|}{-} & \mc{1}{r}{\numprint{3325}} & \mc{1}{r}{2.45} & \mc{1}{r|}{-} & \mc{1}{r}{112} & \mc{1}{r}{2.25} & \mc{1}{r|}{2.50} & \mc{1}{r}{0} & \mc{1}{r}{1.80} & \mc{1}{r|}{\textbf{1.99}} & \mc{1}{r}{0} & \mc{1}{r}{2.17} & \mc{1}{r}{2.38} \\  

  \rowcolor{lightergray} \mc{1}{l|}{\texttt{wiki-Talk}} & \mc{1}{r}{0} & \mc{1}{r}{1.00} & \mc{1}{r|}{\textbf{1.71}} & \mc{1}{r}{0} & \mc{1}{r}{1.32} & \mc{1}{r|}{1.96} & \mc{1}{r}{0} & \mc{1}{r}{1.26} & \mc{1}{r|}{1.84} & \mc{1}{r}{0} & \mc{1}{r}{1.24} & \mc{1}{r|}{1.80} & \mc{1}{r}{0} & \mc{1}{r}{1.67} & \mc{1}{r}{2.28} \\  

  \rowcolor{lightergray} \mc{1}{l|}{\texttt{wiki-Vote}} & \mc{1}{r}{477} & \mc{1}{r}{0.03} & \mc{1}{r|}{0.12} & \mc{1}{r}{0} & \mc{1}{r}{0.02} & \mc{1}{r|}{0.03} & \mc{1}{r}{0} & \mc{1}{r}{0.02} & \mc{1}{r|}{\textbf{0.02}} & \mc{1}{r}{0} & \mc{1}{r}{0.02} & \mc{1}{r|}{0.02} & \mc{1}{r}{0} & \mc{1}{r}{0.02} & \mc{1}{r}{0.02} \\

\end{tabular}
\end{ThreePartTable}

\caption{Obtained irreducible graph sizes $n$, time $t_r$ (in seconds) needed to obtain them and total solving time $t_t$ (in seconds) on mesh and SNAP instances.
The global best solving time $t_t$ is highlighted in bold. Rows are highlighted in gray if one of our algorithms is able to obtain an empty graph.}
\label{bnr_table_snap}
\end{table*}
\clearpage

\section{State-of-the-Art Comparison}\label{sota_tables}
\begin{table}[h!]
\centering
\scriptsize
\begin{ThreePartTable}
\begin{tabu}{ccccccccccccc}

  \mc{1}{l}{Graph} & \mc{1}{|r}{$t_{max}$} & \mc{1}{r}{$w_{max}$} & \mc{1}{|r}{$t_{max}$} & \mc{1}{r}{$w_{max}$} & \mc{1}{|r}{$t_{max}$} & \mc{1}{r}{$w_{max}$} & \mc{1}{|r}{$t_{max}$} & \mc{1}{r}{$w_{max}$} & \mc{1}{|r}{$t_{max}$} & \mc{1}{r}{$w_{max}$} & \mc{1}{|r}{$t_{max}$} & \mc{1}{r}{$w_{max}$} \\
  \hline  

  \mc{1}{l}{FE instances} & \mc{2}{|c}{DynWVC1} & \mc{2}{|c}{DynWVC2} & \mc{2}{|c}{HILS} & \mc{2}{|c}{Cyclic-Fast} & \mc{2}{|c}{Cyclic-Strong} & \mc{2}{|c}{Non-Increasing} \\
  \hline  

  \mc{1}{l}{\cellcolor{lightgray!50}{\texttt{fe\_4elt2}}} & \mc{1}{|r}{\cellcolor{lightgray!50}{961.12}} & \mc{1}{r}{\cellcolor{lightgray!50}{\numprint{427755}}} & \mc{1}{|r}{\cellcolor{lightgray!50}{974.87}} & \mc{1}{r}{\cellcolor{lightgray!50}{\numprint{427755}}} & \mc{1}{|r}{\cellcolor{lightgray!50}{759.23}} & \mc{1}{r}{\cellcolor{lightgray!50}{\numprint{427646}}} & \mc{1}{|r}{\cellcolor{lightgray!50}{0.11}} & \mc{1}{r}{\textbf{\cellcolor{lightgray!50}{\numprint{428029}}}} & \mc{1}{|r}{\cellcolor{lightgray!50}{0.11}} & \mc{1}{r}{\textbf{\cellcolor{lightgray!50}{\numprint{428029}}}} & \mc{1}{|r}{\cellcolor{lightgray!50}{0.18}} & \mc{1}{r}{\cellcolor{lightgray!50}{\numprint{428016}}} \\

  \mc{1}{l}{\texttt{fe\_body}} & \mc{1}{|r}{504.31} & \mc{1}{r}{\numprint{1678616}} & \mc{1}{|r}{499.03} & \mc{1}{r}{\numprint{1678496}} & \mc{1}{|r}{806.46} & \mc{1}{r}{\numprint{1678708}} & \mc{1}{|r}{0.51} & \mc{1}{r}{\textbf{\numprint{1680182}}} & \mc{1}{|r}{0.86} & \mc{1}{r}{\numprint{1680117}} & \mc{1}{|r}{0.27} & \mc{1}{r}{\numprint{1680133}} \\

  \mc{1}{l}{{\texttt{fe\_ocean}}} & \mc{1}{|r}{{983.53}} & \mc{1}{r}{{\numprint{7222521}}} & \mc{1}{|r}{{379.75}} & \mc{1}{r}{{\numprint{7220128}}} & \mc{1}{|r}{{999.57}} & \mc{1}{r}{{\numprint{7069279}}} & \mc{1}{|r}{{18.85}} & \mc{1}{r}{{\numprint{6591832}}} & \mc{1}{|r}{{19.04}} & \mc{1}{r}{{\numprint{6591537}}} & \mc{1}{|r}{{18.85}} & \mc{1}{r}{{\numprint{6597698}}} \\

  \mc{1}{l}{\texttt{fe\_pwt}} & \mc{1}{|r}{814.23} & \mc{1}{r}{\numprint{1176721}} & \mc{1}{|r}{320.05} & \mc{1}{r}{\textbf{\numprint{1176784}}} & \mc{1}{|r}{932.43} & \mc{1}{r}{\numprint{1175754}} & \mc{1}{|r}{3.03} & \mc{1}{r}{\numprint{1162232}} & \mc{1}{|r}{5.45} & \mc{1}{r}{\numprint{888959}} & \mc{1}{|r}{1.57} & \mc{1}{r}{\numprint{1151777}} \\

  \mc{1}{l}{\texttt{fe\_rotor}} & \mc{1}{|r}{961.76} & \mc{1}{r}{\textbf{\numprint{2659653}}} & \mc{1}{|r}{874.68} & \mc{1}{r}{\numprint{2659473}} & \mc{1}{|r}{973.92} & \mc{1}{r}{\numprint{2650132}} & \mc{1}{|r}{13.95} & \mc{1}{r}{\numprint{2531152}} & \mc{1}{|r}{20.55} & \mc{1}{r}{\numprint{2538117}} & \mc{1}{|r}{13.56} & \mc{1}{r}{\numprint{2532168}} \\

  \mc{1}{l}{\cellcolor{lightgray!50}{\texttt{fe\_sphere}}} & \mc{1}{|r}{\cellcolor{lightgray!50}{875.87}} & \mc{1}{r}{\cellcolor{lightgray!50}{\numprint{616978}}} & \mc{1}{|r}{\cellcolor{lightgray!50}{872.36}} & \mc{1}{r}{\cellcolor{lightgray!50}{\numprint{616978}}} & \mc{1}{|r}{\cellcolor{lightgray!50}{843.67}} & \mc{1}{r}{\cellcolor{lightgray!50}{\numprint{616528}}} & \mc{1}{|r}{\cellcolor{lightgray!50}{0.63}} & \mc{1}{r}{\textbf{\cellcolor{lightgray!50}{\numprint{617816}}}} & \mc{1}{|r}{\cellcolor{lightgray!50}{0.67}} & \mc{1}{r}{\textbf{\cellcolor{lightgray!50}{\numprint{617816}}}} & \mc{1}{|r}{\cellcolor{lightgray!50}{0.46}} & \mc{1}{r}{\cellcolor{lightgray!50}{\numprint{617585}}} \\

  \mc{1}{l}{\cellcolor{lightgray!50}{\texttt{fe\_tooth}}} & \mc{1}{|r}{\cellcolor{lightgray!50}{353.21}} & \mc{1}{r}{\cellcolor{lightgray!50}{\numprint{3031269}}} & \mc{1}{|r}{\cellcolor{lightgray!50}{619.96}} & \mc{1}{r}{\cellcolor{lightgray!50}{\numprint{3031385}}} & \mc{1}{|r}{\cellcolor{lightgray!50}{994.97}} & \mc{1}{r}{\cellcolor{lightgray!50}{\numprint{3032819}}} & \mc{1}{|r}{\cellcolor{lightgray!50}{0.26}} & \mc{1}{r}{\textbf{\cellcolor{lightgray!50}{\numprint{3033298}}}} & \mc{1}{|r}{\cellcolor{lightgray!50}{0.26}} & \mc{1}{r}{\textbf{\cellcolor{lightgray!50}{\numprint{3033298}}}} & \mc{1}{|r}{\cellcolor{lightgray!50}{0.27}} & \mc{1}{r}{\textbf{\cellcolor{lightgray!50}{\numprint{3033298}}}} \\

  \hline \hline  

  \mc{1}{l}{OSM instances} & \mc{2}{|c}{DynWVC1} & \mc{2}{|c}{DynWVC2} & \mc{2}{|c}{HILS} & \mc{2}{|c}{Cyclic-Fast} & \mc{2}{|c}{Cyclic-Strong} & \mc{2}{|c}{Non-Increasing} \\
  \hline  

  \mc{1}{l}{\cellcolor{lightgray!50}{\texttt{alabama-AM2}}} & \mc{1}{|r}{\cellcolor{lightgray!50}{0.18}} & \mc{1}{r}{\cellcolor{lightgray!50}{\numprint{174252}}} & \mc{1}{|r}{\cellcolor{lightgray!50}{0.24}} & \mc{1}{r}{\cellcolor{lightgray!50}{\numprint{174269}}} & \mc{1}{|r}{\cellcolor{lightgray!50}{0.03}} & \mc{1}{r}{\textbf{\cellcolor{lightgray!50}{\numprint{174309}}}} & \mc{1}{|r}{\cellcolor{lightgray!50}{0.01}} & \mc{1}{r}{\textbf{\cellcolor{lightgray!50}{\numprint{174309}}}} & \mc{1}{|r}{\cellcolor{lightgray!50}{0.01}} & \mc{1}{r}{\textbf{\cellcolor{lightgray!50}{\numprint{174309}}}} & \mc{1}{|r}{\cellcolor{lightgray!50}{0.01}} & \mc{1}{r}{\textbf{\cellcolor{lightgray!50}{\numprint{174309}}}} \\

  \mc{1}{l}{\cellcolor{lightgray!50}{\texttt{alabama-AM3}}} & \mc{1}{|r}{\cellcolor{lightgray!50}{725.34}} & \mc{1}{r}{\cellcolor{lightgray!50}{\numprint{185518}}} & \mc{1}{|r}{\cellcolor{lightgray!50}{199.94}} & \mc{1}{r}{\cellcolor{lightgray!50}{\numprint{185655}}} & \mc{1}{|r}{\cellcolor{lightgray!50}{0.58}} & \mc{1}{r}{\textbf{\cellcolor{lightgray!50}{\numprint{185744}}}} & \mc{1}{|r}{\cellcolor{lightgray!50}{1.76}} & \mc{1}{r}{\textbf{\cellcolor{lightgray!50}{\numprint{185744}}}} & \mc{1}{|r}{\cellcolor{lightgray!50}{32.42}} & \mc{1}{r}{\textbf{\cellcolor{lightgray!50}{\numprint{185744}}}} & \mc{1}{|r}{\cellcolor{lightgray!50}{0.60}} & \mc{1}{r}{\textbf{\cellcolor{lightgray!50}{\numprint{185744}}}} \\

  \mc{1}{l}{\cellcolor{lightgray!50}{\texttt{district-of-columbia-AM1}}} & \mc{1}{|r}{\cellcolor{lightgray!50}{23.96}} & \mc{1}{r}{\textbf{\cellcolor{lightgray!50}{\numprint{196475}}}} & \mc{1}{|r}{\cellcolor{lightgray!50}{28.42}} & \mc{1}{r}{\textbf{\cellcolor{lightgray!50}{\numprint{196475}}}} & \mc{1}{|r}{\cellcolor{lightgray!50}{0.14}} & \mc{1}{r}{\textbf{\cellcolor{lightgray!50}{\numprint{196475}}}} & \mc{1}{|r}{\cellcolor{lightgray!50}{0.32}} & \mc{1}{r}{\textbf{\cellcolor{lightgray!50}{\numprint{196475}}}} & \mc{1}{|r}{\cellcolor{lightgray!50}{3.52}} & \mc{1}{r}{\textbf{\cellcolor{lightgray!50}{\numprint{196475}}}} & \mc{1}{|r}{\cellcolor{lightgray!50}{0.06}} & \mc{1}{r}{\textbf{\cellcolor{lightgray!50}{\numprint{196475}}}} \\

  \mc{1}{l}{\texttt{district-of-columbia-AM2}} & \mc{1}{|r}{159.08} & \mc{1}{r}{\numprint{208989}} & \mc{1}{|r}{915.18} & \mc{1}{r}{\numprint{208977}} & \mc{1}{|r}{400.69} & \mc{1}{r}{\textbf{\numprint{209132}}} & \mc{1}{|r}{4.21} & \mc{1}{r}{\textbf{\numprint{209132}}} & \mc{1}{|r}{84.21} & \mc{1}{r}{\numprint{209131}} & \mc{1}{|r}{686.26} & \mc{1}{r}{\numprint{174114}} \\

  \mc{1}{l}{\texttt{district-of-columbia-AM3}} & \mc{1}{|r}{461.10} & \mc{1}{r}{\numprint{224760}} & \mc{1}{|r}{313.17} & \mc{1}{r}{\numprint{223955}} & \mc{1}{|r}{849.37} & \mc{1}{r}{\textbf{\numprint{227613}}} & \mc{1}{|r}{904.91} & \mc{1}{r}{\numprint{142454}} & \mc{1}{|r}{804.79} & \mc{1}{r}{\numprint{156967}} & \mc{1}{|r}{168.55} & \mc{1}{r}{\numprint{120366}} \\

  \mc{1}{l}{\cellcolor{lightgray!50}{\texttt{florida-AM2}}} & \mc{1}{|r}{\cellcolor{lightgray!50}{0.18}} & \mc{1}{r}{\textbf{\cellcolor{lightgray!50}{\numprint{230595}}}} & \mc{1}{|r}{\cellcolor{lightgray!50}{0.53}} & \mc{1}{r}{\textbf{\cellcolor{lightgray!50}{\numprint{230595}}}} & \mc{1}{|r}{\cellcolor{lightgray!50}{0.04}} & \mc{1}{r}{\textbf{\cellcolor{lightgray!50}{\numprint{230595}}}} & \mc{1}{|r}{\cellcolor{lightgray!50}{0.00}} & \mc{1}{r}{\textbf{\cellcolor{lightgray!50}{\numprint{230595}}}} & \mc{1}{|r}{\cellcolor{lightgray!50}{0.00}} & \mc{1}{r}{\textbf{\cellcolor{lightgray!50}{\numprint{230595}}}} & \mc{1}{|r}{\cellcolor{lightgray!50}{0.00}} & \mc{1}{r}{\textbf{\cellcolor{lightgray!50}{\numprint{230595}}}} \\

  \mc{1}{l}{\cellcolor{lightgray!50}{\texttt{florida-AM3}}} & \mc{1}{|r}{\cellcolor{lightgray!50}{425.87}} & \mc{1}{r}{\cellcolor{lightgray!50}{\numprint{237229}}} & \mc{1}{|r}{\cellcolor{lightgray!50}{862.04}} & \mc{1}{r}{\cellcolor{lightgray!50}{\numprint{237120}}} & \mc{1}{|r}{\cellcolor{lightgray!50}{3.98}} & \mc{1}{r}{\textbf{\cellcolor{lightgray!50}{\numprint{237333}}}} & \mc{1}{|r}{\cellcolor{lightgray!50}{1.57}} & \mc{1}{r}{\textbf{\cellcolor{lightgray!50}{\numprint{237333}}}} & \mc{1}{|r}{\cellcolor{lightgray!50}{40.97}} & \mc{1}{r}{\textbf{\cellcolor{lightgray!50}{\numprint{237333}}}} & \mc{1}{|r}{\cellcolor{lightgray!50}{2.08}} & \mc{1}{r}{\textbf{\cellcolor{lightgray!50}{\numprint{237333}}}} \\

  \mc{1}{l}{\cellcolor{lightgray!50}{\texttt{georgia-AM3}}} & \mc{1}{|r}{\cellcolor{lightgray!50}{0.42}} & \mc{1}{r}{\textbf{\cellcolor{lightgray!50}{\numprint{222652}}}} & \mc{1}{|r}{\cellcolor{lightgray!50}{1.31}} & \mc{1}{r}{\textbf{\cellcolor{lightgray!50}{\numprint{222652}}}} & \mc{1}{|r}{\cellcolor{lightgray!50}{0.04}} & \mc{1}{r}{\textbf{\cellcolor{lightgray!50}{\numprint{222652}}}} & \mc{1}{|r}{\cellcolor{lightgray!50}{0.98}} & \mc{1}{r}{\textbf{\cellcolor{lightgray!50}{\numprint{222652}}}} & \mc{1}{|r}{\cellcolor{lightgray!50}{12.97}} & \mc{1}{r}{\textbf{\cellcolor{lightgray!50}{\numprint{222652}}}} & \mc{1}{|r}{\cellcolor{lightgray!50}{14.56}} & \mc{1}{r}{\textbf{\cellcolor{lightgray!50}{\numprint{222652}}}} \\

  \mc{1}{l}{\texttt{greenland-AM3}} & \mc{1}{|r}{58.88} & \mc{1}{r}{\numprint{14007}} & \mc{1}{|r}{640.46} & \mc{1}{r}{\numprint{14010}} & \mc{1}{|r}{1.18} & \mc{1}{r}{\numprint{14011}} & \mc{1}{|r}{10.95} & \mc{1}{r}{\numprint{14011}} & \mc{1}{|r}{58.24} & \mc{1}{r}{\numprint{14008}} & \mc{1}{|r}{5.06} & \mc{1}{r}{\textbf{\numprint{14012}}} \\

  \mc{1}{l}{\cellcolor{lightgray!50}{\texttt{hawaii-AM2}}} & \mc{1}{|r}{\cellcolor{lightgray!50}{1.89}} & \mc{1}{r}{\cellcolor{lightgray!50}{\numprint{125270}}} & \mc{1}{|r}{\cellcolor{lightgray!50}{1.63}} & \mc{1}{r}{\cellcolor{lightgray!50}{\numprint{125270}}} & \mc{1}{|r}{\cellcolor{lightgray!50}{0.20}} & \mc{1}{r}{\textbf{\cellcolor{lightgray!50}{\numprint{125284}}}} & \mc{1}{|r}{\cellcolor{lightgray!50}{0.09}} & \mc{1}{r}{\textbf{\cellcolor{lightgray!50}{\numprint{125284}}}} & \mc{1}{|r}{\cellcolor{lightgray!50}{0.10}} & \mc{1}{r}{\textbf{\cellcolor{lightgray!50}{\numprint{125284}}}} & \mc{1}{|r}{\cellcolor{lightgray!50}{0.13}} & \mc{1}{r}{\textbf{\cellcolor{lightgray!50}{\numprint{125284}}}} \\

  \mc{1}{l}{\texttt{hawaii-AM3}} & \mc{1}{|r}{406.57} & \mc{1}{r}{\numprint{140656}} & \mc{1}{|r}{887.44} & \mc{1}{r}{\numprint{140595}} & \mc{1}{|r}{213.32} & \mc{1}{r}{\textbf{\numprint{141035}}} & \mc{1}{|r}{152.38} & \mc{1}{r}{\numprint{116202}} & \mc{1}{|r}{681.39} & \mc{1}{r}{\numprint{121222}} & \mc{1}{|r}{155.21} & \mc{1}{r}{\numprint{107879}} \\

  \mc{1}{l}{\texttt{idaho-AM3}} & \mc{1}{|r}{79.67} & \mc{1}{r}{\textbf{\numprint{77145}}} & \mc{1}{|r}{58.83} & \mc{1}{r}{\textbf{\numprint{77145}}} & \mc{1}{|r}{0.78} & \mc{1}{r}{\textbf{\numprint{77145}}} & \mc{1}{|r}{11.95} & \mc{1}{r}{\numprint{77141}} & \mc{1}{|r}{40.71} & \mc{1}{r}{\numprint{77144}} & \mc{1}{|r}{8.89} & \mc{1}{r}{\numprint{77144}} \\

  \mc{1}{l}{\cellcolor{lightgray!50}{\texttt{kansas-AM3}}} & \mc{1}{|r}{\cellcolor{lightgray!50}{333.60}} & \mc{1}{r}{\textbf{\cellcolor{lightgray!50}{\numprint{87976}}}} & \mc{1}{|r}{\cellcolor{lightgray!50}{276.26}} & \mc{1}{r}{\textbf{\cellcolor{lightgray!50}{\numprint{87976}}}} & \mc{1}{|r}{\cellcolor{lightgray!50}{0.55}} & \mc{1}{r}{\textbf{\cellcolor{lightgray!50}{\numprint{87976}}}} & \mc{1}{|r}{\cellcolor{lightgray!50}{2.25}} & \mc{1}{r}{\textbf{\cellcolor{lightgray!50}{\numprint{87976}}}} & \mc{1}{|r}{\cellcolor{lightgray!50}{110.41}} & \mc{1}{r}{\textbf{\cellcolor{lightgray!50}{\numprint{87976}}}} & \mc{1}{|r}{\cellcolor{lightgray!50}{337.83}} & \mc{1}{r}{\textbf{\cellcolor{lightgray!50}{\numprint{87976}}}} \\

  \mc{1}{l}{\cellcolor{lightgray!50}{\texttt{kentucky-AM2}}} & \mc{1}{|r}{\cellcolor{lightgray!50}{3.23}} & \mc{1}{r}{\textbf{\cellcolor{lightgray!50}{\numprint{97397}}}} & \mc{1}{|r}{\cellcolor{lightgray!50}{2.92}} & \mc{1}{r}{\textbf{\cellcolor{lightgray!50}{\numprint{97397}}}} & \mc{1}{|r}{\cellcolor{lightgray!50}{0.26}} & \mc{1}{r}{\textbf{\cellcolor{lightgray!50}{\numprint{97397}}}} & \mc{1}{|r}{\cellcolor{lightgray!50}{0.23}} & \mc{1}{r}{\textbf{\cellcolor{lightgray!50}{\numprint{97397}}}} & \mc{1}{|r}{\cellcolor{lightgray!50}{0.44}} & \mc{1}{r}{\textbf{\cellcolor{lightgray!50}{\numprint{97397}}}} & \mc{1}{|r}{\cellcolor{lightgray!50}{0.26}} & \mc{1}{r}{\textbf{\cellcolor{lightgray!50}{\numprint{97397}}}} \\

  \mc{1}{l}{\texttt{kentucky-AM3}} & \mc{1}{|r}{951.91} & \mc{1}{r}{\numprint{100476}} & \mc{1}{|r}{96.83} & \mc{1}{r}{\numprint{100455}} & \mc{1}{|r}{515.99} & \mc{1}{r}{\numprint{100507}} & \mc{1}{|r}{354.45} & \mc{1}{r}{\textbf{\numprint{100510}}} & \mc{1}{|r}{776.69} & \mc{1}{r}{\textbf{\numprint{100510}}} & \mc{1}{|r}{305.01} & \mc{1}{r}{\numprint{100497}} \\

  \mc{1}{l}{\cellcolor{lightgray!50}{\texttt{louisiana-AM3}}} & \mc{1}{|r}{\cellcolor{lightgray!50}{8.63}} & \mc{1}{r}{\textbf{\cellcolor{lightgray!50}{\numprint{60024}}}} & \mc{1}{|r}{\cellcolor{lightgray!50}{0.18}} & \mc{1}{r}{\cellcolor{lightgray!50}{\numprint{60002}}} & \mc{1}{|r}{\cellcolor{lightgray!50}{0.01}} & \mc{1}{r}{\textbf{\cellcolor{lightgray!50}{\numprint{60024}}}} & \mc{1}{|r}{\cellcolor{lightgray!50}{0.05}} & \mc{1}{r}{\textbf{\cellcolor{lightgray!50}{\numprint{60024}}}} & \mc{1}{|r}{\cellcolor{lightgray!50}{0.11}} & \mc{1}{r}{\textbf{\cellcolor{lightgray!50}{\numprint{60024}}}} & \mc{1}{|r}{\cellcolor{lightgray!50}{0.15}} & \mc{1}{r}{\textbf{\cellcolor{lightgray!50}{\numprint{60024}}}} \\

  \mc{1}{l}{\cellcolor{lightgray!50}{\texttt{maryland-AM3}}} & \mc{1}{|r}{\cellcolor{lightgray!50}{0.79}} & \mc{1}{r}{\textbf{\cellcolor{lightgray!50}{\numprint{45496}}}} & \mc{1}{|r}{\cellcolor{lightgray!50}{0.59}} & \mc{1}{r}{\textbf{\cellcolor{lightgray!50}{\numprint{45496}}}} & \mc{1}{|r}{\cellcolor{lightgray!50}{0.01}} & \mc{1}{r}{\textbf{\cellcolor{lightgray!50}{\numprint{45496}}}} & \mc{1}{|r}{\cellcolor{lightgray!50}{0.11}} & \mc{1}{r}{\textbf{\cellcolor{lightgray!50}{\numprint{45496}}}} & \mc{1}{|r}{\cellcolor{lightgray!50}{0.15}} & \mc{1}{r}{\textbf{\cellcolor{lightgray!50}{\numprint{45496}}}} & \mc{1}{|r}{\cellcolor{lightgray!50}{0.14}} & \mc{1}{r}{\textbf{\cellcolor{lightgray!50}{\numprint{45496}}}} \\

  \mc{1}{l}{\cellcolor{lightgray!50}{\texttt{massachusetts-AM2}}} & \mc{1}{|r}{\cellcolor{lightgray!50}{0.25}} & \mc{1}{r}{\textbf{\cellcolor{lightgray!50}{\numprint{140095}}}} & \mc{1}{|r}{\cellcolor{lightgray!50}{0.74}} & \mc{1}{r}{\textbf{\cellcolor{lightgray!50}{\numprint{140095}}}} & \mc{1}{|r}{\cellcolor{lightgray!50}{0.01}} & \mc{1}{r}{\textbf{\cellcolor{lightgray!50}{\numprint{140095}}}} & \mc{1}{|r}{\cellcolor{lightgray!50}{0.04}} & \mc{1}{r}{\textbf{\cellcolor{lightgray!50}{\numprint{140095}}}} & \mc{1}{|r}{\cellcolor{lightgray!50}{0.05}} & \mc{1}{r}{\textbf{\cellcolor{lightgray!50}{\numprint{140095}}}} & \mc{1}{|r}{\cellcolor{lightgray!50}{0.03}} & \mc{1}{r}{\textbf{\cellcolor{lightgray!50}{\numprint{140095}}}} \\

  \mc{1}{l}{\texttt{massachusetts-AM3}} & \mc{1}{|r}{980.11} & \mc{1}{r}{\numprint{145852}} & \mc{1}{|r}{270.28} & \mc{1}{r}{\numprint{145862}} & \mc{1}{|r}{0.77} & \mc{1}{r}{\textbf{\numprint{145866}}} & \mc{1}{|r}{1.39} & \mc{1}{r}{\textbf{\numprint{145866}}} & \mc{1}{|r}{31.04} & \mc{1}{r}{\textbf{\numprint{145866}}} & \mc{1}{|r}{0.76} & \mc{1}{r}{\textbf{\numprint{145866}}} \\

  \mc{1}{l}{\cellcolor{lightgray!50}{\texttt{mexico-AM3}}} & \mc{1}{|r}{\cellcolor{lightgray!50}{0.71}} & \mc{1}{r}{\textbf{\cellcolor{lightgray!50}{\numprint{97663}}}} & \mc{1}{|r}{\cellcolor{lightgray!50}{2.28}} & \mc{1}{r}{\textbf{\cellcolor{lightgray!50}{\numprint{97663}}}} & \mc{1}{|r}{\cellcolor{lightgray!50}{0.02}} & \mc{1}{r}{\textbf{\cellcolor{lightgray!50}{\numprint{97663}}}} & \mc{1}{|r}{\cellcolor{lightgray!50}{0.96}} & \mc{1}{r}{\textbf{\cellcolor{lightgray!50}{\numprint{97663}}}} & \mc{1}{|r}{\cellcolor{lightgray!50}{21.19}} & \mc{1}{r}{\textbf{\cellcolor{lightgray!50}{\numprint{97663}}}} & \mc{1}{|r}{\cellcolor{lightgray!50}{0.67}} & \mc{1}{r}{\textbf{\cellcolor{lightgray!50}{\numprint{97663}}}} \\

  \mc{1}{l}{\cellcolor{lightgray!50}{\texttt{new-hampshire-AM3}}} & \mc{1}{|r}{\cellcolor{lightgray!50}{0.08}} & \mc{1}{r}{\textbf{\cellcolor{lightgray!50}{\numprint{116060}}}} & \mc{1}{|r}{\cellcolor{lightgray!50}{1.63}} & \mc{1}{r}{\textbf{\cellcolor{lightgray!50}{\numprint{116060}}}} & \mc{1}{|r}{\cellcolor{lightgray!50}{0.03}} & \mc{1}{r}{\textbf{\cellcolor{lightgray!50}{\numprint{116060}}}} & \mc{1}{|r}{\cellcolor{lightgray!50}{0.05}} & \mc{1}{r}{\textbf{\cellcolor{lightgray!50}{\numprint{116060}}}} & \mc{1}{|r}{\cellcolor{lightgray!50}{0.08}} & \mc{1}{r}{\textbf{\cellcolor{lightgray!50}{\numprint{116060}}}} & \mc{1}{|r}{\cellcolor{lightgray!50}{0.06}} & \mc{1}{r}{\textbf{\cellcolor{lightgray!50}{\numprint{116060}}}} \\

  \mc{1}{l}{\cellcolor{lightgray!50}{\texttt{north-carolina-AM3}}} & \mc{1}{|r}{\cellcolor{lightgray!50}{0.58}} & \mc{1}{r}{\cellcolor{lightgray!50}{\numprint{49694}}} & \mc{1}{|r}{\cellcolor{lightgray!50}{114.45}} & \mc{1}{r}{\textbf{\cellcolor{lightgray!50}{\numprint{49720}}}} & \mc{1}{|r}{\cellcolor{lightgray!50}{0.03}} & \mc{1}{r}{\textbf{\cellcolor{lightgray!50}{\numprint{49720}}}} & \mc{1}{|r}{\cellcolor{lightgray!50}{0.74}} & \mc{1}{r}{\textbf{\cellcolor{lightgray!50}{\numprint{49720}}}} & \mc{1}{|r}{\cellcolor{lightgray!50}{45.82}} & \mc{1}{r}{\textbf{\cellcolor{lightgray!50}{\numprint{49720}}}} & \mc{1}{|r}{\cellcolor{lightgray!50}{0.47}} & \mc{1}{r}{\textbf{\cellcolor{lightgray!50}{\numprint{49720}}}} \\

  \mc{1}{l}{\cellcolor{lightgray!50}{\texttt{oregon-AM2}}} & \mc{1}{|r}{\cellcolor{lightgray!50}{0.62}} & \mc{1}{r}{\textbf{\cellcolor{lightgray!50}{\numprint{165047}}}} & \mc{1}{|r}{\cellcolor{lightgray!50}{0.37}} & \mc{1}{r}{\textbf{\cellcolor{lightgray!50}{\numprint{165047}}}} & \mc{1}{|r}{\cellcolor{lightgray!50}{0.02}} & \mc{1}{r}{\textbf{\cellcolor{lightgray!50}{\numprint{165047}}}} & \mc{1}{|r}{\cellcolor{lightgray!50}{0.01}} & \mc{1}{r}{\textbf{\cellcolor{lightgray!50}{\numprint{165047}}}} & \mc{1}{|r}{\cellcolor{lightgray!50}{0.01}} & \mc{1}{r}{\textbf{\cellcolor{lightgray!50}{\numprint{165047}}}} & \mc{1}{|r}{\cellcolor{lightgray!50}{0.01}} & \mc{1}{r}{\textbf{\cellcolor{lightgray!50}{\numprint{165047}}}} \\

  \mc{1}{l}{\texttt{oregon-AM3}} & \mc{1}{|r}{174.64} & \mc{1}{r}{\numprint{175059}} & \mc{1}{|r}{511.10} & \mc{1}{r}{\numprint{175067}} & \mc{1}{|r}{4.65} & \mc{1}{r}{\textbf{\numprint{175078}}} & \mc{1}{|r}{9.50} & \mc{1}{r}{\textbf{\numprint{175078}}} & \mc{1}{|r}{39.78} & \mc{1}{r}{\numprint{175077}} & \mc{1}{|r}{21.29} & \mc{1}{r}{\textbf{\numprint{175078}}} \\

  \mc{1}{l}{\cellcolor{lightgray!50}{\texttt{pennsylvania-AM3}}} & \mc{1}{|r}{\cellcolor{lightgray!50}{0.06}} & \mc{1}{r}{\textbf{\cellcolor{lightgray!50}{\numprint{143870}}}} & \mc{1}{|r}{\cellcolor{lightgray!50}{0.14}} & \mc{1}{r}{\textbf{\cellcolor{lightgray!50}{\numprint{143870}}}} & \mc{1}{|r}{\cellcolor{lightgray!50}{0.02}} & \mc{1}{r}{\textbf{\cellcolor{lightgray!50}{\numprint{143870}}}} & \mc{1}{|r}{\cellcolor{lightgray!50}{0.07}} & \mc{1}{r}{\textbf{\cellcolor{lightgray!50}{\numprint{143870}}}} & \mc{1}{|r}{\cellcolor{lightgray!50}{0.12}} & \mc{1}{r}{\textbf{\cellcolor{lightgray!50}{\numprint{143870}}}} & \mc{1}{|r}{\cellcolor{lightgray!50}{0.16}} & \mc{1}{r}{\textbf{\cellcolor{lightgray!50}{\numprint{143870}}}} \\

  \mc{1}{l}{\cellcolor{lightgray!50}{\texttt{rhode-island-AM2}}} & \mc{1}{|r}{\cellcolor{lightgray!50}{7.75}} & \mc{1}{r}{\cellcolor{lightgray!50}{\numprint{184537}}} & \mc{1}{|r}{\cellcolor{lightgray!50}{13.90}} & \mc{1}{r}{\cellcolor{lightgray!50}{\numprint{184576}}} & \mc{1}{|r}{\cellcolor{lightgray!50}{0.24}} & \mc{1}{r}{\textbf{\cellcolor{lightgray!50}{\numprint{184596}}}} & \mc{1}{|r}{\cellcolor{lightgray!50}{0.41}} & \mc{1}{r}{\textbf{\cellcolor{lightgray!50}{\numprint{184596}}}} & \mc{1}{|r}{\cellcolor{lightgray!50}{4.37}} & \mc{1}{r}{\textbf{\cellcolor{lightgray!50}{\numprint{184596}}}} & \mc{1}{|r}{\cellcolor{lightgray!50}{0.27}} & \mc{1}{r}{\textbf{\cellcolor{lightgray!50}{\numprint{184596}}}} \\

  \mc{1}{l}{\texttt{rhode-island-AM3}} & \mc{1}{|r}{230.53} & \mc{1}{r}{\numprint{201470}} & \mc{1}{|r}{711.97} & \mc{1}{r}{\numprint{201359}} & \mc{1}{|r}{30.15} & \mc{1}{r}{\textbf{\numprint{201758}}} & \mc{1}{|r}{44.88} & \mc{1}{r}{\numprint{167162}} & \mc{1}{|r}{82.02} & \mc{1}{r}{\numprint{167162}} & \mc{1}{|r}{45.46} & \mc{1}{r}{\numprint{166103}} \\

  \mc{1}{l}{\cellcolor{lightgray!50}{\texttt{utah-AM3}}} & \mc{1}{|r}{\cellcolor{lightgray!50}{215.88}} & \mc{1}{r}{\cellcolor{lightgray!50}{\numprint{98802}}} & \mc{1}{|r}{\cellcolor{lightgray!50}{136.90}} & \mc{1}{r}{\textbf{\cellcolor{lightgray!50}{\numprint{98847}}}} & \mc{1}{|r}{\cellcolor{lightgray!50}{0.07}} & \mc{1}{r}{\textbf{\cellcolor{lightgray!50}{\numprint{98847}}}} & \mc{1}{|r}{\cellcolor{lightgray!50}{0.09}} & \mc{1}{r}{\textbf{\cellcolor{lightgray!50}{\numprint{98847}}}} & \mc{1}{|r}{\cellcolor{lightgray!50}{0.27}} & \mc{1}{r}{\textbf{\cellcolor{lightgray!50}{\numprint{98847}}}} & \mc{1}{|r}{\cellcolor{lightgray!50}{0.44}} & \mc{1}{r}{\textbf{\cellcolor{lightgray!50}{\numprint{98847}}}} \\

  \mc{1}{l}{\texttt{vermont-AM3}} & \mc{1}{|r}{28.77} & \mc{1}{r}{\numprint{63234}} & \mc{1}{|r}{768.43} & \mc{1}{r}{\numprint{63248}} & \mc{1}{|r}{979.14} & \mc{1}{r}{\numprint{63310}} & \mc{1}{|r}{145.39} & \mc{1}{r}{\textbf{\numprint{63312}}} & \mc{1}{|r}{448.54} & \mc{1}{r}{\textbf{\numprint{63312}}} & \mc{1}{|r}{217.67} & \mc{1}{r}{\textbf{\numprint{63312}}} \\

  \mc{1}{l}{\cellcolor{lightgray!50}{\texttt{virginia-AM2}}} & \mc{1}{|r}{\cellcolor{lightgray!50}{0.53}} & \mc{1}{r}{\cellcolor{lightgray!50}{\numprint{295758}}} & \mc{1}{|r}{\cellcolor{lightgray!50}{20.50}} & \mc{1}{r}{\cellcolor{lightgray!50}{\numprint{295638}}} & \mc{1}{|r}{\cellcolor{lightgray!50}{0.07}} & \mc{1}{r}{\textbf{\cellcolor{lightgray!50}{\numprint{295867}}}} & \mc{1}{|r}{\cellcolor{lightgray!50}{0.02}} & \mc{1}{r}{\textbf{\cellcolor{lightgray!50}{\numprint{295867}}}} & \mc{1}{|r}{\cellcolor{lightgray!50}{0.02}} & \mc{1}{r}{\textbf{\cellcolor{lightgray!50}{\numprint{295867}}}} & \mc{1}{|r}{\cellcolor{lightgray!50}{0.02}} & \mc{1}{r}{\textbf{\cellcolor{lightgray!50}{\numprint{295867}}}} \\

  \mc{1}{l}{\texttt{virginia-AM3}} & \mc{1}{|r}{754.86} & \mc{1}{r}{\numprint{307782}} & \mc{1}{|r}{809.24} & \mc{1}{r}{\numprint{307907}} & \mc{1}{|r}{2.52} & \mc{1}{r}{\textbf{\numprint{308305}}} & \mc{1}{|r}{34.42} & \mc{1}{r}{\textbf{\numprint{308305}}} & \mc{1}{|r}{200.13} & \mc{1}{r}{\textbf{\numprint{308305}}} & \mc{1}{|r}{49.42} & \mc{1}{r}{\textbf{\numprint{308305}}} \\

  \mc{1}{l}{\cellcolor{lightgray!50}{\texttt{washington-AM2}}} & \mc{1}{|r}{\cellcolor{lightgray!50}{1.24}} & \mc{1}{r}{\textbf{\cellcolor{lightgray!50}{\numprint{305619}}}} & \mc{1}{|r}{\cellcolor{lightgray!50}{13.35}} & \mc{1}{r}{\textbf{\cellcolor{lightgray!50}{\numprint{305619}}}} & \mc{1}{|r}{\cellcolor{lightgray!50}{0.25}} & \mc{1}{r}{\textbf{\cellcolor{lightgray!50}{\numprint{305619}}}} & \mc{1}{|r}{\cellcolor{lightgray!50}{0.06}} & \mc{1}{r}{\textbf{\cellcolor{lightgray!50}{\numprint{305619}}}} & \mc{1}{|r}{\cellcolor{lightgray!50}{0.07}} & \mc{1}{r}{\textbf{\cellcolor{lightgray!50}{\numprint{305619}}}} & \mc{1}{|r}{\cellcolor{lightgray!50}{0.08}} & \mc{1}{r}{\textbf{\cellcolor{lightgray!50}{\numprint{305619}}}} \\

  \mc{1}{l}{\texttt{washington-AM3}} & \mc{1}{|r}{37.94} & \mc{1}{r}{\numprint{313689}} & \mc{1}{|r}{383.62} & \mc{1}{r}{\numprint{313844}} & \mc{1}{|r}{10.17} & \mc{1}{r}{\textbf{\numprint{314288}}} & \mc{1}{|r}{3.60} & \mc{1}{r}{\numprint{284684}} & \mc{1}{|r}{72.84} & \mc{1}{r}{\numprint{288116}} & \mc{1}{|r}{4.56} & \mc{1}{r}{\numprint{282020}} \\

  \mc{1}{l}{\cellcolor{lightgray!50}{\texttt{west-virginia-AM3}}} & \mc{1}{|r}{\cellcolor{lightgray!50}{2.75}} & \mc{1}{r}{\textbf{\cellcolor{lightgray!50}{\numprint{47927}}}} & \mc{1}{|r}{\cellcolor{lightgray!50}{2.84}} & \mc{1}{r}{\textbf{\cellcolor{lightgray!50}{\numprint{47927}}}} & \mc{1}{|r}{\cellcolor{lightgray!50}{0.07}} & \mc{1}{r}{\textbf{\cellcolor{lightgray!50}{\numprint{47927}}}} & \mc{1}{|r}{\cellcolor{lightgray!50}{2.88}} & \mc{1}{r}{\textbf{\cellcolor{lightgray!50}{\numprint{47927}}}} & \mc{1}{|r}{\cellcolor{lightgray!50}{41.73}} & \mc{1}{r}{\textbf{\cellcolor{lightgray!50}{\numprint{47927}}}} & \mc{1}{|r}{\cellcolor{lightgray!50}{2.60}} & \mc{1}{r}{\textbf{\cellcolor{lightgray!50}{\numprint{47927}}}} \\

\end{tabu}
\end{ThreePartTable}

\caption{Best solution found by each algorithm on FE and OSM instances and time (in seconds) required to compute it. The global best solution is highlighted in bold. Rows are highlighted in gray if one of our exact solvers is able to solve the corresponding instances.}
\label{best_weight_table_fe}
\end{table}

\begin{table*}[h]
\centering
\scriptsize
\begin{ThreePartTable}
\begin{tabu}{ccccccccccccc}

  \mc{1}{l}{Graph} & \mc{1}{|r}{$t_{max}$} & \mc{1}{r}{$w_{max}$} & \mc{1}{|r}{$t_{max}$} & \mc{1}{r}{$w_{max}$} & \mc{1}{|r}{$t_{max}$} & \mc{1}{r}{$w_{max}$} & \mc{1}{|r}{$t_{max}$} & \mc{1}{r}{$w_{max}$} & \mc{1}{|r}{$t_{max}$} & \mc{1}{r}{$w_{max}$} & \mc{1}{|r}{$t_{max}$} & \mc{1}{r}{$w_{max}$} \\
  \hline  

  \mc{1}{l}{mesh instances} & \mc{2}{|c}{DynWVC1} & \mc{2}{|c}{DynWVC2} & \mc{2}{|c}{HILS} & \mc{2}{|c}{Cyclic-Fast} & \mc{2}{|c}{Cyclic-Strong} & \mc{2}{|c}{Non-Increasing} \\
  \hline  

  \mc{1}{l}{\cellcolor{lightgray!50}{\texttt{beethoven}}} & \mc{1}{|r}{\cellcolor{lightgray!50}{8.86}} & \mc{1}{r}{\cellcolor{lightgray!50}{\numprint{238726}}} & \mc{1}{|r}{\cellcolor{lightgray!50}{8.79}} & \mc{1}{r}{\cellcolor{lightgray!50}{\numprint{238726}}} & \mc{1}{|r}{\cellcolor{lightgray!50}{462.31}} & \mc{1}{r}{\cellcolor{lightgray!50}{\numprint{238746}}} & \mc{1}{|r}{\cellcolor{lightgray!50}{0.00}} & \mc{1}{r}{\textbf{\cellcolor{lightgray!50}{\numprint{238794}}}} & \mc{1}{|r}{\cellcolor{lightgray!50}{0.00}} & \mc{1}{r}{\textbf{\cellcolor{lightgray!50}{\numprint{238794}}}} & \mc{1}{|r}{\cellcolor{lightgray!50}{0.00}} & \mc{1}{r}{\textbf{\cellcolor{lightgray!50}{\numprint{238794}}}} \\

  \mc{1}{l}{\cellcolor{lightgray!50}{\texttt{blob}}} & \mc{1}{|r}{\cellcolor{lightgray!50}{39.91}} & \mc{1}{r}{\cellcolor{lightgray!50}{\numprint{854843}}} & \mc{1}{|r}{\cellcolor{lightgray!50}{40.00}} & \mc{1}{r}{\cellcolor{lightgray!50}{\numprint{854843}}} & \mc{1}{|r}{\cellcolor{lightgray!50}{351.91}} & \mc{1}{r}{\cellcolor{lightgray!50}{\numprint{855004}}} & \mc{1}{|r}{\cellcolor{lightgray!50}{0.02}} & \mc{1}{r}{\textbf{\cellcolor{lightgray!50}{\numprint{855547}}}} & \mc{1}{|r}{\cellcolor{lightgray!50}{0.02}} & \mc{1}{r}{\textbf{\cellcolor{lightgray!50}{\numprint{855547}}}} & \mc{1}{|r}{\cellcolor{lightgray!50}{0.02}} & \mc{1}{r}{\textbf{\cellcolor{lightgray!50}{\numprint{855547}}}} \\

  \mc{1}{l}{\cellcolor{lightgray!50}{\texttt{buddha}}} & \mc{1}{|r}{\cellcolor{lightgray!50}{879.42}} & \mc{1}{r}{\cellcolor{lightgray!50}{\numprint{56757052}}} & \mc{1}{|r}{\cellcolor{lightgray!50}{797.35}} & \mc{1}{r}{\cellcolor{lightgray!50}{\numprint{56757052}}} & \mc{1}{|r}{\cellcolor{lightgray!50}{999.94}} & \mc{1}{r}{\cellcolor{lightgray!50}{\numprint{55490134}}} & \mc{1}{|r}{\cellcolor{lightgray!50}{1.75}} & \mc{1}{r}{\textbf{\cellcolor{lightgray!50}{\numprint{57555880}}}} & \mc{1}{|r}{\cellcolor{lightgray!50}{1.77}} & \mc{1}{r}{\textbf{\cellcolor{lightgray!50}{\numprint{57555880}}}} & \mc{1}{|r}{\cellcolor{lightgray!50}{2.24}} & \mc{1}{r}{\textbf{\cellcolor{lightgray!50}{\numprint{57555880}}}} \\

  \mc{1}{l}{\cellcolor{lightgray!50}{\texttt{bunny}}} & \mc{1}{|r}{\cellcolor{lightgray!50}{702.13}} & \mc{1}{r}{\cellcolor{lightgray!50}{\numprint{3683000}}} & \mc{1}{|r}{\cellcolor{lightgray!50}{695.55}} & \mc{1}{r}{\cellcolor{lightgray!50}{\numprint{3683000}}} & \mc{1}{|r}{\cellcolor{lightgray!50}{964.60}} & \mc{1}{r}{\cellcolor{lightgray!50}{\numprint{3681696}}} & \mc{1}{|r}{\cellcolor{lightgray!50}{0.11}} & \mc{1}{r}{\textbf{\cellcolor{lightgray!50}{\numprint{3686960}}}} & \mc{1}{|r}{\cellcolor{lightgray!50}{0.13}} & \mc{1}{r}{\textbf{\cellcolor{lightgray!50}{\numprint{3686960}}}} & \mc{1}{|r}{\cellcolor{lightgray!50}{0.11}} & \mc{1}{r}{\textbf{\cellcolor{lightgray!50}{\numprint{3686960}}}} \\

  \mc{1}{l}{\cellcolor{lightgray!50}{\texttt{cow}}} & \mc{1}{|r}{\cellcolor{lightgray!50}{62.04}} & \mc{1}{r}{\cellcolor{lightgray!50}{\numprint{269340}}} & \mc{1}{|r}{\cellcolor{lightgray!50}{61.40}} & \mc{1}{r}{\cellcolor{lightgray!50}{\numprint{269340}}} & \mc{1}{|r}{\cellcolor{lightgray!50}{935.58}} & \mc{1}{r}{\cellcolor{lightgray!50}{\numprint{269464}}} & \mc{1}{|r}{\cellcolor{lightgray!50}{0.01}} & \mc{1}{r}{\textbf{\cellcolor{lightgray!50}{\numprint{269543}}}} & \mc{1}{|r}{\cellcolor{lightgray!50}{0.01}} & \mc{1}{r}{\textbf{\cellcolor{lightgray!50}{\numprint{269543}}}} & \mc{1}{|r}{\cellcolor{lightgray!50}{0.01}} & \mc{1}{r}{\textbf{\cellcolor{lightgray!50}{\numprint{269543}}}} \\

  \mc{1}{l}{\cellcolor{lightgray!50}{\texttt{dragon}}} & \mc{1}{|r}{\cellcolor{lightgray!50}{970.34}} & \mc{1}{r}{\cellcolor{lightgray!50}{\numprint{7943911}}} & \mc{1}{|r}{\cellcolor{lightgray!50}{981.51}} & \mc{1}{r}{\cellcolor{lightgray!50}{\numprint{7944042}}} & \mc{1}{|r}{\cellcolor{lightgray!50}{996.01}} & \mc{1}{r}{\cellcolor{lightgray!50}{\numprint{7940422}}} & \mc{1}{|r}{\cellcolor{lightgray!50}{0.21}} & \mc{1}{r}{\textbf{\cellcolor{lightgray!50}{\numprint{7956530}}}} & \mc{1}{|r}{\cellcolor{lightgray!50}{0.22}} & \mc{1}{r}{\textbf{\cellcolor{lightgray!50}{\numprint{7956530}}}} & \mc{1}{|r}{\cellcolor{lightgray!50}{0.22}} & \mc{1}{r}{\textbf{\cellcolor{lightgray!50}{\numprint{7956530}}}} \\

  \mc{1}{l}{\cellcolor{lightgray!50}{\texttt{dragonsub}}} & \mc{1}{|r}{\cellcolor{lightgray!50}{323.07}} & \mc{1}{r}{\cellcolor{lightgray!50}{\numprint{31762035}}} & \mc{1}{|r}{\cellcolor{lightgray!50}{379.11}} & \mc{1}{r}{\cellcolor{lightgray!50}{\numprint{31762035}}} & \mc{1}{|r}{\cellcolor{lightgray!50}{999.54}} & \mc{1}{r}{\cellcolor{lightgray!50}{\numprint{31304363}}} & \mc{1}{|r}{\cellcolor{lightgray!50}{1.10}} & \mc{1}{r}{\textbf{\cellcolor{lightgray!50}{\numprint{32213898}}}} & \mc{1}{|r}{\cellcolor{lightgray!50}{1.11}} & \mc{1}{r}{\textbf{\cellcolor{lightgray!50}{\numprint{32213898}}}} & \mc{1}{|r}{\cellcolor{lightgray!50}{1.88}} & \mc{1}{r}{\textbf{\cellcolor{lightgray!50}{\numprint{32213898}}}} \\

  \mc{1}{l}{\cellcolor{lightgray!50}{\texttt{ecat}}} & \mc{1}{|r}{\cellcolor{lightgray!50}{565.03}} & \mc{1}{r}{\cellcolor{lightgray!50}{\numprint{36129804}}} & \mc{1}{|r}{\cellcolor{lightgray!50}{542.87}} & \mc{1}{r}{\cellcolor{lightgray!50}{\numprint{36129804}}} & \mc{1}{|r}{\cellcolor{lightgray!50}{999.91}} & \mc{1}{r}{\cellcolor{lightgray!50}{\numprint{35512644}}} & \mc{1}{|r}{\cellcolor{lightgray!50}{2.19}} & \mc{1}{r}{\textbf{\cellcolor{lightgray!50}{\numprint{36650298}}}} & \mc{1}{|r}{\cellcolor{lightgray!50}{2.29}} & \mc{1}{r}{\textbf{\cellcolor{lightgray!50}{\numprint{36650298}}}} & \mc{1}{|r}{\cellcolor{lightgray!50}{2.44}} & \mc{1}{r}{\textbf{\cellcolor{lightgray!50}{\numprint{36650298}}}} \\

  \mc{1}{l}{\cellcolor{lightgray!50}{\texttt{face}}} & \mc{1}{|r}{\cellcolor{lightgray!50}{87.05}} & \mc{1}{r}{\cellcolor{lightgray!50}{\numprint{1218510}}} & \mc{1}{|r}{\cellcolor{lightgray!50}{86.38}} & \mc{1}{r}{\cellcolor{lightgray!50}{\numprint{1218510}}} & \mc{1}{|r}{\cellcolor{lightgray!50}{228.77}} & \mc{1}{r}{\cellcolor{lightgray!50}{\numprint{1218565}}} & \mc{1}{|r}{\cellcolor{lightgray!50}{0.03}} & \mc{1}{r}{\textbf{\cellcolor{lightgray!50}{\numprint{1219418}}}} & \mc{1}{|r}{\cellcolor{lightgray!50}{0.03}} & \mc{1}{r}{\textbf{\cellcolor{lightgray!50}{\numprint{1219418}}}} & \mc{1}{|r}{\cellcolor{lightgray!50}{0.03}} & \mc{1}{r}{\textbf{\cellcolor{lightgray!50}{\numprint{1219418}}}} \\

  \mc{1}{l}{\cellcolor{lightgray!50}{\texttt{fandisk}}} & \mc{1}{|r}{\cellcolor{lightgray!50}{8.26}} & \mc{1}{r}{\cellcolor{lightgray!50}{\numprint{462950}}} & \mc{1}{|r}{\cellcolor{lightgray!50}{8.42}} & \mc{1}{r}{\cellcolor{lightgray!50}{\numprint{462950}}} & \mc{1}{|r}{\cellcolor{lightgray!50}{232.96}} & \mc{1}{r}{\cellcolor{lightgray!50}{\numprint{463090}}} & \mc{1}{|r}{\cellcolor{lightgray!50}{0.01}} & \mc{1}{r}{\textbf{\cellcolor{lightgray!50}{\numprint{463288}}}} & \mc{1}{|r}{\cellcolor{lightgray!50}{0.01}} & \mc{1}{r}{\textbf{\cellcolor{lightgray!50}{\numprint{463288}}}} & \mc{1}{|r}{\cellcolor{lightgray!50}{0.01}} & \mc{1}{r}{\textbf{\cellcolor{lightgray!50}{\numprint{463288}}}} \\

  \mc{1}{l}{\cellcolor{lightgray!50}{\texttt{feline}}} & \mc{1}{|r}{\cellcolor{lightgray!50}{730.80}} & \mc{1}{r}{\cellcolor{lightgray!50}{\numprint{2204925}}} & \mc{1}{|r}{\cellcolor{lightgray!50}{734.34}} & \mc{1}{r}{\cellcolor{lightgray!50}{\numprint{2204925}}} & \mc{1}{|r}{\cellcolor{lightgray!50}{640.98}} & \mc{1}{r}{\cellcolor{lightgray!50}{\numprint{2204911}}} & \mc{1}{|r}{\cellcolor{lightgray!50}{0.09}} & \mc{1}{r}{\textbf{\cellcolor{lightgray!50}{\numprint{2207219}}}} & \mc{1}{|r}{\cellcolor{lightgray!50}{0.08}} & \mc{1}{r}{\textbf{\cellcolor{lightgray!50}{\numprint{2207219}}}} & \mc{1}{|r}{\cellcolor{lightgray!50}{0.09}} & \mc{1}{r}{\textbf{\cellcolor{lightgray!50}{\numprint{2207219}}}} \\

  \mc{1}{l}{\cellcolor{lightgray!50}{\texttt{gameguy}}} & \mc{1}{|r}{\cellcolor{lightgray!50}{519.12}} & \mc{1}{r}{\cellcolor{lightgray!50}{\numprint{2323941}}} & \mc{1}{|r}{\cellcolor{lightgray!50}{525.93}} & \mc{1}{r}{\cellcolor{lightgray!50}{\numprint{2323941}}} & \mc{1}{|r}{\cellcolor{lightgray!50}{736.64}} & \mc{1}{r}{\cellcolor{lightgray!50}{\numprint{2322824}}} & \mc{1}{|r}{\cellcolor{lightgray!50}{0.05}} & \mc{1}{r}{\textbf{\cellcolor{lightgray!50}{\numprint{2325878}}}} & \mc{1}{|r}{\cellcolor{lightgray!50}{0.05}} & \mc{1}{r}{\textbf{\cellcolor{lightgray!50}{\numprint{2325878}}}} & \mc{1}{|r}{\cellcolor{lightgray!50}{0.05}} & \mc{1}{r}{\textbf{\cellcolor{lightgray!50}{\numprint{2325878}}}} \\

  \mc{1}{l}{\cellcolor{lightgray!50}{\texttt{gargoyle}}} & \mc{1}{|r}{\cellcolor{lightgray!50}{29.25}} & \mc{1}{r}{\cellcolor{lightgray!50}{\numprint{1058496}}} & \mc{1}{|r}{\cellcolor{lightgray!50}{29.11}} & \mc{1}{r}{\cellcolor{lightgray!50}{\numprint{1058496}}} & \mc{1}{|r}{\cellcolor{lightgray!50}{724.41}} & \mc{1}{r}{\cellcolor{lightgray!50}{\numprint{1058652}}} & \mc{1}{|r}{\cellcolor{lightgray!50}{0.03}} & \mc{1}{r}{\textbf{\cellcolor{lightgray!50}{\numprint{1059559}}}} & \mc{1}{|r}{\cellcolor{lightgray!50}{0.03}} & \mc{1}{r}{\textbf{\cellcolor{lightgray!50}{\numprint{1059559}}}} & \mc{1}{|r}{\cellcolor{lightgray!50}{0.03}} & \mc{1}{r}{\textbf{\cellcolor{lightgray!50}{\numprint{1059559}}}} \\

  \mc{1}{l}{\cellcolor{lightgray!50}{\texttt{turtle}}} & \mc{1}{|r}{\cellcolor{lightgray!50}{982.00}} & \mc{1}{r}{\cellcolor{lightgray!50}{\numprint{14215429}}} & \mc{1}{|r}{\cellcolor{lightgray!50}{976.57}} & \mc{1}{r}{\cellcolor{lightgray!50}{\numprint{14213516}}} & \mc{1}{|r}{\cellcolor{lightgray!50}{999.68}} & \mc{1}{r}{\cellcolor{lightgray!50}{\numprint{14151616}}} & \mc{1}{|r}{\cellcolor{lightgray!50}{0.42}} & \mc{1}{r}{\textbf{\cellcolor{lightgray!50}{\numprint{14263005}}}} & \mc{1}{|r}{\cellcolor{lightgray!50}{0.43}} & \mc{1}{r}{\textbf{\cellcolor{lightgray!50}{\numprint{14263005}}}} & \mc{1}{|r}{\cellcolor{lightgray!50}{0.56}} & \mc{1}{r}{\textbf{\cellcolor{lightgray!50}{\numprint{14263005}}}} \\

  \mc{1}{l}{\cellcolor{lightgray!50}{\texttt{venus}}} & \mc{1}{|r}{\cellcolor{lightgray!50}{559.29}} & \mc{1}{r}{\cellcolor{lightgray!50}{\numprint{305571}}} & \mc{1}{|r}{\cellcolor{lightgray!50}{556.38}} & \mc{1}{r}{\cellcolor{lightgray!50}{\numprint{305571}}} & \mc{1}{|r}{\cellcolor{lightgray!50}{130.83}} & \mc{1}{r}{\cellcolor{lightgray!50}{\numprint{305724}}} & \mc{1}{|r}{\cellcolor{lightgray!50}{0.01}} & \mc{1}{r}{\textbf{\cellcolor{lightgray!50}{\numprint{305749}}}} & \mc{1}{|r}{\cellcolor{lightgray!50}{0.01}} & \mc{1}{r}{\textbf{\cellcolor{lightgray!50}{\numprint{305749}}}} & \mc{1}{|r}{\cellcolor{lightgray!50}{0.01}} & \mc{1}{r}{\textbf{\cellcolor{lightgray!50}{\numprint{305749}}}} \\

  \hline \hline  

  \mc{1}{l}{SNAP instances} & \mc{2}{|c}{DynWVC1} & \mc{2}{|c}{DynWVC2} & \mc{2}{|c}{HILS} & \mc{2}{|c}{Cyclic-Fast} & \mc{2}{|c}{Cyclic-Strong} & \mc{2}{|c}{Non-Increasing} \\
  \hline  

  \mc{1}{l}{\texttt{as-skitter}} & \mc{1}{|r}{989.05} & \mc{1}{r}{\numprint{123613404}} & \mc{1}{|r}{383.97} & \mc{1}{r}{\numprint{123273938}} & \mc{1}{|r}{999.32} & \mc{1}{r}{\numprint{122658804}} & \mc{1}{|r}{346.69} & \mc{1}{r}{\numprint{124137148}} & \mc{1}{|r}{354.71} & \mc{1}{r}{\textbf{\numprint{124137365}}} & \mc{1}{|r}{431.90} & \mc{1}{r}{\numprint{124136621}} \\

  \mc{1}{l}{\cellcolor{lightgray!50}{\texttt{ca-AstroPh}}} & \mc{1}{|r}{\cellcolor{lightgray!50}{32.46}} & \mc{1}{r}{\cellcolor{lightgray!50}{\numprint{797475}}} & \mc{1}{|r}{\cellcolor{lightgray!50}{125.05}} & \mc{1}{r}{\cellcolor{lightgray!50}{\numprint{797480}}} & \mc{1}{|r}{\cellcolor{lightgray!50}{13.47}} & \mc{1}{r}{\textbf{\cellcolor{lightgray!50}{\numprint{797510}}}} & \mc{1}{|r}{\cellcolor{lightgray!50}{0.02}} & \mc{1}{r}{\textbf{\cellcolor{lightgray!50}{\numprint{797510}}}} & \mc{1}{|r}{\cellcolor{lightgray!50}{0.02}} & \mc{1}{r}{\textbf{\cellcolor{lightgray!50}{\numprint{797510}}}} & \mc{1}{|r}{\cellcolor{lightgray!50}{0.02}} & \mc{1}{r}{\textbf{\cellcolor{lightgray!50}{\numprint{797510}}}} \\

  \mc{1}{l}{\cellcolor{lightgray!50}{\texttt{ca-CondMat}}} & \mc{1}{|r}{\cellcolor{lightgray!50}{114.85}} & \mc{1}{r}{\cellcolor{lightgray!50}{\numprint{1147814}}} & \mc{1}{|r}{\cellcolor{lightgray!50}{27.75}} & \mc{1}{r}{\cellcolor{lightgray!50}{\numprint{1147845}}} & \mc{1}{|r}{\cellcolor{lightgray!50}{50.90}} & \mc{1}{r}{\textbf{\cellcolor{lightgray!50}{\numprint{1147950}}}} & \mc{1}{|r}{\cellcolor{lightgray!50}{0.01}} & \mc{1}{r}{\textbf{\cellcolor{lightgray!50}{\numprint{1147950}}}} & \mc{1}{|r}{\cellcolor{lightgray!50}{0.01}} & \mc{1}{r}{\textbf{\cellcolor{lightgray!50}{\numprint{1147950}}}} & \mc{1}{|r}{\cellcolor{lightgray!50}{0.01}} & \mc{1}{r}{\textbf{\cellcolor{lightgray!50}{\numprint{1147950}}}}\\

  \mc{1}{l}{\cellcolor{lightgray!50}{\texttt{ca-GrQc}}} & \mc{1}{|r}{\cellcolor{lightgray!50}{4.87}} & \mc{1}{r}{\textbf{\cellcolor{lightgray!50}{\numprint{286489}}}} & \mc{1}{|r}{\cellcolor{lightgray!50}{1.93}} & \mc{1}{r}{\textbf{\cellcolor{lightgray!50}{\numprint{286489}}}} & \mc{1}{|r}{\cellcolor{lightgray!50}{0.34}} & \mc{1}{r}{\textbf{\cellcolor{lightgray!50}{\numprint{286489}}}} & \mc{1}{|r}{\cellcolor{lightgray!50}{0.00}} & \mc{1}{r}{\textbf{\cellcolor{lightgray!50}{\numprint{286489}}}} & \mc{1}{|r}{\cellcolor{lightgray!50}{0.00}} & \mc{1}{r}{\textbf{\cellcolor{lightgray!50}{\numprint{286489}}}} & \mc{1}{|r}{\cellcolor{lightgray!50}{0.00}} & \mc{1}{r}{\textbf{\cellcolor{lightgray!50}{\numprint{286489}}}} \\

  \mc{1}{l}{\cellcolor{lightgray!50}{\texttt{ca-HepPh}}} & \mc{1}{|r}{\cellcolor{lightgray!50}{13.21}} & \mc{1}{r}{\cellcolor{lightgray!50}{\numprint{581014}}} & \mc{1}{|r}{\cellcolor{lightgray!50}{17.34}} & \mc{1}{r}{\cellcolor{lightgray!50}{\numprint{581028}}} & \mc{1}{|r}{\cellcolor{lightgray!50}{7.73}} & \mc{1}{r}{\textbf{\cellcolor{lightgray!50}{\numprint{581039}}}} & \mc{1}{|r}{\cellcolor{lightgray!50}{0.01}} & \mc{1}{r}{\textbf{\cellcolor{lightgray!50}{\numprint{581039}}}} & \mc{1}{|r}{\cellcolor{lightgray!50}{0.01}} & \mc{1}{r}{\textbf{\cellcolor{lightgray!50}{\numprint{581039}}}} & \mc{1}{|r}{\cellcolor{lightgray!50}{0.01}} & \mc{1}{r}{\textbf{\cellcolor{lightgray!50}{\numprint{581039}}}} \\

  \mc{1}{l}{\cellcolor{lightgray!50}{\texttt{ca-HepTh}}} & \mc{1}{|r}{\cellcolor{lightgray!50}{6.57}} & \mc{1}{r}{\cellcolor{lightgray!50}{\numprint{561982}}} & \mc{1}{|r}{\cellcolor{lightgray!50}{5.30}} & \mc{1}{r}{\cellcolor{lightgray!50}{\numprint{561974}}} & \mc{1}{|r}{\cellcolor{lightgray!50}{4.68}} & \mc{1}{r}{\textbf{\cellcolor{lightgray!50}{\numprint{562004}}}} & \mc{1}{|r}{\cellcolor{lightgray!50}{0.00}} & \mc{1}{r}{\textbf{\cellcolor{lightgray!50}{\numprint{562004}}}} & \mc{1}{|r}{\cellcolor{lightgray!50}{0.00}} & \mc{1}{r}{\textbf{\cellcolor{lightgray!50}{\numprint{562004}}}} & \mc{1}{|r}{\cellcolor{lightgray!50}{0.00}} & \mc{1}{r}{\textbf{\cellcolor{lightgray!50}{\numprint{562004}}}} \\

  \mc{1}{l}{\cellcolor{lightgray!50}{\texttt{email-Enron}}} & \mc{1}{|r}{\cellcolor{lightgray!50}{454.49}} & \mc{1}{r}{\cellcolor{lightgray!50}{\numprint{2464887}}} & \mc{1}{|r}{\cellcolor{lightgray!50}{594.93}} & \mc{1}{r}{\cellcolor{lightgray!50}{\numprint{2464890}}} & \mc{1}{|r}{\cellcolor{lightgray!50}{71.07}} & \mc{1}{r}{\cellcolor{lightgray!50}{\numprint{2464922}}} & \mc{1}{|r}{\cellcolor{lightgray!50}{0.02}} & \mc{1}{r}{\textbf{\cellcolor{lightgray!50}{\numprint{2464935}}}} & \mc{1}{|r}{\cellcolor{lightgray!50}{0.03}} & \mc{1}{r}{\textbf{\cellcolor{lightgray!50}{\numprint{2464935}}}} & \mc{1}{|r}{\cellcolor{lightgray!50}{0.02}} & \mc{1}{r}{\textbf{\cellcolor{lightgray!50}{\numprint{2464935}}}} \\

  \mc{1}{l}{\cellcolor{lightgray!50}{\texttt{email-EuAll}}} & \mc{1}{|r}{\cellcolor{lightgray!50}{134.83}} & \mc{1}{r}{\textbf{\cellcolor{lightgray!50}{\numprint{25286322}}}} & \mc{1}{|r}{\cellcolor{lightgray!50}{132.62}} & \mc{1}{r}{\textbf{\cellcolor{lightgray!50}{\numprint{25286322}}}} & \mc{1}{|r}{\cellcolor{lightgray!50}{338.14}} & \mc{1}{r}{\textbf{\cellcolor{lightgray!50}{\numprint{25286322}}}} & \mc{1}{|r}{\cellcolor{lightgray!50}{0.07}} & \mc{1}{r}{\textbf{\cellcolor{lightgray!50}{\numprint{25286322}}}} & \mc{1}{|r}{\cellcolor{lightgray!50}{0.07}} & \mc{1}{r}{\textbf{\cellcolor{lightgray!50}{\numprint{25286322}}}} & \mc{1}{|r}{\cellcolor{lightgray!50}{0.06}} & \mc{1}{r}{\textbf{\cellcolor{lightgray!50}{\numprint{25286322}}}} \\

  \mc{1}{l}{\cellcolor{lightgray!50}{\texttt{p2p-Gnutella04}}} & \mc{1}{|r}{\cellcolor{lightgray!50}{1.46}} & \mc{1}{r}{\cellcolor{lightgray!50}{\numprint{679105}}} & \mc{1}{|r}{\cellcolor{lightgray!50}{2.34}} & \mc{1}{r}{\textbf{\cellcolor{lightgray!50}{\numprint{679111}}}} & \mc{1}{|r}{\cellcolor{lightgray!50}{94.12}} & \mc{1}{r}{\textbf{\cellcolor{lightgray!50}{\numprint{679111}}}} & \mc{1}{|r}{\cellcolor{lightgray!50}{0.01}} & \mc{1}{r}{\textbf{\cellcolor{lightgray!50}{\numprint{679111}}}} & \mc{1}{|r}{\cellcolor{lightgray!50}{0.01}} & \mc{1}{r}{\textbf{\cellcolor{lightgray!50}{\numprint{679111}}}} & \mc{1}{|r}{\cellcolor{lightgray!50}{0.01}} & \mc{1}{r}{\textbf{\cellcolor{lightgray!50}{\numprint{679111}}}} \\

  \mc{1}{l}{\cellcolor{lightgray!50}{\texttt{p2p-Gnutella05}}} & \mc{1}{|r}{\cellcolor{lightgray!50}{1.15}} & \mc{1}{r}{\cellcolor{lightgray!50}{\numprint{554926}}} & \mc{1}{|r}{\cellcolor{lightgray!50}{3.55}} & \mc{1}{r}{\cellcolor{lightgray!50}{\numprint{554931}}} & \mc{1}{|r}{\cellcolor{lightgray!50}{135.17}} & \mc{1}{r}{\textbf{\cellcolor{lightgray!50}{\numprint{554943}}}} & \mc{1}{|r}{\cellcolor{lightgray!50}{0.01}} & \mc{1}{r}{\textbf{\cellcolor{lightgray!50}{\numprint{554943}}}} & \mc{1}{|r}{\cellcolor{lightgray!50}{0.01}} & \mc{1}{r}{\textbf{\cellcolor{lightgray!50}{\numprint{554943}}}} & \mc{1}{|r}{\cellcolor{lightgray!50}{0.01}} & \mc{1}{r}{\textbf{\cellcolor{lightgray!50}{\numprint{554943}}}} \\

  \mc{1}{l}{\cellcolor{lightgray!50}{\texttt{p2p-Gnutella06}}} & \mc{1}{|r}{\cellcolor{lightgray!50}{525.35}} & \mc{1}{r}{\cellcolor{lightgray!50}{\numprint{548611}}} & \mc{1}{|r}{\cellcolor{lightgray!50}{186.97}} & \mc{1}{r}{\cellcolor{lightgray!50}{\numprint{548611}}} & \mc{1}{|r}{\cellcolor{lightgray!50}{1.29}} & \mc{1}{r}{\textbf{\cellcolor{lightgray!50}{\numprint{548612}}}} & \mc{1}{|r}{\cellcolor{lightgray!50}{0.01}} & \mc{1}{r}{\textbf{\cellcolor{lightgray!50}{\numprint{548612}}}} & \mc{1}{|r}{\cellcolor{lightgray!50}{0.01}} & \mc{1}{r}{\textbf{\cellcolor{lightgray!50}{\numprint{548612}}}} & \mc{1}{|r}{\cellcolor{lightgray!50}{0.01}} & \mc{1}{r}{\textbf{\cellcolor{lightgray!50}{\numprint{548612}}}} \\

  \mc{1}{l}{\cellcolor{lightgray!50}{\texttt{p2p-Gnutella08}}} & \mc{1}{|r}{\cellcolor{lightgray!50}{0.15}} & \mc{1}{r}{\cellcolor{lightgray!50}{\numprint{434575}}} & \mc{1}{|r}{\cellcolor{lightgray!50}{0.18}} & \mc{1}{r}{\textbf{\cellcolor{lightgray!50}{\numprint{434577}}}} & \mc{1}{|r}{\cellcolor{lightgray!50}{0.12}} & \mc{1}{r}{\textbf{\cellcolor{lightgray!50}{\numprint{434577}}}} & \mc{1}{|r}{\cellcolor{lightgray!50}{0.00}} & \mc{1}{r}{\textbf{\cellcolor{lightgray!50}{\numprint{434577}}}} & \mc{1}{|r}{\cellcolor{lightgray!50}{0.00}} & \mc{1}{r}{\textbf{\cellcolor{lightgray!50}{\numprint{434577}}}} & \mc{1}{|r}{\cellcolor{lightgray!50}{0.00}} & \mc{1}{r}{\textbf{\cellcolor{lightgray!50}{\numprint{434577}}}} \\

  \mc{1}{l}{\cellcolor{lightgray!50}{\texttt{p2p-Gnutella09}}} & \mc{1}{|r}{\cellcolor{lightgray!50}{0.39}} & \mc{1}{r}{\textbf{\cellcolor{lightgray!50}{\numprint{568439}}}} & \mc{1}{|r}{\cellcolor{lightgray!50}{0.28}} & \mc{1}{r}{\textbf{\cellcolor{lightgray!50}{\numprint{568439}}}} & \mc{1}{|r}{\cellcolor{lightgray!50}{0.09}} & \mc{1}{r}{\textbf{\cellcolor{lightgray!50}{\numprint{568439}}}} & \mc{1}{|r}{\cellcolor{lightgray!50}{0.00}} & \mc{1}{r}{\textbf{\cellcolor{lightgray!50}{\numprint{568439}}}} & \mc{1}{|r}{\cellcolor{lightgray!50}{0.00}} & \mc{1}{r}{\textbf{\cellcolor{lightgray!50}{\numprint{568439}}}} & \mc{1}{|r}{\cellcolor{lightgray!50}{0.00}} & \mc{1}{r}{\textbf{\cellcolor{lightgray!50}{\numprint{568439}}}} \\

  \mc{1}{l}{\cellcolor{lightgray!50}{\texttt{p2p-Gnutella24}}} & \mc{1}{|r}{\cellcolor{lightgray!50}{8.01}} & \mc{1}{r}{\textbf{\cellcolor{lightgray!50}{\numprint{1984567}}}} & \mc{1}{|r}{\cellcolor{lightgray!50}{5.51}} & \mc{1}{r}{\textbf{\cellcolor{lightgray!50}{\numprint{1984567}}}} & \mc{1}{|r}{\cellcolor{lightgray!50}{3.17}} & \mc{1}{r}{\textbf{\cellcolor{lightgray!50}{\numprint{1984567}}}} & \mc{1}{|r}{\cellcolor{lightgray!50}{0.01}} & \mc{1}{r}{\textbf{\cellcolor{lightgray!50}{\numprint{1984567}}}} & \mc{1}{|r}{\cellcolor{lightgray!50}{0.01}} & \mc{1}{r}{\textbf{\cellcolor{lightgray!50}{\numprint{1984567}}}} & \mc{1}{|r}{\cellcolor{lightgray!50}{0.01}} & \mc{1}{r}{\textbf{\cellcolor{lightgray!50}{\numprint{1984567}}}} \\

  \mc{1}{l}{\cellcolor{lightgray!50}{\texttt{p2p-Gnutella25}}} & \mc{1}{|r}{\cellcolor{lightgray!50}{2.66}} & \mc{1}{r}{\textbf{\cellcolor{lightgray!50}{\numprint{1701967}}}} & \mc{1}{|r}{\cellcolor{lightgray!50}{2.20}} & \mc{1}{r}{\textbf{\cellcolor{lightgray!50}{\numprint{1701967}}}} & \mc{1}{|r}{\cellcolor{lightgray!50}{1.17}} & \mc{1}{r}{\textbf{\cellcolor{lightgray!50}{\numprint{1701967}}}} & \mc{1}{|r}{\cellcolor{lightgray!50}{0.01}} & \mc{1}{r}{\textbf{\cellcolor{lightgray!50}{\numprint{1701967}}}} & \mc{1}{|r}{\cellcolor{lightgray!50}{0.01}} & \mc{1}{r}{\textbf{\cellcolor{lightgray!50}{\numprint{1701967}}}} & \mc{1}{|r}{\cellcolor{lightgray!50}{0.01}} & \mc{1}{r}{\textbf{\cellcolor{lightgray!50}{\numprint{1701967}}}} \\

  \mc{1}{l}{\cellcolor{lightgray!50}{\texttt{p2p-Gnutella30}}} & \mc{1}{|r}{\cellcolor{lightgray!50}{8.83}} & \mc{1}{r}{\cellcolor{lightgray!50}{\numprint{2787903}}} & \mc{1}{|r}{\cellcolor{lightgray!50}{132.71}} & \mc{1}{r}{\cellcolor{lightgray!50}{\numprint{2787899}}} & \mc{1}{|r}{\cellcolor{lightgray!50}{15.14}} & \mc{1}{r}{\textbf{\cellcolor{lightgray!50}{\numprint{2787907}}}} & \mc{1}{|r}{\cellcolor{lightgray!50}{0.01}} & \mc{1}{r}{\textbf{\cellcolor{lightgray!50}{\numprint{2787907}}}} & \mc{1}{|r}{\cellcolor{lightgray!50}{0.01}} & \mc{1}{r}{\textbf{\cellcolor{lightgray!50}{\numprint{2787907}}}} & \mc{1}{|r}{\cellcolor{lightgray!50}{0.02}} & \mc{1}{r}{\textbf{\cellcolor{lightgray!50}{\numprint{2787907}}}} \\

  \mc{1}{l}{\cellcolor{lightgray!50}{\texttt{p2p-Gnutella31}}} & \mc{1}{|r}{\cellcolor{lightgray!50}{70.88}} & \mc{1}{r}{\cellcolor{lightgray!50}{\numprint{4776960}}} & \mc{1}{|r}{\cellcolor{lightgray!50}{47.97}} & \mc{1}{r}{\cellcolor{lightgray!50}{\numprint{4776961}}} & \mc{1}{|r}{\cellcolor{lightgray!50}{115.01}} & \mc{1}{r}{\textbf{\cellcolor{lightgray!50}{\numprint{4776986}}}} & \mc{1}{|r}{\cellcolor{lightgray!50}{0.02}} & \mc{1}{r}{\textbf{\cellcolor{lightgray!50}{\numprint{4776986}}}} & \mc{1}{|r}{\cellcolor{lightgray!50}{0.02}} & \mc{1}{r}{\textbf{\cellcolor{lightgray!50}{\numprint{4776986}}}} & \mc{1}{|r}{\cellcolor{lightgray!50}{0.03}} & \mc{1}{r}{\textbf{\cellcolor{lightgray!50}{\numprint{4776986}}}} \\

  \mc{1}{l}{\cellcolor{lightgray!50}{\texttt{roadNet-CA}}} & \mc{1}{|r}{\cellcolor{lightgray!50}{999.98}} & \mc{1}{r}{\cellcolor{lightgray!50}{\numprint{109586054}}} & \mc{1}{|r}{\cellcolor{lightgray!50}{999.90}} & \mc{1}{r}{\cellcolor{lightgray!50}{\numprint{109582579}}} & \mc{1}{|r}{\cellcolor{lightgray!50}{\numprint{1000.00}}} & \mc{1}{r}{\cellcolor{lightgray!50}{\numprint{106584645}}} & \mc{1}{|r}{\cellcolor{lightgray!50}{1.94}} & \mc{1}{r}{\textbf{\cellcolor{lightgray!50}{\numprint{111360828}}}} & \mc{1}{|r}{\cellcolor{lightgray!50}{1.86}} & \mc{1}{r}{\textbf{\cellcolor{lightgray!50}{\numprint{111360828}}}} & \mc{1}{|r}{\cellcolor{lightgray!50}{4.09}} & \mc{1}{r}{\textbf{\cellcolor{lightgray!50}{\numprint{111360828}}}} \\

  \mc{1}{l}{\cellcolor{lightgray!50}{\texttt{roadNet-PA}}} & \mc{1}{|r}{\cellcolor{lightgray!50}{511.59}} & \mc{1}{r}{\cellcolor{lightgray!50}{\numprint{60990177}}} & \mc{1}{|r}{\cellcolor{lightgray!50}{469.18}} & \mc{1}{r}{\cellcolor{lightgray!50}{\numprint{60990177}}} & \mc{1}{|r}{\cellcolor{lightgray!50}{999.94}} & \mc{1}{r}{\cellcolor{lightgray!50}{\numprint{60037011}}} & \mc{1}{|r}{\cellcolor{lightgray!50}{0.96}} & \mc{1}{r}{\textbf{\cellcolor{lightgray!50}{\numprint{61731589}}}} & \mc{1}{|r}{\cellcolor{lightgray!50}{1.04}} & \mc{1}{r}{\textbf{\cellcolor{lightgray!50}{\numprint{61731589}}}} & \mc{1}{|r}{\cellcolor{lightgray!50}{1.83}} & \mc{1}{r}{\textbf{\cellcolor{lightgray!50}{\numprint{61731589}}}} \\

  \mc{1}{l}{\cellcolor{lightgray!50}{\texttt{roadNet-TX}}} & \mc{1}{|r}{\cellcolor{lightgray!50}{789.43}} & \mc{1}{r}{\cellcolor{lightgray!50}{\numprint{77672388}}} & \mc{1}{|r}{\cellcolor{lightgray!50}{694.33}} & \mc{1}{r}{\cellcolor{lightgray!50}{\numprint{77672388}}} & \mc{1}{|r}{\cellcolor{lightgray!50}{999.97}} & \mc{1}{r}{\cellcolor{lightgray!50}{\numprint{76347666}}} & \mc{1}{|r}{\cellcolor{lightgray!50}{1.29}} & \mc{1}{r}{\textbf{\cellcolor{lightgray!50}{\numprint{78599946}}}} & \mc{1}{|r}{\cellcolor{lightgray!50}{1.29}} & \mc{1}{r}{\textbf{\cellcolor{lightgray!50}{\numprint{78599946}}}} & \mc{1}{|r}{\cellcolor{lightgray!50}{3.42}} & \mc{1}{r}{\textbf{\cellcolor{lightgray!50}{\numprint{78599946}}}} \\

  \mc{1}{l}{\cellcolor{lightgray!50}{\texttt{soc-Epinions1}}} & \mc{1}{|r}{\cellcolor{lightgray!50}{290.84}} & \mc{1}{r}{\cellcolor{lightgray!50}{\numprint{5690651}}} & \mc{1}{|r}{\cellcolor{lightgray!50}{272.56}} & \mc{1}{r}{\cellcolor{lightgray!50}{\numprint{5690773}}} & \mc{1}{|r}{\cellcolor{lightgray!50}{253.10}} & \mc{1}{r}{\cellcolor{lightgray!50}{\numprint{5690874}}} & \mc{1}{|r}{\cellcolor{lightgray!50}{0.08}} & \mc{1}{r}{\textbf{\cellcolor{lightgray!50}{\numprint{5690970}}}} & \mc{1}{|r}{\cellcolor{lightgray!50}{0.08}} & \mc{1}{r}{\textbf{\cellcolor{lightgray!50}{\numprint{5690970}}}} & \mc{1}{|r}{\cellcolor{lightgray!50}{0.08}} & \mc{1}{r}{\textbf{\cellcolor{lightgray!50}{\numprint{5690970}}}} \\

  \mc{1}{l}{\texttt{soc-LiveJournal1}} & \mc{1}{|r}{999.99} & \mc{1}{r}{\numprint{279150686}} & \mc{1}{|r}{999.99} & \mc{1}{r}{\numprint{279231875}} & \mc{1}{|r}{\numprint{1000.00}} & \mc{1}{r}{\numprint{255079926}} & \mc{1}{|r}{51.33} & \mc{1}{r}{\numprint{284036222}} & \mc{1}{|r}{44.19} & \mc{1}{r}{\textbf{\numprint{284036239}}} & \mc{1}{|r}{39.36} & \mc{1}{r}{\numprint{283970295}} \\

  \mc{1}{l}{\cellcolor{lightgray!50}{\texttt{soc-Slashdot0811}}} & \mc{1}{|r}{\cellcolor{lightgray!50}{238.18}} & \mc{1}{r}{\cellcolor{lightgray!50}{\numprint{5660385}}} & \mc{1}{|r}{\cellcolor{lightgray!50}{880.68}} & \mc{1}{r}{\cellcolor{lightgray!50}{\numprint{5660555}}} & \mc{1}{|r}{\cellcolor{lightgray!50}{446.95}} & \mc{1}{r}{\cellcolor{lightgray!50}{\numprint{5660787}}} & \mc{1}{|r}{\cellcolor{lightgray!50}{0.09}} & \mc{1}{r}{\textbf{\cellcolor{lightgray!50}{\numprint{5660899}}}} & \mc{1}{|r}{\cellcolor{lightgray!50}{0.08}} & \mc{1}{r}{\textbf{\cellcolor{lightgray!50}{\numprint{5660899}}}} & \mc{1}{|r}{\cellcolor{lightgray!50}{0.08}} & \mc{1}{r}{\textbf{\cellcolor{lightgray!50}{\numprint{5660899}}}} \\

  \mc{1}{l}{\cellcolor{lightgray!50}{\texttt{soc-Slashdot0902}}} & \mc{1}{|r}{\cellcolor{lightgray!50}{270.85}} & \mc{1}{r}{\cellcolor{lightgray!50}{\numprint{5971308}}} & \mc{1}{|r}{\cellcolor{lightgray!50}{435.90}} & \mc{1}{r}{\cellcolor{lightgray!50}{\numprint{5971476}}} & \mc{1}{|r}{\cellcolor{lightgray!50}{604.07}} & \mc{1}{r}{\cellcolor{lightgray!50}{\numprint{5971664}}} & \mc{1}{|r}{\cellcolor{lightgray!50}{0.11}} & \mc{1}{r}{\textbf{\cellcolor{lightgray!50}{\numprint{5971849}}}} & \mc{1}{|r}{\cellcolor{lightgray!50}{0.11}} & \mc{1}{r}{\textbf{\cellcolor{lightgray!50}{\numprint{5971849}}}} & \mc{1}{|r}{\cellcolor{lightgray!50}{0.12}} & \mc{1}{r}{\textbf{\cellcolor{lightgray!50}{\numprint{5971849}}}} \\

  \mc{1}{l}{\texttt{soc-pokec-relationships}} & \mc{1}{|r}{999.85} & \mc{1}{r}{\textbf{\numprint{83223668}}} & \mc{1}{|r}{999.13} & \mc{1}{r}{\numprint{83155217}} & \mc{1}{|r}{\numprint{1000.00}} & \mc{1}{r}{\numprint{82021946}} & \mc{1}{|r}{254.59} & \mc{1}{r}{\numprint{76075111}} & \mc{1}{|r}{488.31} & \mc{1}{r}{\numprint{76075700}} & \mc{1}{|r}{228.07} & \mc{1}{r}{\numprint{76063476}} \\

  \mc{1}{l}{\cellcolor{lightgray!50}{\texttt{web-BerkStan}}} & \mc{1}{|r}{\cellcolor{lightgray!50}{194.20}} & \mc{1}{r}{\cellcolor{lightgray!50}{\numprint{43640833}}} & \mc{1}{|r}{\cellcolor{lightgray!50}{164.10}} & \mc{1}{r}{\cellcolor{lightgray!50}{\numprint{43637382}}} & \mc{1}{|r}{\cellcolor{lightgray!50}{998.73}} & \mc{1}{r}{\cellcolor{lightgray!50}{\numprint{43424373}}} & \mc{1}{|r}{\cellcolor{lightgray!50}{6.74}} & \mc{1}{r}{\textbf{\cellcolor{lightgray!50}{\numprint{43907482}}}} & \mc{1}{|r}{\cellcolor{lightgray!50}{8.05}} & \mc{1}{r}{\textbf{\cellcolor{lightgray!50}{\numprint{43907482}}}} & \mc{1}{|r}{\cellcolor{lightgray!50}{16.01}} & \mc{1}{r}{\textbf{\cellcolor{lightgray!50}{\numprint{43907482}}}} \\

  \mc{1}{l}{\cellcolor{lightgray!50}{\texttt{web-Google}}} & \mc{1}{|r}{\cellcolor{lightgray!50}{349.08}} & \mc{1}{r}{\cellcolor{lightgray!50}{\numprint{56209005}}} & \mc{1}{|r}{\cellcolor{lightgray!50}{324.65}} & \mc{1}{r}{\cellcolor{lightgray!50}{\numprint{56206250}}} & \mc{1}{|r}{\cellcolor{lightgray!50}{995.92}} & \mc{1}{r}{\cellcolor{lightgray!50}{\numprint{56008278}}} & \mc{1}{|r}{\cellcolor{lightgray!50}{1.72}} & \mc{1}{r}{\textbf{\cellcolor{lightgray!50}{\numprint{56326504}}}} & \mc{1}{|r}{\cellcolor{lightgray!50}{6.44}} & \mc{1}{r}{\textbf{\cellcolor{lightgray!50}{\numprint{56326504}}}} & \mc{1}{|r}{\cellcolor{lightgray!50}{2.17}} & \mc{1}{r}{\textbf{\cellcolor{lightgray!50}{\numprint{56326504}}}} \\

  \mc{1}{l}{\cellcolor{lightgray!50}{\texttt{web-NotreDame}}} & \mc{1}{|r}{\cellcolor{lightgray!50}{949.84}} & \mc{1}{r}{\cellcolor{lightgray!50}{\numprint{26010791}}} & \mc{1}{|r}{\cellcolor{lightgray!50}{905.72}} & \mc{1}{r}{\cellcolor{lightgray!50}{\numprint{26009287}}} & \mc{1}{|r}{\cellcolor{lightgray!50}{997.00}} & \mc{1}{r}{\cellcolor{lightgray!50}{\numprint{26002793}}} & \mc{1}{|r}{\cellcolor{lightgray!50}{1.60}} & \mc{1}{r}{\textbf{\cellcolor{lightgray!50}{\numprint{26016941}}}} & \mc{1}{|r}{\cellcolor{lightgray!50}{2.74}} & \mc{1}{r}{\textbf{\cellcolor{lightgray!50}{\numprint{26016941}}}} & \mc{1}{|r}{\cellcolor{lightgray!50}{1.36}} & \mc{1}{r}{\textbf{\cellcolor{lightgray!50}{\numprint{26016941}}}} \\

  \mc{1}{l}{\cellcolor{lightgray!50}{\texttt{web-Stanford}}} & \mc{1}{|r}{\cellcolor{lightgray!50}{943.85}} & \mc{1}{r}{\cellcolor{lightgray!50}{\numprint{17748798}}} & \mc{1}{|r}{\cellcolor{lightgray!50}{671.32}} & \mc{1}{r}{\cellcolor{lightgray!50}{\numprint{17741043}}} & \mc{1}{|r}{\cellcolor{lightgray!50}{999.50}} & \mc{1}{r}{\cellcolor{lightgray!50}{\numprint{17709827}}} & \mc{1}{|r}{\cellcolor{lightgray!50}{1.68}} & \mc{1}{r}{\textbf{\cellcolor{lightgray!50}{\numprint{17792930}}}} & \mc{1}{|r}{\cellcolor{lightgray!50}{1.86}} & \mc{1}{r}{\textbf{\cellcolor{lightgray!50}{\numprint{17792930}}}} & \mc{1}{|r}{\cellcolor{lightgray!50}{1.71}} & \mc{1}{r}{\textbf{\cellcolor{lightgray!50}{\numprint{17792930}}}} \\

  \mc{1}{l}{\cellcolor{lightgray!50}{\texttt{wiki-Talk}}} & \mc{1}{|r}{\cellcolor{lightgray!50}{951.51}} & \mc{1}{r}{\cellcolor{lightgray!50}{\numprint{235836837}}} & \mc{1}{|r}{\cellcolor{lightgray!50}{972.93}} & \mc{1}{r}{\cellcolor{lightgray!50}{\numprint{235836913}}} & \mc{1}{|r}{\cellcolor{lightgray!50}{999.69}} & \mc{1}{r}{\cellcolor{lightgray!50}{\numprint{235818823}}} & \mc{1}{|r}{\cellcolor{lightgray!50}{1.29}} & \mc{1}{r}{\textbf{\cellcolor{lightgray!50}{\numprint{235837346}}}} & \mc{1}{|r}{\cellcolor{lightgray!50}{1.29}} & \mc{1}{r}{\textbf{\cellcolor{lightgray!50}{\numprint{235837346}}}} & \mc{1}{|r}{\cellcolor{lightgray!50}{1.31}} & \mc{1}{r}{\textbf{\cellcolor{lightgray!50}{\numprint{235837346}}}} \\

  \mc{1}{l}{\cellcolor{lightgray!50}{\texttt{wiki-Vote}}} & \mc{1}{|r}{\cellcolor{lightgray!50}{188.76}} & \mc{1}{r}{\cellcolor{lightgray!50}{\numprint{500075}}} & \mc{1}{|r}{\cellcolor{lightgray!50}{0.32}} & \mc{1}{r}{\textbf{\cellcolor{lightgray!50}{\numprint{500079}}}} & \mc{1}{|r}{\cellcolor{lightgray!50}{10.34}} & \mc{1}{r}{\textbf{\cellcolor{lightgray!50}{\numprint{500079}}}} & \mc{1}{|r}{\cellcolor{lightgray!50}{0.02}} & \mc{1}{r}{\textbf{\cellcolor{lightgray!50}{\numprint{500079}}}} & \mc{1}{|r}{\cellcolor{lightgray!50}{0.02}} & \mc{1}{r}{\textbf{\cellcolor{lightgray!50}{\numprint{500079}}}} & \mc{1}{|r}{\cellcolor{lightgray!50}{0.02}} & \mc{1}{r}{\textbf{\cellcolor{lightgray!50}{\numprint{500079}}}} \\

\end{tabu}
\end{ThreePartTable}

\caption{Best solution found by each algorithm on mesh and SNAP instances and time (in seconds) required to compute it. The global best solution is highlighted in bold. Rows are highlighted in gray if one of our exact solvers is able to solve the corresponding instances.}
\label{best_weight_table_mesh}
\end{table*}
\end{landscape}


\end{appendix}
\end{document}